\documentclass[preprint]{elsarticle}
\usepackage{amssymb}
\usepackage{amsmath}
\usepackage{amsthm}
\usepackage{amsfonts}
\usepackage{array}
\usepackage{bm}
\usepackage{booktabs}
\usepackage{cases}
\usepackage{float}
\usepackage[margin=1.0in]{geometry}
\usepackage{hyperref}
\usepackage[capitalize]{cleveref}
\usepackage{mathtools}
\usepackage{multirow}
\usepackage{graphicx}
\usepackage{subcaption}
\usepackage{tabularx}
\usepackage{xcolor}
\usepackage{xspace}

\usepackage[%
  disable,
  colorinlistoftodos]{todonotes}

\usepackage[
  shortcuts,%
  acronym,%
  nonumberlist,%
  nogroupskip]{glossaries}

\usepackage[
  detect-none,%
  binary-units]{siunitx}[=v2]

\newacronym{bdf}{BDF}{backward difference formulas}
\newacronym{bgk}{BGK}{Bhatnagar--Gross--Krook}

\newacronym{cfd}{CFD}{computational fluid dynamics}
\newacronym{cfl}{CFL}{Courant--Friedrichs--Lewy}

\newacronym{dag}{DAG}{directed acyclic graph}
\newacronym{dfs}{DFS}{depth-first-search}
\newacronym{dg}{DG}{discontinuous Galerkin}
\newacronym{dgfem}{DG-FEM}{discontinuous Galerkin finite-element method}
\newacronym{dirk}{DIRK}{diagonally implicit Runge-Kutta}
\newacronym[shortplural={DOFs},longplural={degrees of freedom}]{dof}{DOF}{degree of freedom}

\newacronym{fem}{FEM}{finite-element method}

\newacronym{gcd}{GCD}{Graphics Compute Die}

\newacronym{hpc}{HPC}{high-performance computing}

\newacronym{mi250x}{MI250X}{AMD Instinct 250X}

\newacronym{numa}{NUMA}{non-uniform memory access}

\newacronym{olcf}{OLCF}{Oak Ridge Leadership Computing Facility}
\newacronym{ornl}{ORNL}{Oak Ridge National Laboratory}

\newacronym{rk}{RK}{Runge-Kutta}

\newacronym{sgs}{SGS}{symmetric Gauss-Seidel}

\newacronym{ts}{TS}{Topological Sweep}
\newacronym{tgs}{TGS}{Topological Generation Sweep}

\glsdisablehyper
\setacronymstyle{long-short}
\makeglossaries

\definecolor{CiteColor}{rgb}{0, 0, 0.55}
\definecolor{LinkColor}{rgb}{0.2, 0.2, 0.2}
\definecolor{URLColor}{rgb}{0.62745098, 0.1254902 , 0.94117647}
\hypersetup{
  linkcolor=LinkColor,
  citecolor=CiteColor,
  urlcolor=URLColor,
  colorlinks=true
}

\biboptions{sort&compress,numbers}

\DeclareSIUnit\flops{Flops}

\newtheorem{prop}{Proposition}
\newtheorem{remark}{Remark}
\newtheorem{ass}{Assumption}
\numberwithin{prop}{section}
\numberwithin{remark}{section}
\numberwithin{ass}{section}
\numberwithin{equation}{section}
\numberwithin{table}{section}
\numberwithin{figure}{section}

\newcolumntype{Y}{>{\raggedright\arraybackslash}X}
\newcolumntype{T}[1]{>{\hsize=#1\hsize}X}
\newcolumntype{U}[1]{>{\hsize=#1\hsize}Y}

\newcommand{\amd}{AMD\xspace}
\newcommand{\nvidia}{NVIDIA\xspace}

\newcommand{\ve}[1]{\ensuremath{\mathbf{#1}}}

\newcommand{\grad}[1]{\ensuremath{\nabla #1}}

\newcommand{\mbb}{\mathbf{b}}
\newcommand{\mbc}{\mathbf{c}}

\newcommand{\mbe}{\mathbf{e}}

\newcommand{\mbj}{\mathbf{j}}

\newcommand{\mbA}{\mathbf{A}}

\newcommand{\mbJ}{\mathbf{J}}

\newcommand{\mbU}{\mathbf{U}}

\newcommand{\cE}{\mathcal{E}}
\newcommand{\cF}{\mathcal{F}}
\newcommand{\cG}{\mathcal{G}}

\newcommand{\cM}{\mathcal{M}}

\newcommand{\cP}{\mathcal{P}}
\newcommand{\cQ}{\mathcal{Q}}

\newcommand{\cS}{\mathcal{S}}
\newcommand{\cT}{\mathcal{T}}

\newcommand{\cV}{\mathcal{V}}

\newcommand{\bbM}{\mathbb{M}}

\newcommand{\bbP}{\mathbb{P}}
\newcommand{\bbQ}{\mathbb{Q}}
\newcommand{\bbR}{\mathbb{R}}

\newcommand{\dt}{\Delta t}

\newcommand{\bmu}{{\bm{u}}}
\newcommand{\bmv}{{\bm{v}}}

\newcommand{\bmj}{{\bm{j}}}
\newcommand{\bmx}{{\bm{x}}}

\newcommand{\bmn}{{\bm{n}}}

\newcommand{\avg}[1]{\{\!\!\{#1\}\!\!\}}
\newcommand{\jmp}[1]{[\![#1]\!]}
\newcommand{\ledge}{\big<\!\!\big<}
\newcommand{\redge}{\big>\!\!\big>}
\newcommand{\EIx}{\mathcal{E}_x^{\mathrm{I}}}

\newcommand{\dx}{\mathrm{d}}
\renewcommand{\phi}{\varphi}

\newcommand{\fhj}{f_{h,\mbj}}
\newcommand{\fhs}{f^{(s)}_{h}}
\newcommand{\fhjs}{f^{(s)}_{h,\mbj}}
\newcommand{\fhjr}{f^{(r)}_{h,\mbj}}

\newcommand{\fhjslp}{f^{(s),\ell+1}_{h,\mbj}}
\newcommand{\Fhj}{\cF_{h,\mbj}}
\newcommand{\mhjs}{M^{(s)}_{h,\mbj}}
\newcommand{\mhj}{M_{h,\mbj}}
\newcommand{\nhj}{N_{h,\mbj}}

\newcommand{\qhjs}{q^{(s)}_{h,\mbj}}
\newcommand{\kx}{k_{\bmx}}
\newcommand{\Kx}{K_{\bmx}}
\newcommand{\Tx}{\cT_{\bmx}}
\newcommand{\Wx}{W_{\bmx}}
\newcommand{\Qx}{\cQ_{\bmx}}
\newcommand{\Qex}{\cQ^e_{\bmx}}
\newcommand{\kv}{k_{\bmv}}
\newcommand{\Kv}{K_{\bmv}}
\newcommand{\Tv}{T_{\bmv}}
\newcommand{\Wv}{W_{\bmv}}
\newcommand{\Qv}{\cQ_{\bmv}}
\newcommand{\vj}{\bmv_{\mbj}}
\newcommand{\wj}{\omega^{\bmv}_{\mbj}}
\newcommand{\ej}{\mbe_\mbj}
\newcommand{\intv}{\int_{\mathbb{R}^d}}
\newcommand{\dofx}{\text{dof}_{\bmx}}
\newcommand{\dofv}{\text{dof}_{\bmv}}

\newcommand{\p}{\partial}
\newcommand{\ddd}{, \dots ,}

\newcommand{\fic}{f^{\rm{ic}}}
\newcommand{\nic}{n^{\rm{ic}}}
\newcommand{\uic}{\bm{u}^{\rm{ic}}}
\newcommand{\tic}{\theta^{\rm{ic}}}
\newcommand{\fbc}{f^{\rm{bc}}}

\newcommand{\nl}{n^{\rm{left}}}
\newcommand{\ul}{\bm{u}^{\rm{left}}}
\newcommand{\tl}{\theta^{\rm{left}}}

\begin{document}

\journal{Journal of Computational Physics}

%%---------------------------------------------------------------------------%%
\begin{frontmatter}

    \title{%
        Sweep-based, implicit solutions of the multidimensional BGK
        equation on unstructured grids
        \tnoteref{ornl-cr}}
    \tnotetext[ornl-cr]{%
        This manuscript has been authored by UT-Battelle, LLC, under contract
        DE-AC05-00OR22725 with the US Department of Energy. The United States
        Government retains and the publisher, by accepting the article for
        publication, acknowledges that the United States Government retains a
        nonexclusive, paid-up, irrevocable, worldwide license to publish or
        reproduce the published form of this manuscript, or allow others to do so,
        for United States Government purposes. DOE will provide access to these
        results of federally sponsored research in accordance with the DOE Public
        Access Plan (http://energy.gov/downloads/doe-public-access-plan)
    }

    % \author[ornl]{Eirik Endeve\fnref{mmd}}\ead{endevee@ornl.gov}
    \author[ornl]{Thomas M. Evans\corref{cor1}\fnref{acmes}} \ead{evanstm@ornl.gov}
    \cortext[cor1]{Corresponding Author. Tel: +1 865 576 3535}
    \author[ornl]{Ryan Glasby\fnref{acmes}}\ead{glasbyrs@ornl.gov}
    \author[ornl,utk]{Cory Hauck\fnref{mmd}}\ead{hauckc@ornl.gov}
    \author[ornl]{Stefan R. Schnake\fnref{mmd}}\ead{schnakesr@ornl.gov}
    \author[ornl]{Kyle J. Schwiebert\fnref{mmd}}\ead{schwiebertkj@ornl.gov}
    \author[ornl]{Lawton Shoemake\fnref{acmes}}\ead{shoemakewl@ornl.gov}
    \author[ornl]{Stuart Slattery\fnref{acmes}}\ead{slatterysr@ornl.gov}

    \fntext[acmes]{%
        Computational Sciences and Engineering Division}

    \fntext[mmd]{%
        Computer Science and Mathematics Division
    }

    \affiliation[ornl]{%
        organization={Oak Ridge National Laboratory},
        addressline={1 Bethel Valley Rd.},
        city={Oak Ridge},
        state={TN},
        postcode={37831},
        country={U.S.A.}
    }

    \affiliation[utk]{%
        organization={Mathematics Department, University of Tennessee},
        city={Knoxville},
        state={TN},
        postcode={37996},
        country={U.S.A.}
    }

    \begin{abstract}

        We present a nodal \acl{dg} method for solving the \acf{bgk} kinetic
        equation on multi-dimensional, unstructured grids. The method uses
        implicit, sweep-based solvers and a moment-preserving projection of the
        Maxwellian source to enable high-order accuracy in time while avoiding
        restrictive time steps imposed by boundary layers and other
        geometry-induced features. We verify that the method is correct in the
        continuum limit by comparing to closed-form and high-order solutions of
        the Sod shock problem on 2 and 3D unstructured grids. Linear $L^2$
        stability is demonstrated for a B-stable \acl{dirk} method of third
        order. The solver uses a hybrid parallel scheme based on spatial domain
        decomposition with local sweeps performed on CPU and GPU hardware.
        Platform-portability is demonstrated through the development of new
        GPU-friendly, graph-based sweep algorithms that are implemented using
        the Kokkos performance portability library and achieve greater than 20
        times speedup on NVIDIA H100 GPUs compared to 64-core AMD EPYC 9654
        CPUs. Finally, we show results on the Frontier supercomputer at the
        \acl{olcf} for a boundary value problem with 2.77 trillion phase
        space \aclp{dof} that executed on \num{1536} nodes utilizing \num{6144}
        AMD MI250X GPUs.

    \end{abstract}

    %\begin{highlights}
    %    \item A nodal \acl{dg} scheme for the \acf{bgk} kinetic equation has been
    %    developed and implemented on multi-dimensional, unstructured grids.
    %    \item The nodal discretization includes a moment-preserving projection of the
    %    Maxwellian term in the \acs{bgk} operator and enables the use of sweep algorithms for
    %    implementing fully implicit time integrators.
    %    \item We verify that the method is correct in the continuum limit by comparing to
    %    closed-form solutions of the Sod shock problem on 2 and 3D unstructured grids.
    %    \item Linear $L^2$ stability is established and demonstrated for a family of
    %    \acf{dirk} methods up to third order.
    %    \item Platform-portability is demonstrated through the development of new
    %    GPU-friendly, graph-based sweep algorithms that are implemented using the
    %    Kokkos performance portability library and achieve greater than 20 times
    %    speedup on \nvidia H100 GPUs compared to 64-core AMD EPYC 9654 CPUs.
    %    \item We solve a duct flow problem in 2D and 3D to demonstrate that the algorithm can
    %    reach trillions of \aclp{dof}.
    %\end{highlights}

    %% Keywords
    \begin{keyword}
        \acs{bgk} kinetic equations \sep discontinuous Galerkin methods \sep implicit
        transport sweeps
    \end{keyword}

\end{frontmatter}

%%---------------------------------------------------------------------------%%

%%\tableofcontents

\listoftodos

%%---------------------------------------------------------------------------%%
\section{Introduction}
\label{sec:introduction}

The \ac{bgk} equation is a kinetic equation that is used to simulate the
behavior of rarefied gases. It is a computationally cheaper alternative to the
Boltzmann equation, the latter of which models particle collisions with an
integral operator. In the \ac{bgk} equation, the Boltzmann collision operator is
replaced by a nonlinear relaxation model that can be derived heuristically from
the Boltzmann operator by making several near-equilibrium assumptions
\cite{liboff2003kinetic}. The target of this relaxation model is a Maxwellian
distribution that shares the same mass, momentum, and energy density as the
input kinetic distribution. Like the Boltzmann operator, the \ac{bgk} operator
conserves mass, momentum, and energy; it dissipates the Maxwell-Boltzmann
entropy, and its kernel lies in the manifold of Maxwellian distributions
parameterized by density, bulk velocity, and temperature. Like the Boltzmann
equation, the \ac{bgk} equation formally recovers the system of compressible
Euler equations that govern the evolution of these parameters in the limit of
infinite collisions \cite{cercignani2013mathematical}. However, the \ac{bgk}
equation requires modifications to recover proper transport coefficients for the
compressible Navier-Stokes equations \cite{shakhov1968generalization,
andries2000gaussian,holway1965kinetic} in near-equilibrium regimes.

The \ac{bgk} equation was first introduced in \cite{bhatnagar_model_1954};
rigorous mathematical results about the existence and uniqueness of solutions
can be found in \cite{perthame_1989,perthame_1993}. Simulations of the \ac{bgk}
equation typically rely on semi-implicit time integration strategies that treat
particle advection explicitly and particle collisions implicitly. Because the
moments of the \ac{bgk} operator are constant during the implicit update, the
inversion of this nonlinear operator becomes a trivial computation
\cite{coron}.

In many engineering applications, the simulation of fluids and gases is
performed on domains which must also resolve fine physical features
\cite{erwin2013flows,moser1999channelflow,nguyen2019turbine}. In such cases,
specialized meshes, often unstructured, are needed to resolve thin layers
around domain boundaries or internal components that may be moving in the flow.
In order to resolve important geometry-induced features, the size of the mesh
may vary by orders of magnitude, both in terms of the volume between the
smallest and largest cells and the aspect ratio of individual cells. As a
consequence, an explicit treatment of the advection term in the \ac{bgk}
equation leads to time step restrictions that are not practical. In the kinetic
setting, the maximum wave speed is set by the largest microscopic velocity,
which is always greater than the fluid acoustic wave speed, in some cases by an
order of magnitude or more \cite{shin2024collision}.

The most obvious way to address the time step restrictions induced by the
advection operator is with fully implicit time integration. For grid-based
methods, there are two basic approaches: semi-Lagrangian and Eulerian.

In the semi-Lagrangian approach, the advection operator is evaluated by
backtracking along characteristics to the previous time step. Typically, the
location of the characteristic at the previous time step does not align with
the grid, and an interpolation procedure is required. However, one can show
that, up to consistency errors, the mass, momentum, and energy moments of the
distribution are conserved during the characteristic update. As a result, the
Maxwellian can be treated explicitly in time, making the inversion of the
\ac{bgk} operator a trivial linear update.

The semi-Lagrangian approach to the solving the \ac{bgk} equation was proposed
in \cite{filbet2009semilagrangian}, with a rigorous convergence analysis given
in \cite{russo2012convergence}. A high-order version of the method can be found
in \cite{groppi2016high}. The strategy is relatively simple to implement, easy
to parallelize, and requires no time step restriction to maintain stability.
However, in its original formulation, the approach lacks local conservation, in
the sense of preserving collision invariants of the \ac{bgk} operator and in the
sense of conservative discretizations for spatial gradients. The problem of
preserving collision invariants can be addressed by modifying the Maxwellian
parameters through an entropy minimization procedure
\cite{mieussens2000discrete}, which is the more expensive option, or via an
$L^2$-based projection of the solution, which is analytic but may not preserve
positivity \cite{gamba2009spectral}. The problem of nonconservative spatial
discretizations can be addressed with a conservative correction at the cost of
introducing a \ac{cfl}-like time step constraint
\cite{boscarino2025conservative,xiong2019conservative}.

For fully implicit methods, the Eulerian approach allows for conservative
discretizations of spatial gradients, and it can be augmented via modifications
of the Maxwellian that ensure proper collision invariants. The main challenge
with implicit Eulerian methods is that, once discretized in time, they are
essentially steady-state equations that may require sophisticated iterative
solvers. The core of most solvers is a source iteration procedure in which the
Maxwellian in the \ac{bgk} operator is lagged while the remaining components of
the equation are inverted with sweeping algorithms that move sequentially
through the grid according to the direction of the kinetic velocity. This method
has been used \cite{aoki1997numerical} for two-dimensional steady state problems
and more recently in \cite{su2020implicit} as a pseudo-time stepping procedure
to solve the Boltzmann equation in one and two dimensions. An alternative to
sweeping is to use \ac{sgs} methods \cite{cai2025symmetric,zhu2021general},
which split linear systems in upper and lower triangular subsystems that are
solved in succession. \ac{sgs} methods are more flexible than sweeping but
typically require more iterations. Newton methods can also be used
\cite{mieussens2000discrete}, especially since the Jacobian of the Maxwellian
with respect to the kinetic solution is local in space. However, the coupling
induced between velocity degrees of freedom in the linearized system precludes
sweeping. For sweep-based and \acs{sgs}-based methods, moment-based acceleration
schemes \cite{hauck2025high,su2020fast,zeng2023general, taitano2014moment} have
been introduced to improved convergence behavior in highly collisional regimes.

In the current work, we propose a nodal \ac{dg} scheme in both space and
velocity, similar to \cite{su2020implicit,gerhard_parallel_2024}, for
time-dependent solutions of the \ac{bgk} equation, although the methodology in
\cite{gerhard_parallel_2024} is formulated as a kinetic scheme for general
conservation laws. The \ac{dg} method provides a locally conservative
discretization of the spatial gradients and, after an appropriate projection of
the Maxwellian, satisfies local conservation properties for the collision
invariants. We integrate the discretized system in time using \ac{dirk} methods.
For the resulting algebraic equations, the nodal form of the \ac{dg}
discretization provides a discrete velocity interpretation that is amenable to
sweeping. We allow for arbitrary unstructured spatial meshes that require unique
sweep orderings for each velocity unknown. Once accounting for any possible
cycles (see \cref{sec:transport-sweeps}) each ordering is encoded by a \ac{dag}
that can be traversed to invert the transport operator.

We employ a hybrid parallel performance strategy in which the spatial mesh is
domain decomposed into MPI-managed partitions, and local sweeps are performed
on CPU multi-core or GPU hardware. Domain boundary fluxes are communicated and
updated after each local sweep, and the process is iterated to convergence
during the non-linear solve using a Picard iteration. Due to both time-dependence
and the non-linear behavior of the \ac{bgk} collision operator, convergence of
the domain boundary fluxes generally occurs within or near the non-linear
tolerance, and advanced domain decomposition and cycle handling strategies that
have been developed within the linear transport community
\cite{adams_provably_2020,vermaak_massively_2021} are not required. Therefore,
computational efficiency of the method is dominated by the performance of the
local sweeps. We introduce two platform-portable local sweeping algorithms in
\cref{sec:transport-sweeps} that enable optimal partitioning of the \acp{dag}
depending on the hardware (multi-threaded CPU versus GPU) and geometric
configuration of the problem.

The \ac{bgk} solver is implemented in the \ac{ornl} \ac{cfd} Spinnaker codebase
\cite{spinnaker}. Spinnaker is built on the Trilinos \cite{trilinos} software
framework for \ac{hpc} scientific applications. Mesh-based data structures,
domain decomposition, and finite-element functions are handled by the Trilinos
infrastructure. The sweep solvers are implemented using the Kokkos library
\cite{kokkos-ecosystem} that enables platform-portable implementations across
diverse architectures from a single implementation. Accordingly, Spinnaker can
execute \ac{bgk} problems on multi-node systems using MPI for inter-node,
domain decomposition communication and mesh partitioning, and local sweeps are
performed within a node on either multi-threaded CPU or GPU hardware.
\Cref{sec:solver-performance} shows results using Spinnaker on \ac{hpc} systems
including the Frontier supercomputer \cite{frontier} at the \ac{olcf}.

To verify the implementation, we simulate Sod problems in one, two, and three
dimensions and compare with analytic, closed-form, and high-order continuum
solutions. We also examine the effects of the moment preserving projection of
the Maxwellian, investigate linear stability, and provide a preliminary study
of scaling behavior. Finally, we simulate a simple duct problem that, in the
three-dimensional setting, uses a phase space mesh with over 2.7 trillion
degrees of freedom.

The rest of the paper is organized as follows. In Section \ref{sec:bgk}, we
briefly recall the \ac{bgk} equation. In Section \ref{sec:discretization}, we
present the spatial discretization and the implicit time integrators used. In
Section \ref{sec:sol-methods}, we present the basic source iteration scheme and
the underlying sweeping technology. Section \ref{sec:results} contains
numerical results.

%%---------------------------------------------------------------------------%%

%%---------------------------------------------------------------------------%%
\section{BGK Model}
\label{sec:bgk}

For $d \in \{1,2,3\}$, let $X \subset \bbR^d$ be a spatial domain with
Lipschitz boundary; let $Z = X \times \bbR^d$ be the position-velocity phase
space; and let
\begin{equation}
    \partial Z^{\pm}
    = \{ (\bm{x},\bm{v}) \in \partial X \times \mathbb{R}^d
    : \pm \bm{v} \cdot \bm{n}(\bm{x}) > 0 \}
\end{equation}
be the outflow ($+$) and inflow ($-$) boundaries of $Z$, respectively.  For any
non-negative, measurable function $g = g(v)$, let
\begin{align}
    \label{eq:primitive_g}
    n_g = \intv g\,d\bm{v},
    \quad
    \bm{u}_g =
    \frac{1}{n_g}\intv\bm{v} g\,d\bm{v}, \quad \text{and} \quad
    \theta_g & = \frac{1}{d}\frac{1}{n_d}
    \intv\bigl|\bm{v}-\bm{u}_g\bigr|^2
    g \,d\bm{v}
\end{align}
be the density, bulk velocity, and temperature, respectively, associated with
$g$.  Moreover for any $n \geq 0$, $\bm{u} \in \mathbb{R}^d$, and $\theta > 0$,
define the Maxwellian
\begin{equation}
    M_{n,\bm{u},\theta} (\bm{v})
    =
    \frac{n}{\bigl(2\pi\theta)^{d/2}}
    \exp\Biggl(-\frac{\bigl|\bm{v}-\bm{u}\bigr|^2}
    {2\theta}\Biggr).
\end{equation}
Then the Maxwellian associated to $g$ is:
\begin{equation}
    \cM_g(\bm{v})
    = M_{n_g,\bm{u}_{g},\theta_g}(\bm{v})
    = \frac{n_g}{\bigl(2\pi\theta_g)^{d/2}}
    \exp\Biggl(-\frac{\bigl|\bm{v}-\bm{u}_g\bigr|^2}
    {2\theta_g}\Biggr).
    \label{eq:M_g}
\end{equation}
An important property of the $\cM_g$ is that it is the unique Maxwellian such that
\begin{equation}
    \label{eq:Mawellian-moment-condition}
    \intv \mbe \cM_g \,d\bm{v} =  \intv \mbe g \,d\bm{v},
\end{equation}
where $\mbe = (\mbe^0, \mbe^1,\mbe^2 ) = (1, \bm{v},
    \tfrac{1}{2}|\bm{v}|^2)^\top$.

The \ac{bgk} model \cite{bhatnagar_model_1954} is used to describe the
evolution of a kinetic distribution, $f = f(\bm{x}, \bm{v},t)$, that here gives
the number density, with respect to the measure $d \bm{v} d \bm{x}$, of
particles at position $\bm{x} \in X$ and time $t >0$ that move with microscopic
velocity $\bm{v} \in \bbR ^d$. The model takes the form
\begin{subequations}
    \label{eq:bgk}
    \begin{numcases}{}
        \partial_t f(\bm{x},\bm{v},t) +
        \bm{v}\cdot\nabla_{\bm{x}} f(\bm{x},\bm{v},t) =
        \nu \left(\cM_f(\bm{x},\bm{v},t) - f(\bm{x},\bm{v},t) \right) , & $(\bm{x},\bm{v}) \in Z, t >0$,
        \label{eq:bgk-eq}
        \\
        f(\bm{x},\bm{v},0) = f^{\rm{ic}}(\bm{x},\bm{v}), & $(\bm{x},\bm{v}) \in Z$,
        \label{eq:bgk-ic}
        \\
        f(\bm{x},\bm{v},t) = f^{\rm{bc}}(\bm{x},\bm{v},t), & $(\bm{x},\bm{v}) \in \partial Z^-, t >0$,
        \label{eq:bgk-bc}
    \end{numcases}
\end{subequations}
where the Maxwellian associated to $f$ is defined via
\eqref{eq:M_g}:
\begin{equation}
    \mathcal{M}_f(\bm{x},\bm{v},t)
    = M_{n_f(\bm{x},t), \bm{u}_f(\bm{x},t), \theta_f(\bm{x},t)}(\bm{v})
    =
    \frac{n_f(\bm{x},t)}{\bigl(2\pi\theta_f(\bm{x},t)\bigr)^{d/2}}
    \exp\Biggl(-\frac{\bigl|\bm{v}-\bm{u}_f(\bm{x},t)\bigr|^2}
    {2\theta_f(\bm{x},t)}\Biggr),
    \label{eq:M_f}
\end{equation}
and the fields $n_f$, $\bm{u}_f$, and $\theta_f$ are the density, bulk velocity,
and temperature associated to $f$, defined via \eqref{eq:primitive_g}:
\begin{align}
    n_f(\bm{x},t)      & := n_{f(\bm{x},\cdot,t)}
    = \intv f(\bm{x},\bm{v},t)\,d\bm{v},
    \label{eq:density}                                 \\
    \bm{u}_f(\bm{x},t) & := \bm{u}_{f(\bm{x},\cdot,t)}
    =\frac{1}{n_{f(\bm{x},\cdot,t)}} \intv\bm{v} f(\bm{x},\bm{v},t)\,d\bm{v},
    \quad \text{and}
    \label{eq:bulk_velocity}                           \\
    \theta_f(\bm{x},t) & := \theta_{f(\bm{x},\cdot,t)}
    =\frac{1}{d}\frac{1}{n_{f(\bm{x},\cdot,t)} }
    \intv\bigl|\bm{v}-\bm{u}_{f(\bm{x},\cdot,t)} \bigr|^2
    f(\bm{x},\bm{v},t)\,d\bm{v}.
    \label{eq:theta}
\end{align}
The collision frequency $\nu \ge 0$ is typically a function of $\theta_f$ and
$n_f$, but for the purposes of the current work, we assume that it is an
absolute constant. For $\nu \gg 1$,  the solution of $f$ is approximated by
$\cM_f$, whose parameters $n_f$, $\bm{u}_f$, and $\theta_f$ are, in turn,
approximated by solution of the compressible Euler equations \cite[Chapter
11]{cercignani2013mathematical}. Specifically, due to
\eqref{eq:Mawellian-moment-condition}, the velocity moments
\begin{equation}
    \mbU_f = \intv \mathbf{e}f(\bm{v})\,d\bm{v}
    = \begin{pmatrix}
        n_f          \\
        n_f \bm{u}_f \\
        \frac12 n_f |\bmu_f|^2  + \frac{d}{2} n_f \theta_f
    \end{pmatrix}
    \label{eq:moments}
\end{equation}
satisfy the conservation law
\begin{equation}
    \partial_t\mbU_f + \nabla_{\bmx}\cdot \left( \intv \bmv\mathbf{e}f \right) = 0.
    \label{eq:conservation}
\end{equation}
In the formal limit
$\varepsilon = \nu / L \to \infty$, $\mbU_f \to \mbU$ where $\mbU$ satisfies
the Euler equations
\begin{equation}\label{eq:euler_system}
    \partial_t \mbU(\bm{x},t) + \grad_{\bm{x}} \cdot \ve{F}(\mbU(\bm{x},t)) = 0,
\end{equation}
with
\begin{equation}
    \ve{U}
    = \begin{pmatrix}
        n        \\
        n \bm{u} \\
        \frac12 n |\bm{u}|^2  + \frac{d}{2} n \theta
    \end{pmatrix},
    \quad \text{and} \quad
    \ve{F}(\ve{U})
    =\begin{pmatrix}
        n \bm{u}                             \\
        n \bm{u} \otimes \bm{u} + n \theta I \\
        \frac12 n |\bm{u}|^2 \bm{u} + \frac{d+2}{2} n \theta \bm{u}
    \end{pmatrix}.
\end{equation}

%%- - - - - - - - - - - - - - - - - - - - - - - - - - - - - - - - - - - - - -%%

\section{Discretization}
\label{sec:discretization}

We discretize the \ac{bgk} model using a nodal \ac{dg} method in both position
and velocity space. The spatial mesh is general and may be unstructured, while
the velocity mesh is a tensor product of uniform intervals. Time integration is
performed using \ac{dirk} methods.

\subsection{Spatial discretization}

Let $\Tx$ be a mesh on $X$ with $N_{\bmx}$ polygonal cells characterized by a
mesh parameter $h_{\bmx} = \max_i{h_{\bmx,i}}$ where ${h_{\bmx,i}}$ is the
diameter of the smallest sphere circumscribed around $\Kx^i\in\Tx$.
Let $W_{\bmx}$ be the \ac{dgfem} space on $\Tx$ defined by
\begin{equation}\label{eq:discrete_space_x}
    W_{\bmx} = \left\{ w\in L^2(X): w\big|_{\Kx}\in \bbM^{\kx}(\Kx) \quad \forall \Kx \in \Tx \right\},
\end{equation}
where $L^2(X)$ is the space of square-integrable functions defined on $X$. Here
$\bbM^{\kx}(\Kx)$ is either $\bbP^{\kx}(\Kx)$, the space of polynomials on $\Kx$
up to total degree $\kx$, or $\bbQ^{\kx}(\Kx)$, the space of polynomials on
$\Kx$ up to degree $\kx$ in each component of $\bmx$. The total degrees of
freedom in $\Wx$ is denoted $\dofx$.

Because functions in $W_{\bmx}$ can be discontinuous at cell boundaries, it is
necessary to define averages and jumps along these boundaries. Let $\EIx$ be
the interior skeleton of $\Tx$; that is,
\begin{equation}
    \EIx=\{e:  e = \p{\Kx^+}\cap\p{\Kx^-} \ne \emptyset
    \text{ for some } \Kx^+, \Kx^- \in \cT_\bmx \}.
\end{equation}
For any $\bmx \in e =\p{\Kx^+}\cap\p{\Kx^-}\in\EIx$, the average and jump
operators of a function $a \in W_{\bmx}$ are given by
\begin{equation}\label{eq:avg_and_jmp}
    \avg{a}(\bmx) = \tfrac{1}{2}(a^+(\bmx) +a^-(\bmx)),
    \qquad
    \jmp{a}(\bmx) = a^+(\bmx)\bmn^+(\bmx) + a^-(\bmx)\bmn^-(\bmx),
\end{equation}
where $\bmn^\pm(\bmx)$ are the outward normals to $\Kx^\pm$ at $\bmx$ and
$a^\pm(\bmx) = \lim_{\bmx'\to \bmx} a(\bmx')$ where $\bmx'\in K_\bmx^\pm$.

A global basis for $\Wx$ is formed using a local Lagrange nodal basis on each
$\Kx \in \Tx$ that is defined via affine transformations from a fixed reference
element $\Kx^{\text{ref}}$ to $\Kx$. Reference elements and corresponding bases
that are currently used in the software are listed in \Cref{tab:pos-basis}. We
denote such an interpolatory basis for $W_{\bmx}$ by $\{\phi_i\}_i$ using
interpolation points by $\{\bmx_i\}_i$, such that $\phi_i(\bmx_i') =
    \delta_{i,i'}$.

To evaluate the bilinear forms in the variational formulation below, we use
quadrature operators $\Qx$ and $\Qex$. Denote $C(D)$ to be the set of
functions continuous on the open set $D$. The quadrature operators $\Qx$ and
$\Qex$ are defined such that for any $a\in C(\Kx)$ and $a^e\in C(e)$,
\begin{equation}
    \Qx a \approx \int_{\Kx} a (\bmx) d \bmx
    \qquad\text{and}{\qquad}
    \Qex a^e \approx \int_{e} a^e(\bmx) d \bmx.
\end{equation}
Moreover, define, for any $a,b \in C(\Kx)$ and for any $a^e,b^e \in C(e)$,
\begin{equation}
    (a,b)_{\Kx} = \Qx(ab)
    \qquad\text{and}{\qquad}
    \ledge a^e,b^e\redge_{e} = \Qx^e(a^e b^e).
\end{equation}

We choose quadratures that satisfy the following assumption:
\begin{ass}
    The quadratures $\Qx$ and  $\Qex$ are exact for polynomials of degree
    $2 \kx$.  In particular,
    \begin{equation}
        (a,b)_{\Kx}
        = \int_{\Kx} a(\bmx) b(\bmx) \, d \bmx
        \;
        \forall a,b \; \in \bbM^{p}(\Kx),
        \quad\text{and}\quad
        \ledge a^e,b^e \redge_{e}
        = \int_{e} a(\bmx) b(\bmx) \, d \bmx
        \;
        \forall \chi,\xi \; \in \bbM^{p}(e)
    \end{equation}
    and, as a consequence, $\int_X a(\bmx)b(\bmx) d\bmx = \sum_{\Kx \in \Tx}
        (a,b)_{\Kx}$ for all $a,b \in \Wx$.
\end{ass}

\begin{table}[h]
    \centering
    \caption{Various reference element and corresponding basis used.}
    \label{tab:pos-basis}
    \begin{tabular}{c|c|l|c|l}
        $d$ & $\kx$ & Reference element $\Kx^{\text{ref}}$                                                                  & $\bbM^{\kx}$ & Reference basis                            \\ \hline
        2   & 1     & $\{\tilde{\bmx}=(\xi,\eta): \xi \geq 0; \eta \geq 0, \xi + \eta \leq 1\}$                             & $\bbP^1$     & $\{1-\xi-\eta,\xi,\eta\}$                  \\
        2   & 1     & $\{\tilde{\bmx}=(\xi,\eta): \xi \in [-1,1],\eta \in [-1,1]\}$                                         & $\bbQ^1$     & $\{\tfrac{1}{4}(1 \pm \xi)(1 \pm \eta) \}$ \\
        3   & 1     & $\{\tilde{\bmx}=(\xi,\eta,\zeta): \xi \geq 0, \eta \geq 0, \zeta \geq 0, \xi + \eta + \zeta \leq 1\}$ & $\bbP^1$     & $\{1-\xi-\eta-\zeta,\xi,\eta,\zeta\}$
    \end{tabular}
\end{table}

\subsection{Velocity discretization}
\label{sec:velocity-discretization}

For computational purposes, we restrict $\bmv \in \bbR^3$ to a computational
domain $V = [-L,L]^d$ for some sufficiently large, but finite, constant $L >0$.
Let $\cT_{\bmv}$ be a mesh on $V$ with $N_{\bmv} = n^d_\bmv $ uniform Cartesian cells
$K_\bmv$ of side length $h_{\bmv}$ = $2L/n_\bmv$, and let
\begin{equation}\label{eq:discrete_space_v}
    \Wv = \left\{ w \in L^2(V): w\big|_{\Kv}\in \bbQ^{k_\bmv}(\Kv) \quad \forall
    \Kv \in \cT_{\bmv} \right\}.
\end{equation}
The total number of degrees of freedom in $\Wv$ is given by $\dofv=(k_\bmv +
    1)^d n_\bmv^d$.  While functions in $W_\bmv$ can also be discontinuous at cell
boundaries, the lack of velocity gradients in the \ac{bgk} model means that average
and jump operators with respect to $\bmv$ are not needed.

A global basis for $\Wv$ is formed using a tensor product Lagrange basis on
each $\Kv \in \cT_{\bmv}$ that is mapped by affine transformations from a
reference element $\Kv^{\text{ref}} = [-1,1]^d$ to $\Kv$. We denote the global
interpolation points are denoted by $\{\vj\}$, where $\mbj=(j_1,\ldots,j_d) \in
    \mbJ \subset \mathbb{N}^d$ is a multi-index and $|\mbJ| = \dofv$. We denote the
basis for $\Wv$ by $\{\psi_{\mbj}\}_{\mbj}$.

To evaluate the bilinear forms in the variational formulation below, we use a
quadrature operator $\Qv$ such that for all $a \in C(\Kv)$
\begin{equation}
    \Qv (a) \approx \int_{\Kv}a(\bmv) d \bmv
\end{equation}
and define, for any $a,b \in C(\Kv)$
\begin{equation}
    (a,b)_{\Kv} = \Qv(a b)
\end{equation}
We choose quadratures that satisfy the following assumption
\begin{ass}
    \label{ass:velocity-quadrature}
    The quadrature $\Qv$ is exact for polynomials of degree $2 \kv +1$.  In
    particular,
    \begin{equation}
        (\bmv a,b)_{\Kv}
        = \int_{\Kv} \bmv a(\bmv) b(\bmv) \, d \bmv,
        \;
        \forall a,b \; \in \bbQ^{k_\bmv}(\Kv)
    \end{equation}
    and, as a consequence, $\int_V \bmv a(\bmv) b(\bmv) d \bmv = \sum_{\Kv \in
            \Tv} (\bmv a,b)_{\Kv}$ for all $a,b \in \Wv$.
\end{ass}
In this work, we use $\kv = 2$ for all computations to ensure conservation, and
to satisfy Assumption \ref{ass:velocity-quadrature}, we use a quadrature
operator $\Qv$ which uses the standard three-point Gauss--Legendre
quadrature on each cell $\Kv$.

\subsection{Variational formulation in phase space}

At a semi-discrete level, the nodal \ac{dg} method is: Find
\begin{equation}
    f_h(\bmx,\bmv,t)
    = \sum_\mbj \fhj(\bmx,t)  \psi_{\mbj}(\bmv)
    = \sum_i \sum_\mbj f_{h,\mbj,i}(t) \psi_{\mbj}(\bmv) \phi_i(\bmx)
\end{equation}
such that for all $\mbj$, for all $\Kx \in \Tx$, and every $t$,
\begin{equation} \label{eq:bgk_cont_time}
    (\p_t \fhj, \zeta_h)_{\Kx}
    = (\cF_{h,\mbj}(f_h),\zeta_h)_{\Kx},
    \qquad
    f_{h,\mbj,i}|_{t=0}
    = f^{\rm{ic}}(\bmx_i, \vj) \quad \text{for all $i$ such that $\bmx_i \in \Kx$},
\end{equation}
where $\cF_h$ is defined by
\begin{equation}
    \label{eq:def-F}
    (\cF_{h,\mbj}(f_h),\zeta_h)_{\Kx}
    = (\fhj, \vj \cdot \grad_\bmx \zeta_h)_{\Kx}
    - (\hat f_{h,\mbj},\vj \cdot  \bmn \zeta_h)_{\p \Kx}
    - \nu (\fhj,\zeta_h)_{\Kx}
    + \nu (\mhj, \zeta_h)_{\Kx}.
\end{equation}
Here $\bmn(\bmx)$ is the unit outward normal at $\bmx \in \p \Kx$,
$(\cdot,\cdot)_{\p \Kx}$ is defined by quadrature over $\p \Kx$, and
\begin{equation}
    \hat f_{h,\mbj}(\bmx,\cdot)
    =  \lim_{\varepsilon\to 0}
    \begin{cases}
        f_{h,\mbj}(\bmx-\varepsilon \bmn,\cdot),   & \bmx \in \p_\mbj^{+}\Kx, \\
        f_{h,\mbj}(\bmx + \varepsilon \bmn,\cdot), & \bmx \in \p_\mbj^{-}\Kx,
    \end{cases}
\end{equation}
where
\begin{equation}
    \label{eq:cell-inflow-outflow-boundaries}
    \p_\mbj^\pm\Kx
    = \{\bmx\in \p \Kx:\pm \vj \cdot \bmn(\bmx) > 0\}.
\end{equation}
In addition, the discrete velocity Maxwellian is
\begin{equation}
    \mhj = \cP_{\bmv,h}\cM_{f_h} |_{\bmv = \vj},
\end{equation}
where the projection $\cP_{\bmv,h}\colon L^2(V) \to W_\bmv$ is a projection
operator defined such that
\begin{equation}
    \label{eq:Maxwellian-projection}
    \int_{V} \mbe \cP_{\bmv,h}\cM_{f_h} \,d\bm{v}
    = \int_{\bbR^3} \mbe \cM_{f_h} \,d\bm{v}.
\end{equation}
The condition in \eqref{eq:Maxwellian-projection} is to ensure that a numerical
form of the conservation laws \eqref{eq:conservation} are satisfied, even though
the computational velocity domain is bounded. The construction of $\cP_{\bmv,h}$
is provided in \ref{app:velocity-projection}.

\subsection{Basic properties of the discretization}

The nodal \ac{dg} formulation in velocity enables the use of sweeping
techniques discussed in the next section. However, with the use of
Gauss-Legendre quadrature for $\Qv$, the method satisfies a discrete version of
the conservation laws in \eqref{eq:conservation}

\begin{prop}[Conservation laws]
    Suppose that the quadrature operator $\Qv$ is exact for any $a,b \in
    W_\bmv$. Then the nodal \ac{dg} method satisfies a discrete version of the
    conservation law \eqref{eq:conservation} on each cell $\Kx$.
\end{prop}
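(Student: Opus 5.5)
The plan is to test the semi-discrete scheme \eqref{eq:bgk_cont_time}--\eqref{eq:def-F} with $\zeta_h = 1$ on a fixed cell $\Kx$, multiply by the velocity moment weights $\wj \mbe(\vj)$, and sum over $\mbj$. Here $\wj$ are the Gauss--Legendre weights, possibly accumulated over cells sharing a node, so that $\Qv(a) = \sum_\mbj \wj a(\vj)$ summed over $\Tv$. Since $1 \in \bbM^{\kx}(\Kx)$, the choice $\zeta_h = 1$ is admissible, and $\grad_\bmx \zeta_h = 0$ kills the volume advection term. Define the discrete moments on $\Kx$ by
\[
\mbU_{h}(\bmx,t) = \sum_\mbj \wj \mbe(\vj) \fhj(\bmx,t).
\]
By the hypothesis that $\Qv$ is exact for products $ab$ with $a,b\in \Wv$, and because each component of $\mbe$ is a polynomial of degree at most $2\le \kv$ on each $\Kv$ (recall $\kv = 2$, so $\mbe^i \in \bbQ^{\kv}(\Kv)$), we get $\mbU_h = \int_V \mbe f_h \, d\bmv$. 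The time-derivative term then becomes $\frac{d}{dt}\int_{\Kx} \mbU_h \, d\bmx$, after using Assumption 3.1 on $\Qx$ (exact for $\bbM^{\kx}$ against the constant).

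Next I would treat the flux term. Summing $-\sum_\mbj \wj \mbe(\vj)(\hat f_{h,\mbj}, \vj\cdot\bmn)_{\p\Kx}$ gives a boundary quadrature of the numerical flux
\[
\hat{\ve{F}}_h\cdot\bmn = \sum_\mbj \wj\, \mbe(\vj)\, (\vj\cdot\bmn)\, \hat f_{h,\mbj},
\]
which is a consistent discrete-velocity approximation of $\intv \bmv\mbe f\,d\bmv\cdot\bmn$. The upwind trace is single-valued on each interior face: both neighbouring cells pick the same upwind value, and $\bmn^+ = -\bmn^-$. Hence this flux is conservative across cells, and the resulting identity on $\Kx$ reads
\[
\frac{d}{dt}\int_{\Kx}\mbU_h\,d\bmx + \ledge \hat{\ve F}_h\cdot\bmn, 1\redge_{\p\Kx} = \nu\sum_\mbj \wj\mbe(\vj)\,(\mhj - \fhj,1)_{\Kx}.
\]
This is the discrete version of \eqref{eq:conservation} integrated over $\Kx$, provided the right-hand side vanishes.

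The main step, and the only nontrivial one, is showing the collision term vanishes. Since $\mhj$ are the nodal values of $\cP_{\bmv,h}\cM_{f_h}\in\Wv$, exactness of $\Qv$ gives
\[
\sum_\mbj\wj\mbe(\vj)\mhj = \int_V \mbe\,\cP_{\bmv,h}\cM_{f_h}\,d\bmv .
\]
By \eqref{eq:Maxwellian-projection} this equals $\int_{\bbR^d}\mbe\,\cM_{f_h}\,d\bmv$, which by \eqref{eq:Mawellian-moment-condition} equals the moments of $f_h$, that is $\int_V \mbe f_h\,d\bmv = \sum_\mbj\wj\mbe(\vj)\fhj$. The obstacle is making this pointwise in $\bmx$ consistent with how $\cM_{f_h}$ is evaluated. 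If the Maxwellian parameters are computed at the spatial nodes $\bmx_i$ from the discrete moments, then the cancellation holds node by node. The $\Qx$-integral of the difference is then a quadrature of a function vanishing at every node, hence zero. I would state this nodal interpretation explicitly and, if needed, note that $f_h$ is supported in $V$, so the moment condition is applied with $\int_V$ in place of $\intv$.
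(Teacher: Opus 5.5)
Your proposal is correct and matches the paper's proof. Like the paper, you test \eqref{eq:def-F} with $\zeta_h = 1$, form the $\wj\ej$-weighted sum over $\mbj$, and cancel the collision term using exactness of $\Qv$, the moment-preserving projection \eqref{eq:Maxwellian-projection} and the Maxwellian moment condition, which leaves $(\p_t \mbU_h,1)_{\Kx}$ plus the boundary flux sum equal to zero. Your extra remarks make explicit details the paper leaves implicit: that $\kv = 2$ is what places the components of $\mbe$ in $\Wv$, that the cancellation holds pointwise (or nodally) in $\bmx$ before $\Qx$ is applied, and that the upwind flux is single-valued on interior faces.
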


\begin{proof}
    Let $\mbe_j = \mbe(\vj)$ and let $\wj$ be the weights of the velocity
    quadrature $\Qv$.  Then setting $\zeta_h=1$ in \eqref{eq:def-F} and applying the quadrature $\Qv$
    gives,
    \begin{equation}
        \label{eq:-cons}
        \sum_{\mbj} \wj \ej (\cF_{h,\mbj}(f_h),1)_{\Kx}
        =    - \sum_{\mbj} \wj \ej (\hat f_{h,\mbj},\vj \cdot  \bmn)_{\p \Kx}
        + \nu  \sum_{\mbj} \wj \ej (\fhj,1)_{\Kx}
        - \nu \sum_{\mbj} \wj \ej  (\mhj, 1)_{\Kx}.
    \end{equation}
    Since $\cP_{\bmv,h}\cM_{f_h^{(s)}}$, $f^{(s)}_h$, and the components
    of $\mbe$ are all in $W_\bmv$, it follows from
    \eqref{eq:Maxwellian-projection} that
    \begin{equation}
        \label{eq:collision-invariance}
        \sum_{\mbj} \wj \ej  \mhj
        = \int_{V} \mbe \cP_{\bmv,h}\cM_{f_h^{(s)}} \,d\bm{v}
        = \int_{\bbR^3} \mbe \cM_{f_h^{(s)}} \,d\bm{v}
        = \int_{\bbR^3} \mbe f_h^{(s)} \,d\bm{v}
        = \int_{V} \mbe f_h^{(s)} \,d\bm{v}
        = \sum_{\mbj} \wj \ej \fhj
    \end{equation}
    Let $\mbU_h(t) = \sum_{\mbj} \wj \ej \fhj$.  Then substitution of
    \eqref{eq:-cons} and \eqref{eq:collision-invariance} into
    \eqref{eq:bgk_cont_time} gives
    \begin{equation}
        \label{eq:discrete-cons}
        (\p_t \mbU_h, 1)_{\Kx}
        + \sum_{\mbj} \wj \ej (\hat f_{h,\mbj},\bmv_\mbj \cdot  \bmn)_{\p \Kx}=0.
    \end{equation}
\end{proof}
\begin{remark}\label{rmk:jump_quadrature_error}
    In terms of the average and jump terms,
    \begin{equation}
        \vj \hat f_{h,\mbj}(\bmx,\cdot)
        = \vj \avg{f_{h,\mbj}}(\bmx,\cdot) - \frac12 |\vj\cdot\bmn| \jmp{f_{h,\mbj}}(\bmx,\cdot).
    \end{equation}
    Due to the term $|\vj\cdot\bmn|$, the quadrature evaluation of the jump term
    is typically not exact.  However, it has been shown that such an
    approximation does not harm the accuracy of the method
    \cite{pazner_short_2021}.  Moreover, if the jump is zero, then because $\Qv$ can
    integrate $ \bmv \mbe f_h$ exactly, it follows that
    \begin{multline}
        \sum_{\mbj} \wj \ej (\bmv_j \cdot  \bmn) \hat f_{h,\mbj}(\bmx,\cdot)
        = \sum_{\mbj} \wj \ej \vj \cdot  \bmn \avg{f_{h,\mbj}}(\bmx,\cdot) \\
        = \int_V (\bmv\cdot \bmn) \mbe \avg{f_h} (\bmx,\bmv, \cdot) d \bmv
        = \int_V (\bmv\cdot \bmn) \mbe f_h (\bmx,\bmv, \cdot) d \bmv.
    \end{multline}
\end{remark}

The method is also linearly stable. Following the calculation in
\cite{hauck2025high}, we set $\zeta_h = f_{h,\mbj}$ and sum over all $\mbj$.
The result is an evolution equation for the $L^2$ phase space norm:
\begin{equation}
    \label{eq:l2-estimate}
    \begin{split}
        \frac12 \frac{\partial}{\partial t}
        \left\lVert f_{h} \right\rVert_{L^2(X
                                    \times V)}^2
        = & - \nu \sum_{\mbj}w_{\mbj}\| \mhj- \fhj\|_{V}^2
        + \nu \sum_{\mbj} w_{\mbj} (\mhj,\fhj - \mhj)_{X}                              \\
          & - \sum_{\mbj}w_{\mbj}\sum_{e \in {\EIx}}(|\vj \cdot\bmn| \jmp{f_{h,\mbj}},
        \jmp{f_{h,\mbj}})_e                                                            \\
          & - \sum_{\mbj} w_{\mbj}\ledge|\vj \cdot\bmn|   f_{h,\mbj},
        f_{h,\mbj} \redge_{\partial^{+}_{\mbj}X}
        + \sum_{\mbj} w_{\mbj}\ledge |\vj \cdot\bmn|  f^{\mathrm{bc}},
        f^{\mathrm{bc}}
        \redge_{\partial^{-}_{\mbj}X} .
    \end{split}
\end{equation}
The fact that the Maxwellian is nonlinear precludes an $L^2$ based estimate.
Thus to verify linear stability of the method in our calculations, we introduce
the simplified Maxwellian, which is linear:
\begin{equation}
    \label{eq:linear-Maxwellian}
    \nhj
    = N_h = \frac{n_{f_h}}{\lvert V \rvert}\quad\forall \mbj\in\mbJ.
\end{equation}
\begin{prop}[Linear Stability]\label{prop:linear-stability}
    Suppose that either (i) $\nu = 0$ or (ii) $\mhj$ is replaced by $\nhj$ in
    \eqref{eq:l2-estimate}. Then if
    \begin{equation}
        \label{eq:L2-boundary}
        \sum_{\mbj} \ledge (\vj \cdot \bmn)f_{h,\mbj},
        f_{h,\mbj} \redge_{\partial^{+}_{\mbj}X}
        \geq
        \sum_{\mbj} \ledge (\vj \cdot \bmn)  f^{\mathrm{bc}},
        f^{\mathrm{bc}}
        \redge_{\partial^{-}_{\mbj}X},
    \end{equation}
    it follows that $ \frac{\partial}{\partial t} \left\lVert f_{h}
        \right\rVert_{L^2(X\times V)}^2 \leq 0$.
\end{prop}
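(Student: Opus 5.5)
The plan is to start from the energy identity \eqref{eq:l2-estimate} and show that each term on its right-hand side is non-positive under either hypothesis. The terms fall into three groups. The first is the two collision terms on the first line. The second is the interior jump term. The third is the pair of boundary terms on the last line.

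The interior jump term is straightforward. Each summand is $w_{\mbj}\,(|\vj\cdot\bmn|\jmp{f_{h,\mbj}},\jmp{f_{h,\mbj}})_e$. This is a positively weighted quadrature of a non-negative integrand, because the Gauss--Legendre weights $w_{\mbj}$ are positive and the face quadrature weights are assumed positive. The term therefore enters with a minus sign and contributes something $\le 0$.

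For the boundary terms, I note that on $\partial^+_{\mbj}X$ we have $|\vj\cdot\bmn|=\vj\cdot\bmn$, and on $\partial^-_{\mbj}X$ we have $|\vj\cdot\bmn|=-\vj\cdot\bmn$. I will rewrite the last line of \eqref{eq:l2-estimate} in the signed form that appears in \eqref{eq:L2-boundary}, keeping the sign conventions consistent and carrying the positive weights $w_{\mbj}$ along. Hypothesis \eqref{eq:L2-boundary}, read with the weights included as they are in \eqref{eq:l2-estimate}, is exactly the statement that outflow minus inflow is non-negative. So the boundary contribution is $\le 0$. I expect the sign and weight bookkeeping here to be the fiddliest step, because the hypothesis as literally written omits the weights and uses a signed normal. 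I would interpret it as the weighted statement matching \eqref{eq:l2-estimate}.

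For the collision terms there are two cases.

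\textbf{Case (i), $\nu=0$.} Both collision terms vanish identically.

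\textbf{Case (ii), $\mhj$ replaced by $\nhj=N_h$.} The first collision term $-\nu\sum_{\mbj}w_{\mbj}\|N_h-\fhj\|^2$ is $\le 0$. For the second term, $N_h$ is independent of $\mbj$, so pointwise in $\bmx$
\begin{equation*}
  \sum_{\mbj} w_{\mbj} N_h(\fhj - N_h) = N_h\Bigl(\sum_{\mbj} w_{\mbj}\fhj - |V| N_h\Bigr).
\end{equation*}
Since $\sum_{\mbj} w_{\mbj}=|V|$ and, by exactness of $\Qv$, $\sum_{\mbj} w_{\mbj}\fhj=n_{f_h}$, the bracket equals $n_{f_h}-n_{f_h}=0$. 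Integrating in $\bmx$ with the exact quadrature of the spatial assumption then gives zero for this term.

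Summing the three groups shows that the right-hand side is $\le 0$, which proves $\frac{\partial}{\partial t}\|f_h\|^2_{L^2(X\times V)}\le 0$.
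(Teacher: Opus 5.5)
Your proof is correct and follows the paper's argument: case (i) is immediate, and in case (ii) the cross term vanishes because $N_h$ is independent of $\mbj$ and $\sum_{\mbj} w_{\mbj}(\fhj - N_h) = n_{f_h} - |V| N_h = 0$, while the jump and boundary terms are non-positive by sign (the paper leaves this last part implicit). Your pointwise-in-$\bmx$ version of the cancellation is slightly cleaner than the paper's, which pulls the $\bmx$-dependent $N_h$ outside $(\cdot,\cdot)_{\Kx}$. Your reading of the boundary hypothesis, with the weights $w_{\mbj}$ and with $|\vj\cdot\bmn|$ on the inflow side, is the one needed for the conclusion to hold, since as literally written it is vacuous.
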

\begin{proof}
    Case (i) is trivial.  The proof for case (ii) follows immediately from the
    fact that $\sum_{\mbj} w_{\mbj} (\nhj,\fhj - \nhj)_{\Kx} = N_h \sum_{\mbj}
        w_{\mbj} (1,\fhj - \nhj)_{\Kx} = 0$.
\end{proof}

\subsection{Time integration}
\label{sec:time-integrators}

For the purposes of this paper, we consider \ac{dirk} schemes, although the
solver strategy described in the next section applies to other implicit
approaches, including \ac{bdf} and Crank-Nicholson. An $S$-stage \ac{dirk}
method is characterized by vectors $\mbb,\mbc \in \bbR^{S}$ and a lower
triangular matrix $\mbA \in \bbR^{S \times S}$. We use three different schemes;
the Butcher tableaus containing $\mbb$, $\mbc$, and $\mbA$ for these schemes
are given in Appendix \ref{sec:time-integrators}.

To present the method compactly, we write the phase space discretization
\eqref{eq:bgk} formally as an ODE for $f$:
\begin{equation}
    \label{eq:bak-ode}
    \p_t \fhj = \cF_{h,\mbj}(f_h,t),  \qquad \fhj|_{t=0} = \fhj^{\rm{ic}},
\end{equation}
where $\fhj^{\rm{ic}} \in \Wx$ satisfies $(\fhj^{\rm{ic}},\zeta_h)_{\Kx} =
    (f^{\rm{ic}}(\cdot, \vj),\zeta_h)_{\Kx}$.
When applied to \eqref{eq:bak-ode}, an
$S$-stage \ac{dirk} method with time step $\dt$ takes the form
\begin{subequations}
    \label{eq:dirk}
    \begin{align}
        \label{eq:dirk-0}
        f_h^{0}
         & = f_h^{\rm{ic}},
        \\ \label{eq:dirk-n}
        \fhj^{k+1}
         & = \fhj^k + \dt \sum_{s=1}^{S} b_s \Fhj(\fhs,t^{(s)}) , \quad k = 0, 1, 2, \dots,
    \end{align}
\end{subequations}
where, for each stage $s \in \{1 \ddd S\}$, $t^{(s)} = t^k + c_{s} \dt$ and
\begin{equation}
    \label{eq:dirk-stages}
    \fhjs
    = \fhj^k + \dt \sum_{r=1}^s A_{s,r} \Fhj(f^{(r)},t^{(r)}).
\end{equation}
From the definition of $\Fhj$, the stage equations in \eqref{eq:dirk-stages} can
be written in the steady-state form
\begin{equation}
    -(\fhjs, \vj \cdot \grad_\bmx \zeta_h)_{\Kx}
    + (\hat f_{h,j}^{(s)},\vj \cdot  \bmn \zeta_h)_{\p \Kx}
    + \nu^{(s)} (\fhjs,\zeta_h)
    = \nu (\mhjs,\zeta_h)_{\Kx}
    + (\qhjs,\zeta_h)_{\Kx},
    \label{eq:bgk-ss}
\end{equation}
where
\begin{subequations}
    \begin{align}
        \label{eq:lambda-stages}
        \nu^{(s)}
         & = \nu + \frac{1}{\dt A_{s,s}}\quad\text{and}
        \\
        \label{eq:q-stages}
        \qhjs
         & = \frac{1}{\dt A_{s,s}}
        \left(\fhj^k + \dt \sum_{r=1}^{s-1} A_{s,r} \Fhj(f_h^{(r)},t^{(r)}) \right).
    \end{align}
\end{subequations}
In practice, $\Fhj(f_h^{(r)},t^{(r)})$ in \eqref{eq:q-stages} can be evaluated by
applying $\Fhj$ to $f^{(r)}$ for $r < s$; alternatively, a recursive definition
of $\Fhj(f^{(r)},t^{(r)})$ can be obtained through \eqref{eq:dirk-stages}:
\begin{equation}\label{eq:dirk-recursive-def}
    \Fhj(f^{(r)},t^{(r)}) = \frac{1}{\dt A_{r,r}}
    \left(
    \fhjr - \fhj^k - \dt \sum_{r'=1}^{r-1} A_{r,r'} \cF(f^{(r')},t^{(r')})
    \right).
\end{equation}
In this work
\eqref{eq:dirk-recursive-def} is used.

%%---------------------------------------------------------------------------%%

%%---------------------------------------------------------------------------%%
\section{Sweep-based Picard iteration}
\label{sec:sol-methods}

Using the definition of $\p_{\mbj}^{\pm}\Kx$ in
\eqref{eq:cell-inflow-outflow-boundaries}, we write \eqref{eq:bgk-ss} as
\begin{multline}
    \label{eq:bgk-ss-inflow-outflow}
    -(\fhjs, \vj \cdot \grad_\bmx \zeta_h)_{\Kx}
    + (\hat f_{h,{\mbj}}^{(s)},\vj \cdot  \bmn \zeta_h)_{\p_{\mbj}^+ \Kx}
    + \nu^{(s)} (\fhjs,\zeta_h)_{\Kx} \\
    = \nu (\mhjs,\zeta_h)_{\Kx}
    + (\qhjs,\zeta_h)_{\Kx}
    - (\hat f_{h,{\mbj}}^{(s)},\vj \cdot  \bmn \zeta_h)_{\p_{\mbj}^-\Kx}.
\end{multline}
If the right-hand side of \eqref{eq:bgk-ss-inflow-outflow} is given, then the
left-hand side can be inverted to find $\fhjs$ independently across all $\mbj$.
This observation motivates the sweeping strategy in which, for each $\vj$, the
spatial cells are visited in an ordered way  so that the outflow $\hat
    f_{h,{\mbj}}^{(s)}|_{\p_\mbj^{+}\Kx}$ becomes the inflow for another cell later
in the ordering.  However, the sweeping is precluded by three issues
\begin{itemize}
    \item Evaluation of the discrete Maxwellian $\mhj$ requires the entire set
    of discrete velocities to compute moments. This introduces coupling in
    $\mbj$.
    \item If $\Kx$ is a boundary cell, i.e., $\p {K_x} \cap \p X \ne \emptyset$,
    then the inflow boundary data given in \eqref{eq:bgk-bc} may depend on the
    outflow. In such cases, $\hat f_{h,{\mbj}}^{(s)}|_{\p_\mbj^{-}\Kx}$ may
    depend on $\hat f_{h,{\mbj'}}^{(s)}|_{\p_{\mbj'}^{+}\Kx}$ for values of $\mbj'
    \ne \mbj$. This also causes coupling in $\mbj$.
    \item For each $\mbj$, the sweep ordering can be represented by a directed
    graph, and in some cases, the graph may have a small number of cycles. If
    $\Kx$ is part a cycle, there is no ordering of the cells that ensures the
    inflow $\hat f_{h,{\mbj}}^{(s)}|_{\p_\mbj^{-}\Kx}$ is known when trying to
    invert the left-hand side of \eqref{eq:bgk-ss-inflow-outflow}.
\end{itemize}

To address the scenarios above, we lag the discrete Maxwellian as well as any
inflow from the boundary $\p X$ or from cells that are later in the graph
ordering, which can happen whenever $\Kx$ is part of a cycle. In addition, for
large-scale problems that require domain decomposition, we lag the inflow from
one subdomain in another. For each $\Kx \in \Tx$, let
\begin{equation}
    \p_\mbj^{-}\Kx = \p_\mbj^{-}\Kx^\downarrow + \p_\mbj^{-}\Kx^\uparrow
\end{equation}
be a partition of the inflow boundary into regions where the solution must be
lagged ($\downarrow$) and regions where it is not ($\uparrow$). Then for each
integer $\ell \in \{0,\ldots,\ell_{\max}-1\}$, the Picard iteration for
\eqref{eq:bgk-ss-inflow-outflow} takes the form
\begin{multline}
    \label{eq:bgk-ss-picard}
    -(\fhjslp, \vj \cdot \grad_\bmx \zeta_h)_{\Kx}
    + (\hat f_{h,{\mbj}}^{(s),\ell+1},\vj \cdot  \bmn \zeta_h)_{\p_{\mbj}^+ \Kx}
    + \nu^{(s)}  (\fhjslp,\zeta_h)_{\Kx} \\
    = \nu (M_{h,{\mbj}}^{(s),\ell},\zeta_h)_{\Kx}
    + (\qhjs,\zeta_h)_{\Kx}
    + (\hat f_{h,{\mbj}}^{(s),\ell+1},\vj \cdot  \bmn \zeta_h)_{\p_\mbj^{-}\Kx^\uparrow}
    - (\hat f_{h,{\mbj}}^{(s),\ell},\vj \cdot  \bmn \zeta_h)_{\p_\mbj^{-}\Kx^\downarrow}.
\end{multline}
The iteration is initialized by the outcome of the previous \ac{dirk} stage; namely,
\begin{equation}
    f_{h,{\mbj}}^{(s),\ell=0} =
    \begin{cases}
        f_{h,{\mbj}}^{k},                  & s=0,  \\
        f_{h,{\mbj}}^{(s-1), \ell_{\max}}, & s >0,
    \end{cases}
\end{equation}
and it is terminated after the moments of the iterated kinetic solution satisfy
a convergence criterion based on a prescribed tolerance $\tau > 0$.  For $m\in
\{0,1,2\}$, let $\mbU^{\ell,m}_{h}(\bmx) = \sum_i \mbU^{\ell,m}_{h,i}
\phi_i(\bmx) = (\mbe^m,  f_{h}^{(s),\ell}(\bmx,\cdot))_V$. Then the convergence
criteria is
\begin{equation}
    \label{eq:convergence-criteria}
    \max_{m\in\{0,1,2\}}
    \frac{\|\mbU^{\ell+1,m}_{h}-\mbU^{\ell,m}_{h}\|_{\ell^2}}
    {\|{\mbU^{\ell,m}_{h}}\|_{\ell^2}}
    < \frac{\tau}{\max(1.0,\nu\Delta t)},
\end{equation}
where $\|{\mbU^{\ell,m}_{h}}\|^2_{\ell^2} = \sum_i |\mbU^{\ell,m}_{h,i}|^2$ and
$|\cdot|$ is the Euclidean norm. The inclusion of the collision frequency $\nu$
on right-hand side of \eqref{eq:convergence-criteria} is to ensure the iteration
does not terminate too quickly due to ill-conditioning when $\nu$ is large; see,
for example, \cite{hauck2025high} or \cite[Section I.C]{adams2002fast}.

% %%- - - - - - - - - - - - - - - - - - - - - - - - - - - - - - - - - - - - - -%%
\subsection{Transport sweeps}
\label{sec:transport-sweeps}

Parallel sweeping algorithms for solving \eqref{eq:bgk-ss-picard} have been
well established in the linear transport community
\cite{vermaak_massively_2021,adams_provably_2020,haut_efficient_2019,pautz_parallel_2017,evans_denovo:_2010,baker_sn_1998},
where steady-state equations and/or fully implicit methods are the norm. In the
problems presented here, we are allowed much more flexibility in handling
lagged values in \eqref{eq:bgk-ss-picard} imposed by domain boundaries and
cycles because the extra iterations they require are either damped due to
timestep restrictions or absorbed within the non-linear solver tolerance. Thus,
we are most concerned with developing highly efficient domain-local (on-node)
sweep algorithms.

Sweeping algorithms rely on an ordering of the mesh that follows the flow of
information along each velocity $\vj$. For each $\mbj$, the sweep ordering
depends on $\vj$ and induces a directed graph $\cG_\mbj = (\cV,\cE_\mbj)$. Each
vertex $a \in \cV$ is identified with a cell $\Kx^a$ in the mesh, and any edge
$(a,b) \in \cE_\mbj$ indicates that cell $\Kx^{a}$ is adjacent to and upwind
from cell $\Kx^{b}$ with respect to the velocity $\vj$.

\Cref{fig:sweep-graph} illustrates the how the mesh is ordered in the simple
case that  $\p_\mbj^{-}\Kx^\downarrow = \emptyset$; that is, prescribed inflow
at the boundary and no cycles.  The choice of $\vj$ in
\Cref{fig:unstruct-quad-sweep} induces the graph shown in
\Cref{fig:unstruct-graph}.  In this case,  $\cG_\mbj$ is a \ac{dag}, and a
\ac{dfs} algorithm provides the topological ordering of the cells as illustrated
in \cref{fig:unstruct-graph}. For graphs with cycles, the \ac{dfs} algorithm
detects \emph{back edges}  and removes them, as shown in
\cref{fig:unstructured-cycle-5-6-7}, to create a \ac{dag} \cite[Lemma
23.10]{algorithms_2000}.
\begin{figure}
    \centering
    \begin{subfigure}[b]{0.65\textwidth}
        \includegraphics[width=\textwidth]{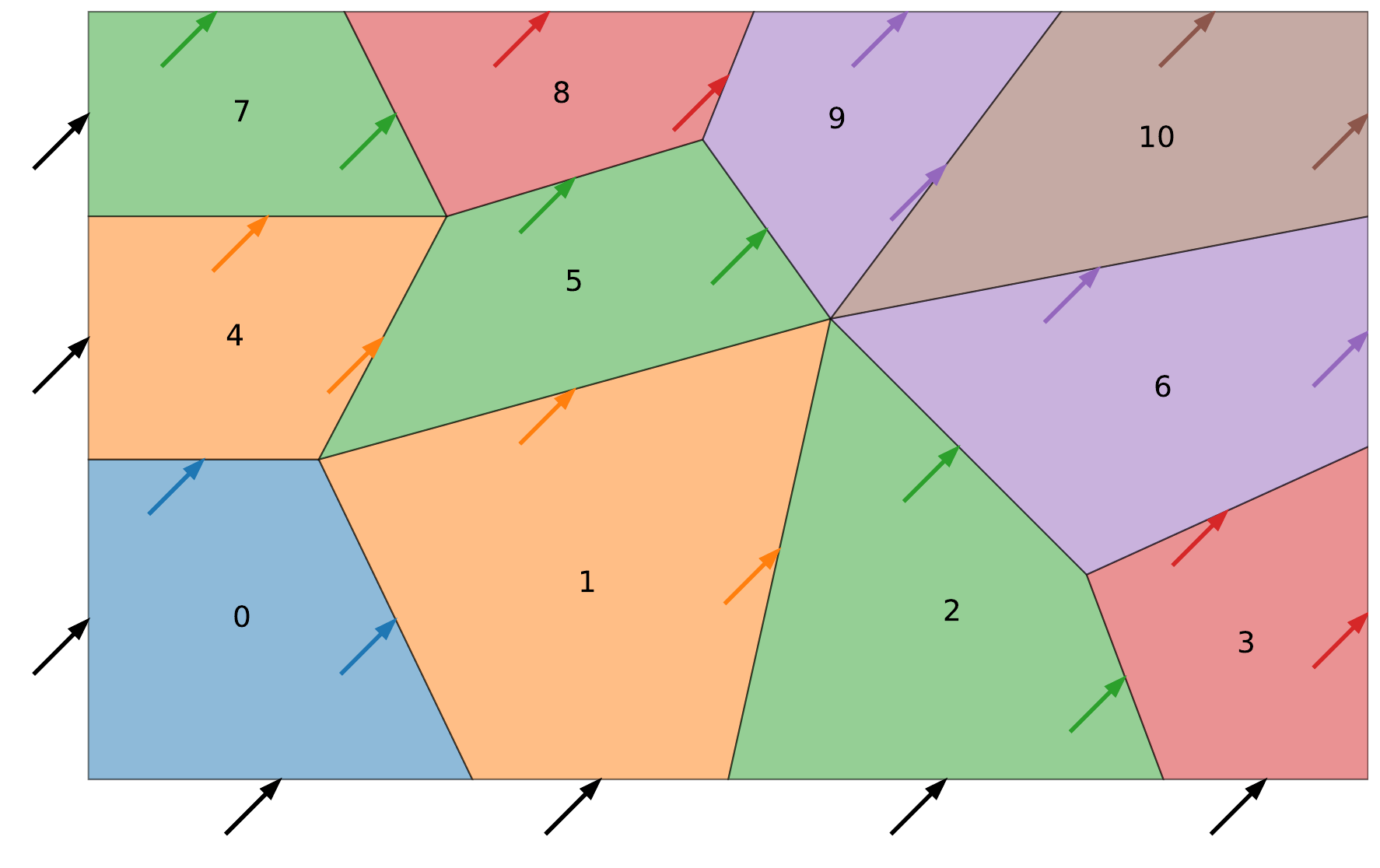}
        \caption{Mesh}
        \label{fig:unstruct-quad-sweep}
    \end{subfigure}
    \hfill
    \begin{subfigure}[b]{0.3\textwidth}
        \includegraphics[width=\textwidth]{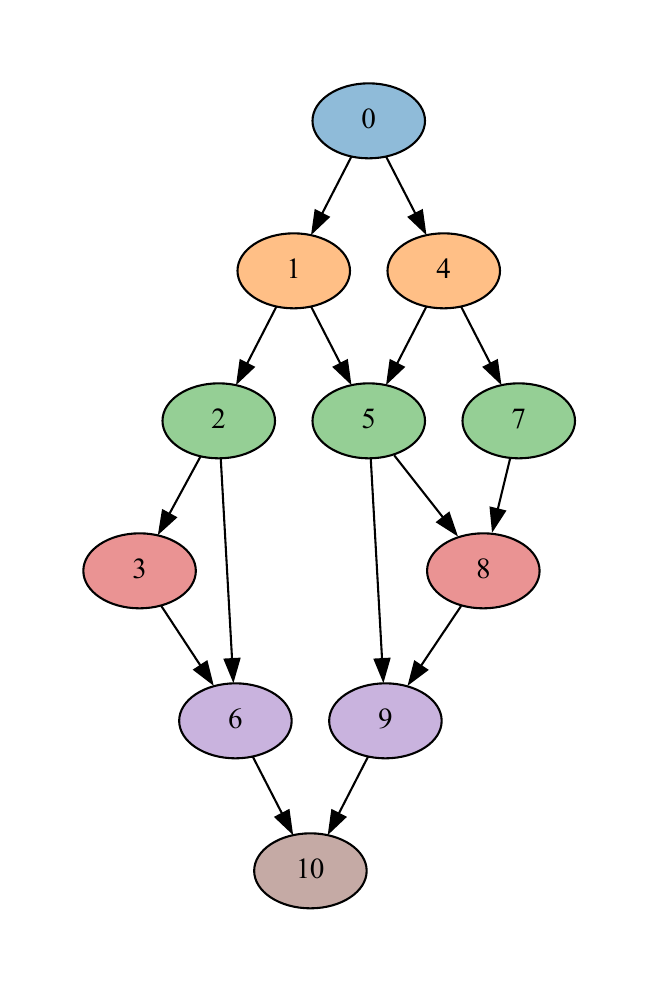}
        \caption{Induced graph}
        \label{fig:unstruct-graph}
    \end{subfigure}
    \caption{Transport sweep with no cycles. The label $i$ for each element
        corresponds to $\Kx^i$. Inflow and outflow faces for each mesh element
        are determined by the velocity $\vj= (0.56, 0.56)$, whose direction
        is indicated by the colored arrows. For this value of $\vj$, the sweep
        begins with $\Kx^0$. In particular, $\partial_\mbj^-\!\Kx^0$ requires
        only data from the boundary (black arrows); hence
        $f_{h,\mbj}^{(s),\ell+1}$ can be determined on $\Kx^0$ by inverting the
        left-hand side of \eqref{eq:bgk-ss-picard}, without using inflow
        information from any other cells. After the solve on $\Kx^0$, the flux
        on $\partial_\mbj^-\!\Kx^i$ is known for $i \in \{1,4\}$ (orange cells);
        thus, $f_{h,\mbj}^{(s),\ell+1}$ can be computed on these elements. The
        sweep continues to the next set of cells $\Kx^i$, $i \in \{2,5,7\}$
        (green cells), for which the inflow data is now known. The process
        continues until \eqref{eq:bgk-ss-picard} is solved for each $\Kx$. The
        process is then repeated for each $\mbj$.}
    \label{fig:sweep-graph}
\end{figure}
\begin{figure}
    \centering
    \begin{subfigure}[b]{0.45\textwidth}
        \centering
        \includegraphics[width=\textwidth]{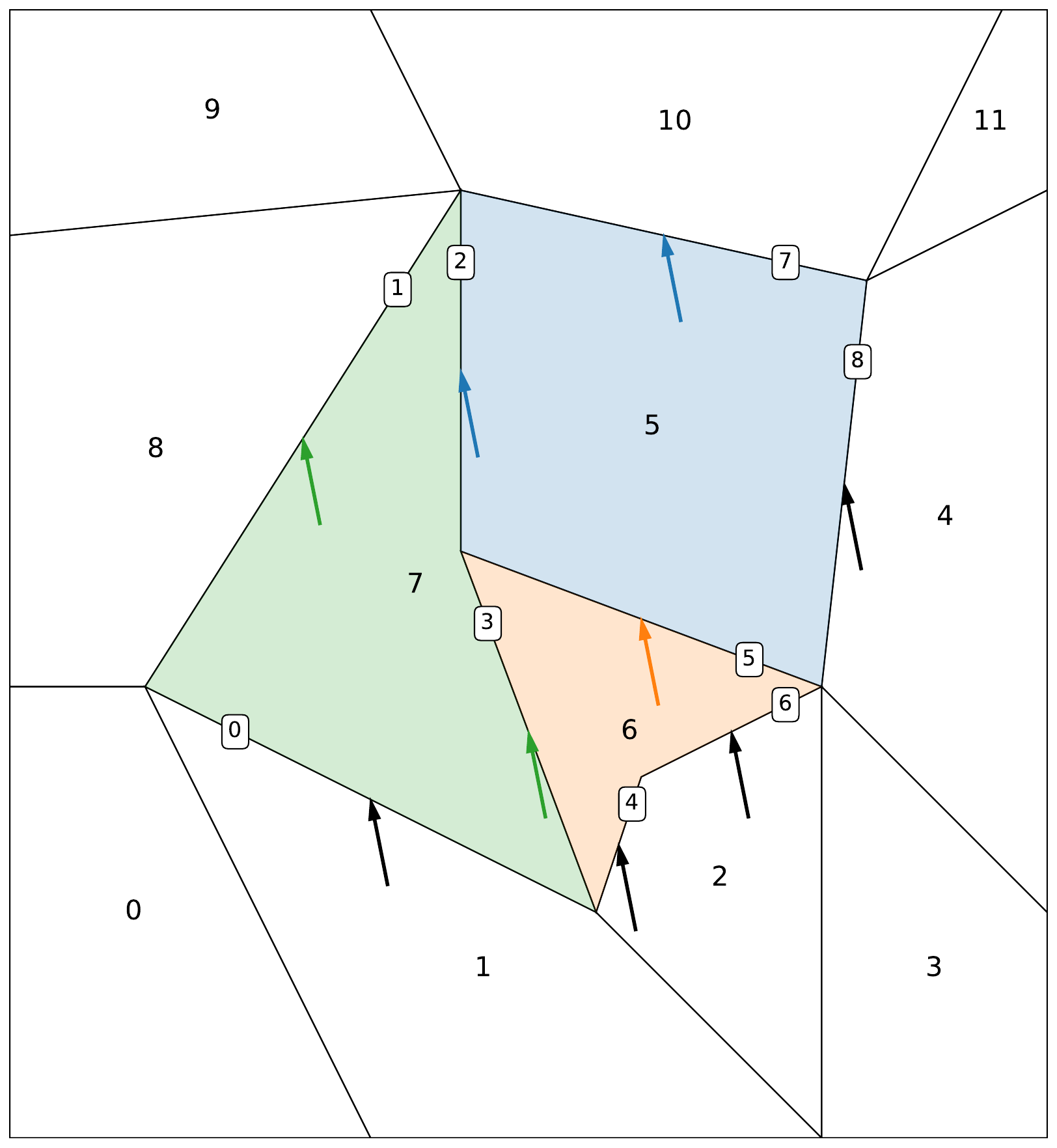}
        \caption{Mesh}
        \label{fig:cycle-5-6-7}
    \end{subfigure}
    \begin{subfigure}[b]{0.5\textwidth}
        \centering
        \includegraphics[width=.65\textwidth]{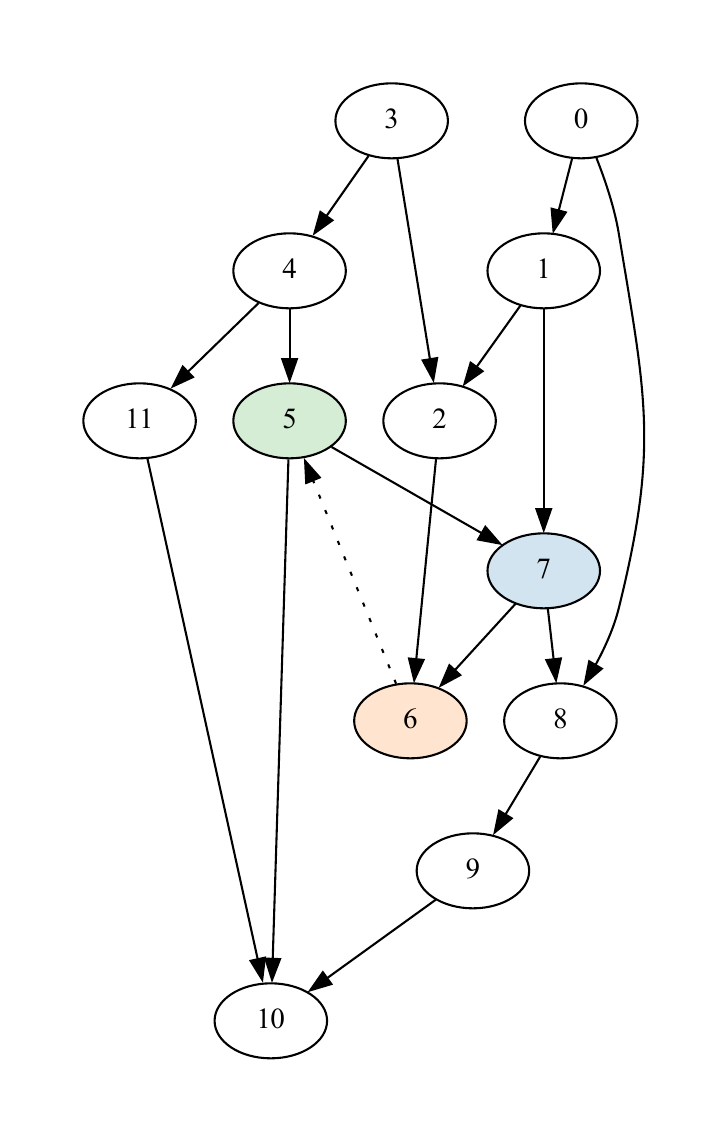}
        \caption{Induced graph}
    \end{subfigure}
    \caption{Transport sweep with a cycle.  The label $i$ for each element
        corresponds to the cell $\Kx^i$, while face $e^l$ between cells is
        identified by an integer $l$ inside a square box.  As before, the sweep
        velocity $\vj$ is indicated by colored arrows.  The cell $\Kx^5$
        receives fluxes from $\Kx^6$ (via face $e^5$); cell $\Kx^7$ receives
        fluxes from $\Kx^5$ (via $e^2$); and cell $\Kx^6$ receives fluxes from
        $\Kx^7$ (via $e^3$).  Hence  $\Kx^5\to\Kx^7\to\Kx^6 \to \Kx^5$ forms a
        cycle with this value of $\vj$.  The \ac{dfs} detects a cycle in edge
        $e^2 = (6, 5)$ of the induced graph. In order to perform the sweep, the
        cycle is broken by removing this edge and setting the numerical flux on
        $\partial_\mbj^-\!\Kx^5\cap\partial_\mbj^+\!\Kx^6$ using data from
        $f_{h,\mbj}^{(s),\ell}$; this action converts the graph to a \ac{dag}.}
    \label{fig:unstructured-cycle-5-6-7}
\end{figure}

We implement two platform portable sweep algorithms, both of which are
illustrated in \cref{fig:tg-v-tgs}. In the \ac{ts}, each thread traverses the
complete graph $\cG_\mbj$ induced by a velocity $\bmv_\mbj$, independently for
each $\mbj$. In the \ac{tgs}, an outer loop is performed over graph generations,
and in each generation the set of velocity points and cells are solved
concurrently. Here, the generation $\operatorname{gen}_\bmj(b)$ of any vertex $b
\in \cV$ is defined as the length of the longest directed path from any source
vertex, i.e.,
\begin{equation}
    \operatorname{gen}_\mbj(b)=
    \begin{cases}
        0,                                                       & b \in \cS_\mbj,   \\
        1+\max_{(a,b)\in \cE_{\bmj}} \operatorname{gen}_\mbj(a), & \text{otherwise}.
    \end{cases}
\end{equation}
The set $\cS_\mbj$ of source vertices, with respect to $\vj$, is given by
\begin{equation}
    \cS_\mbj = \{ a \in \cV :
    \text{data $f_{h,\mbj}^{(s),\ell+1}$ on $\p_\mbj^{-}{\Kx^a}^\uparrow$
        is known at the beginning of the sweep} \}.
\end{equation}
The set of all generation $g$ vertices is denoted by
\begin{equation}
    \operatorname{Gen}_\mbj(g)
    =\{a \in \cV : \operatorname{gen}_\mbj(a) = g\},
\end{equation}
and the generations for each graph are calculated using a longest-distance
algorithm on the \acp{dag}.

The total concurrent work in the \ac{ts} algorithm is
$\text{W}\textsuperscript{TS}=|\mbJ|=\dofv$, whereas the concurrent work in the
\ac{tgs} algorithm for generation $g$ is
\begin{equation}
    \text{W}\textsuperscript{TGS}(g)=\sum_\mbj |\operatorname{Gen}_\mbj(g)|,
\end{equation}
where $|\operatorname{Gen}_\mbj(g)|$ is the cardinality of
$\operatorname{Gen}_\mbj(g)$. For example, in \cref{fig:tg-v-tgs}
$|\operatorname{Gen}_{0,0}(3)|=3$ and $|\operatorname{Gen}_{4,0}(3)|=2$ for the
first two velocity points, respectively.
\begin{figure}
    \centering
    \includegraphics[width=\textwidth]{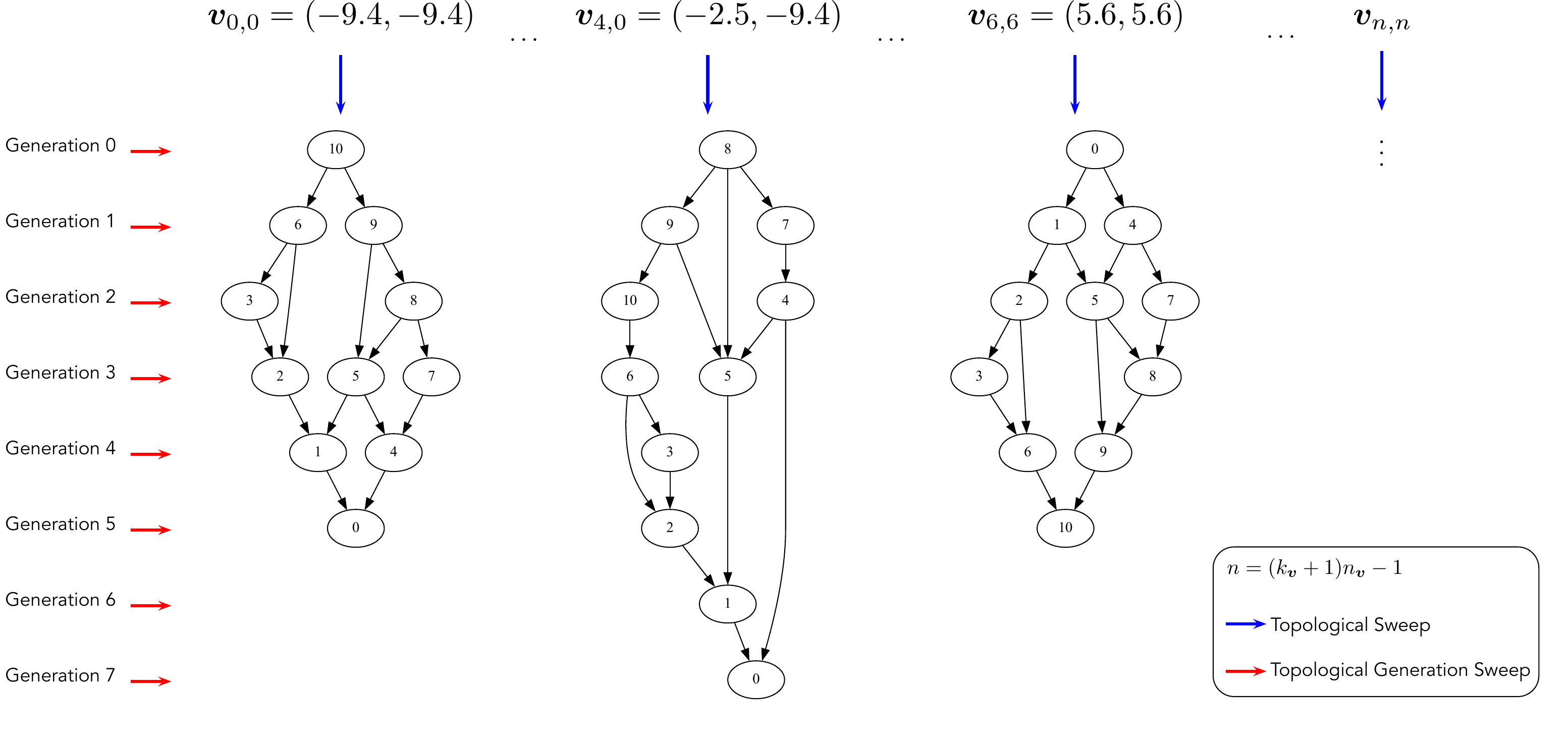}
    \caption{\acs{ts} versus \acs{tgs} sweep strategies using the unstructured
        grid in \cref{fig:unstruct-quad-sweep}.}
    \label{fig:tg-v-tgs}
\end{figure}

%%- - - - - - - - - - - - - - - - - - - - - - - - - - - - - - - - - - - - - -%%
\subsection{Boundary conditions}
\label{sec:boundary-conditions}

We consider three types of boundary conditions:
\begin{enumerate}
    \item Inflow condition
          \begin{equation}
              f^{\rm{bc}}(\bm{x},\bm{v},t) = g(\bm{x},\bm{v},t),
          \end{equation}
          where $g$ is given and independent of $f$.
    \item Bounce-back condition
          \begin{equation}
              f^{\rm{bc}}(\bm{x},\bm{v},t)= f(\bm{x},-\bm{v},t).
          \end{equation}
    \item Reflection condition
          \begin{equation}
              f^{\rm{bc}}(\bmx,\bmv,t)= f(\bm{x},R_{\bmn(\bmx)}(\bmv),t),
          \end{equation}
          where $R_{\bmn}(\bmv) = \bmv - 2(\bmv\cdot\bmn)\bmn$.
\end{enumerate}

\begin{prop}
    For any smooth function $\phi = \phi(\bmv)$ and any $\bmx \in \p X$ with
    well-defined normal $\bmn(\bmx)$, the bounce-back condition implies that
    \begin{equation}
        \label{eq:bounce-back-moments}
        \int_{\bbR^3} \phi(\bmv) f(\bm{x},\bm{v},t) d\bmv
        = \int_{\bmv \cdot \bmn(\bmx) > 0} [\phi(\bmv) + \phi(-\bmv)] f(\bm{x},\bm{v},t) d \bmv,
    \end{equation}
    and the reflection boundary condition implies that \begin{equation}
        \label{eq:reflection-moments}
        \int_{\bbR^d} \bmv \cdot \bmn(\bmx)
        \phi(\bmv) f(\bm{x},\bm{v},t) d \bmv
        = \int_{\bmv \cdot \bmn(\bmx) > 0} \bmv \cdot \bmn(\bmx) [\phi(\bmv) -
            \phi(R_{\bmn(\bmx)}(\bmv))] f(\bm{x},\bm{v},t) d \bmv.
    \end{equation}
    In particular, if $n(\bmx) > 0$%
    \footnote{$n(\bmx)$ is the density at $\bmx$; $\bmn(\bmx)$ is the normal.}
, then the bounce back condition implies
    $\bmu(\bmx) = 0$ and the reflection condition implies that $\bmn(\bmx) \cdot
        \bmu(\bmx) = 0$.
\end{prop}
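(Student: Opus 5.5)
The plan is to split velocity space at the boundary point $\bmx$ into the outflow half-space $\{\bmv\cdot\bmn(\bmx)>0\}$ and the inflow half-space $\{\bmv\cdot\bmn(\bmx)<0\}$; the hyperplane $\{\bmv\cdot\bmn=0\}$ has measure zero and can be ignored. On the inflow half-space, $(\bmx,\bmv)\in\p Z^-$, so the boundary condition \eqref{eq:bgk-bc} replaces $f$ by $f^{\rm{bc}}$. Then a change of variables maps the inflow integral back onto the outflow half-space.

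\emph{Bounce-back.} We write
\begin{equation*}
\int \phi f\,d\bmv=\int_{\bmv\cdot\bmn>0}\phi(\bmv) f(\bmx,\bmv,t)\,d\bmv+\int_{\bmv\cdot\bmn<0}\phi(\bmv)f(\bmx,-\bmv,t)\,d\bmv .
\end{equation*}
In the second integral we substitute $\bmv\mapsto-\bmv$. This map is measure preserving and exchanges the two half-spaces, so the second integral becomes $\int_{\bmv\cdot\bmn>0}\phi(-\bmv)f(\bmx,\bmv,t)\,d\bmv$, which gives \eqref{eq:bounce-back-moments}.

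\emph{Reflection.} We write the left side of \eqref{eq:reflection-moments} the same way, using $f(\bmx,\bmv)=f(\bmx,R_{\bmn}\bmv)$ on the inflow half-space. Then we substitute $\bmv=R_{\bmn}\bmw$. The map $R_{\bmn}$ is an orthogonal involution with $|\det|=1$. It satisfies $R_{\bmn}\bmw\cdot\bmn=-\bmw\cdot\bmn$, so it swaps the half-spaces, and the factor $\bmv\cdot\bmn$ becomes $-\bmw\cdot\bmn$. The inflow integral therefore becomes $-\int_{\bmw\cdot\bmn>0}(\bmw\cdot\bmn)\phi(R_{\bmn}\bmw)f(\bmx,\bmw)\,d\bmw$, which gives \eqref{eq:reflection-moments}.

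For the consequences, in the bounce-back case we take $\phi(\bmv)=\bmv$ componentwise. Since $\phi(\bmv)+\phi(-\bmv)=0$, the identity gives $n\bmu=\int\bmv f\,d\bmv=0$, and $n>0$ yields $\bmu=0$. In the reflection case we take $\phi\equiv1$. Since $\phi(\bmv)-\phi(R_{\bmn}\bmv)=0$, we get $\int(\bmv\cdot\bmn)f\,d\bmv=0$, i.e.\ $n\,\bmu\cdot\bmn=0$, hence $\bmn\cdot\bmu=0$.

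The only delicate point is justifying the integrals, i.e.\ integrability of $\phi f$ and $(\bmv\cdot\bmn)\phi f$ at the boundary trace. I would handle this by assuming, as is implicit in the paper, that $f$ has finite moments of the relevant orders, so that splitting and substituting are legitimate. The integration domain ($\bbR^3$ versus $\bbR^d$) is understood as $\bbR^d$ throughout.
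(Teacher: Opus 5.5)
Your proof is correct and follows essentially the same route as the paper: split velocity space at $\bmv\cdot\bmn(\bmx)=0$, use the boundary condition on the inflow half-space, and change variables with a measure-preserving involution that flips the normal component. The paper instead proves a single identity for a general linear $T$ with $T^2=I$ and $(T\bmv)\cdot\bmn=-\bmv\cdot\bmn$, then specializes to $T=-I$ with $\psi=\phi$ and to $T=R_{\bmn}$ with $\psi=(\bmv\cdot\bmn)\phi$.
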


\begin{proof}
    Fix $\bmx$ and let suppose that $f(\bm{x},\bm{v},t) = f(\bm{x},T \bm{v},t)$ whenever $\bmv \cdot \bmn(\bmx) <0$, where $T:\bbR^d \to \bbR^d$ is any linear transformation such
    that $T^2=I$, $(T \bmv) \cdot \bmn(\bmx) = - \bmv \cdot \bmn(\bmx)$,  The
    mappings $\bmv \mapsto -\bmv$ and $\bmv \mapsto R_{\bmn(\bmx)} \bmv$ both
    satisfy these conditions.  Let $\psi = \psi(\bmv)$.  Then it is easy to show
    that
    \begin{equation}
        \int_{\bbR^d} \psi(\bmv) f(\bm{x},\bm{v},t) d \bmv
        = \int_{\bmv \cdot \bmn(\bmx) > 0} [\psi(\bmv) +
            \psi(T \bmv)] f(\bm{x},\bm{v},t) d \bmv.
    \end{equation}
    Setting $\psi = \phi$ and $T = -I$ recovers \eqref{eq:bounce-back-moments}.
    In particular, if $\phi(\bmv) = \bmv$, then $\bmu (\bmx)  = 0$.  Setting
    $\psi(\bmv) = \bmv \cdot \bmn(\bmx) \phi(\bmv)$ and  $T = R_{\bmn}(\bmx)$
    recovers \eqref{eq:reflection-moments}.  In particular, if $\phi(\bmv) = 1$,
    then $\bmu(\bmx) \cdot  \bmn(\bmx) = 0$.
\end{proof}

The conditions $\bmu (\bmx) = 0$ and $\bmu(\bmx) \cdot \bmn(\bmx) = 0$
correspond to standard no-slip and free-slip boundary conditions for fluid
models. For simplicity, we assume in this work rectangular spatial domains such
that the normal $\bmn(\bmx)$ at every point $\bmx \in \p X$ is aligned with one
of the coordinates axes. Moreover, we assume for any interpolation point $\vj$,
that $-\vj$ and $R_{\bmn}(\vj)$ are also interpolation points. This assumption
ensures that \eqref{eq:bounce-back-moments} and \eqref{eq:reflection-moments}
hold at the discrete velocity level. After discretizing in space,
\eqref{eq:reflection-moments} will hold at each interpolation point on $\p X$.
In particular the number density flux $\phi(\bmv)=1$, and the energy density
flux $\phi(\bmv)=\frac12|\bmv|^2$ will be conserved. However, the conditions in
\eqref{eq:bounce-back-moments} can only be enforced weakly.

In future work, more general scenarios will be considered. This includes cases
for which the normal $\bmn(\bmx)$ does not align with the $\bmx$-coordinate
axes, as well as moving geometries and internal structures.

%%---------------------------------------------------------------------------%%

%%---------------------------------------------------------------------------%%
\section{Numerical Results}
\label{sec:results}

The complete discretization and iterative solver is implemented in the Spinnaker
\ac{cfd} codebase using the platform-portable Kokkos library
\cite{kokkos,kokkos-ecosystem}. This allows the sweep-based \ac{bgk} solver to
run optimally on serial, OpenMP, and GPU (\nvidia, \amd) devices, as discussed
in \cref{sec:solver-performance}. \amd GPU results were performed on the
Frontier supercomputer at the \ac{olcf} \cite{frontier}. Each node of Frontier
contains 4~\ac{mi250x} GPUs, each with 2 \acp{gcd}, which yields an effective
equivalent of 8 GPUs per node. In all Frontier simulations, GPU runs were
executed with 1 MPI rank per \ac{gcd}. Each \ac{mi250x} \ac{gcd} has
\SI{64}{\giga\byte} of memory. Full descriptions of the Frontier system are
available in the Frontier User Guide \cite{frontier-user-guide}. CPUs in all
cases are \amd EPYC 64- and 96-core processors, split into \ac{numa} partitions
of 16 cores each. Each physical core has 2 hardware threads; however, all the
results that follow were obtained utilizing only a single thread per physical
core. \nvidia results were executed on local clusters containing V100
(\SI{32}{\giga\byte}), A100 (\SI{80}{\giga\byte}), and H100
(\SI{80}{\giga\byte}) cards.

The stability restraint for explicit \ac{rk} \ac{dgfem} schemes is
\begin{equation}
    \dt_\mathrm{e} \le
    \frac{1}{d} \frac{1}{2 \kx+1}\frac{h^{\min}_\bmx}{\max_\mbj |\vj|},
    \label{eq:explicit-dt}
\end{equation}
where, unless otherwise defined,
$h^{\min}_\bmx\sim(\text{vol}^{\min}_\bmx)^{1/d}$ and $L$ is the size of
computational velocity domain $V$ as described in
\cref{sec:velocity-discretization}, i.e., $V = [-L,L]^d$.  All simulations use a
constant timestep, $\dt = C\dt_\mathrm{e}$, in which $C>1$ is specified at
runtime.

In Section~\ref{sec:sod}, we consider variations of the Sod shock problem
\cite{toro2013riemann}. We perform simulations in 2-D and 3-D geometries using
triangular, quadrilateral, and tetrahedral spatial elements and compare to
solutions of the corresponding Euler equations in collisional regimes.
\Cref{sec:duct} shows flow through a duct, and \cref{sec:solver-performance}
analyzes solver performance on different architectures.

%%---------------------------------------------------------------------------%%
\subsection{Sod shock problems}
\label{sec:sod}

All variations of the Sod problem involve a shock wave propagating from a high
($H$) density region to a low density region ($X\setminus H$), and the velocity
domain $V = [-7,7]^d$. In all versions of the problem, the initial condition is
a local equilibrium: $f^{\rm{ic}} = M_{\nic,\uic,\tic}$, where
\begin{equation}
    \nic(\bm{x}) = \begin{cases}
        1     , & \bm{x} \in H,             \\
        0.125 , & \bm{x} \in X \setminus H,
    \end{cases}
    \qquad
    \uic(\bm{x}) = \begin{pmatrix}
        0 \\ 0 \\ 0
    \end{pmatrix} \, \quad \bm{x} \in X,
    \qquad
    \tic(\bm{x}) = \begin{cases}
        1   & \bm{x} \in H,             \\
        0.8 & \bm{x} \in X \setminus H.
    \end{cases}
\end{equation}
Unless otherwise stated, the boundary condition is
\begin{equation}
    f^{\rm{bc}}(\bm{x},\bm{v},t) =
    f^{\rm{ic}}(\bm{x},\bm{v}).
\end{equation}

%% - - - - - - - - - - - - - - - - - - - - - - - - - - - - - - - - - - - - - %%
\subsubsection{1D Sod and Maxwellian projection}
\label{sec:1d-sod}

The 1D Sod problem is embedded in a 2D geometry
\begin{equation}
    \begin{aligned}
        X & = \{\bm{x} = (x_1,x_2) : -L_1 \leq x_1 \leq L_1,
        -L_2 \leq x_2 \leq L_2\},                                                    \\
        H & = \{\bm{x} = (x_1,x_2) : -L_1 \leq x_1 \leq 0, -L_2 \leq x_2 \leq L_2\},
    \end{aligned}
\end{equation}
where $(L_1,L_2) = (1,3/128)$.   The boundary conditions on the bottom and top
boundaries $\{\bmx = (x_1,x_2) \in \p X: x_2 = \pm L_2 \}$ are reflection, as
defined in \cref{sec:boundary-conditions}. The mesh is a \num{256x6} uniform
grid with using 2D quadrilaterals

In \Cref{fig:1d-sod-results}, we present first- and second-order \ac{dirk}
results for the Sod problem with $\nu \in \{1,1000\}$ and $n_\bmv = 128$. The
problem is run for \num{85} timesteps with $\dt=1.76\times10^{-3}$ to reach a
final time of $t=0.15$. The timestep is approximately $7\times$ the explicit
\ac{dgfem} timestep restriction (\cref{eq:explicit-dt}). Reference solutions in
the fluid limit ($\nu = 1000$) are generated using the \emph{ToroExact} package
\cite{toroexact} that employs the exact Riemann solvers from
\cite{toro2013riemann} with $\gamma = \frac{d+2}{2} = 2$.
\Cref{fig:1d-sod-results-1000-d1,fig:1d-sod-results-1000-d2} show that the
\ac{bgk} solver properly captures the shock fronts in the fluid limit as $\nu
\gg 1$.  \Cref{tab:1d-sod-iterations} summarizes the non-linear Picard
iterations for DIRK-1 and DIRK-2 schemes as a function of $n_\bmv$ and $\nu$. In
all cases, the total number of iterations is relatively insensitive to the
number of velocity points until $n_\bmv=128$. The number of iterations are
roughly \numrange{5.5}{5.8} times greater in the fluid limit compared to the
rarefied gas limit for DIRK-1 and \numrange{2.7}{3.3} times greater for DIRK-2.
\begin{figure}
    \centering
    \begin{subfigure}{0.45\textwidth}
        \centering
        \includegraphics[width=\textwidth]{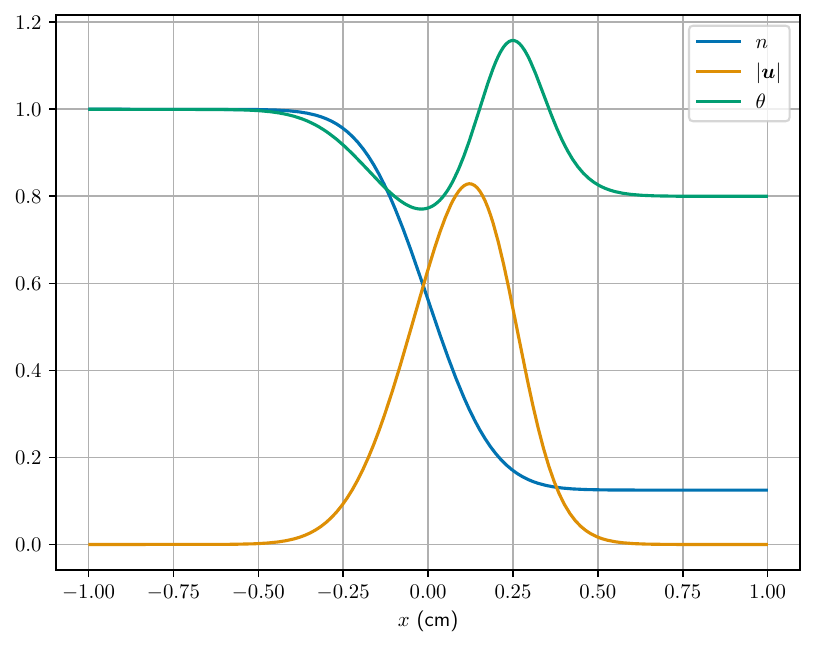}
        \caption{DIRK-1, $\nu = 1$}
    \end{subfigure}
    \begin{subfigure}{0.45\textwidth}
        \centering
        \includegraphics[width=\textwidth]{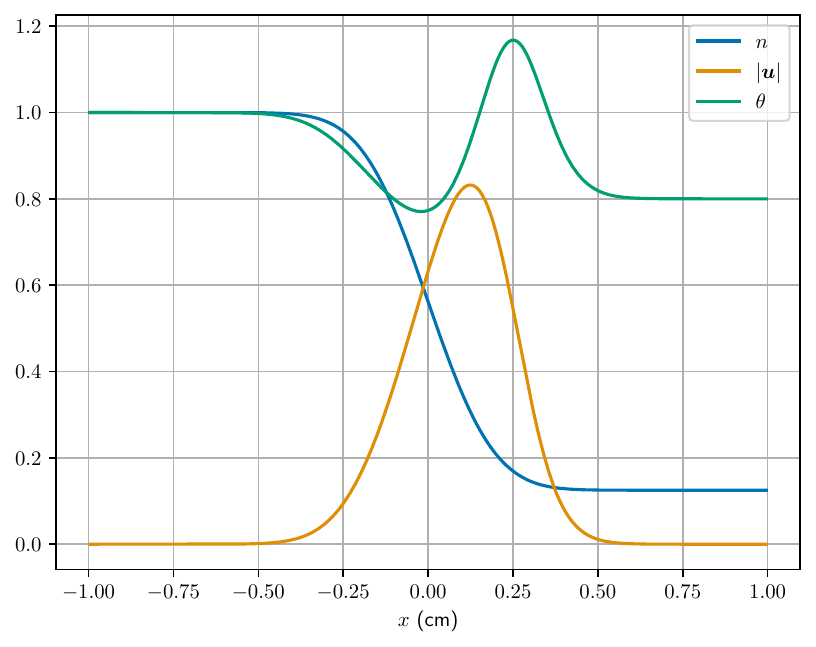}
        \caption{DIRK-2, $\nu = 1$}
    \end{subfigure}\\
    \begin{subfigure}{0.45\textwidth}
        \centering
        \includegraphics[width=\textwidth]{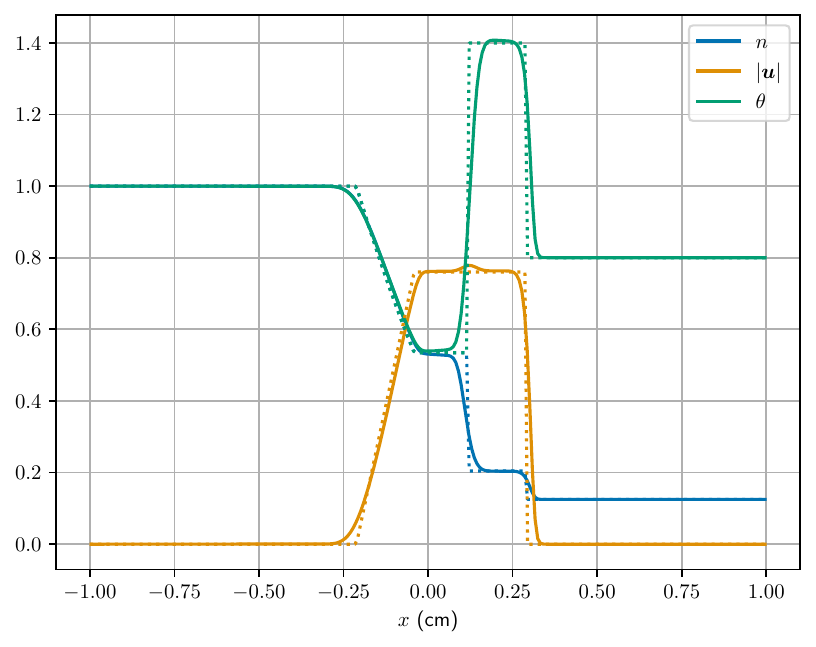}
        \caption{DIRK-1, $\nu = 1000$}
        \label{fig:1d-sod-results-1000-d1}
    \end{subfigure}
    \begin{subfigure}{0.45\textwidth}
        \centering
        \includegraphics[width=\textwidth]{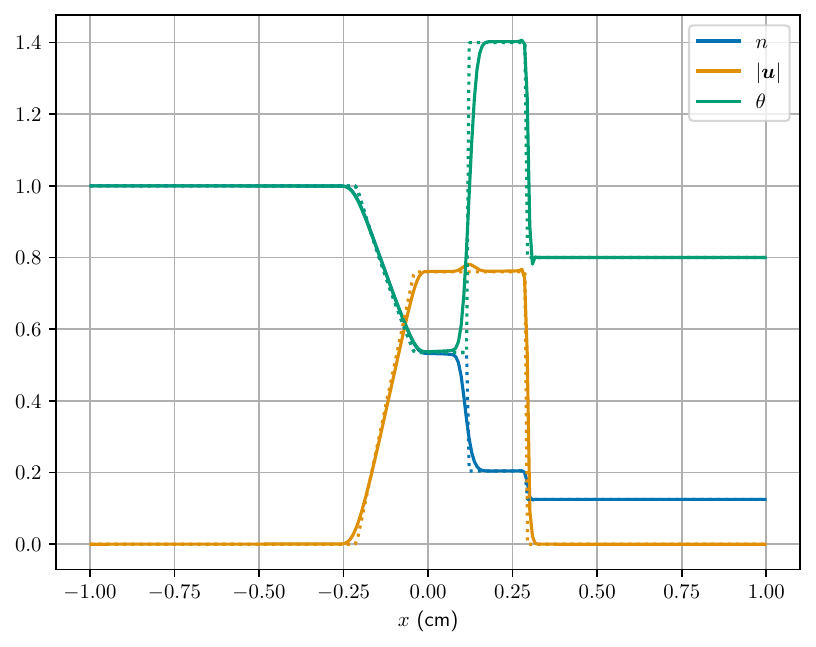}
        \caption{DIRK-2, $\nu = 1000$}
        \label{fig:1d-sod-results-1000-d2}
    \end{subfigure}\\
    \caption{1D Sod Problem results with $\nu \in \{1,1000\}$.  In (c-d)
        dotted lines represent the reference solutions for the Euler equations.}
    \label{fig:1d-sod-results}
\end{figure}

\begin{table}
    \centering
    \caption{Iteration counts for the 1D SOD problem.}
    \label{tab:1d-sod-iterations}
    \begin{tabular}{llllllllll}\toprule
                                                      &          & \multicolumn{6}{c}{per timestep} &    &                                       \\\cmidrule(lr){3-8}
        $\nu $                                        & $n_\bmv$ &
        \multicolumn{2}{c}{min}                       &
        \multicolumn{2}{c}{max}                       &
        \multicolumn{2}{c}{avg}                       &
        \multicolumn{2}{c}{tot}                                                                                                                  \\\cmidrule(lr){3-4} \cmidrule(lr){5-6}
        \cmidrule(lr){7-8} \cmidrule(lr){9-10}
        \multicolumn{2}{l}{$\Delta t\ \text{order}=$} & 1        & 2                                & 1  & 2  & 1  & 2     & 1     & 2           \\\midrule
        1                                             & 4        & 4                                & 7  & 5  & 9  & 4.32  & 7.38  & 367  & 627  \\
        1                                             & 8        & 4                                & 6  & 5  & 9  & 4.22  & 6.75  & 359  & 574  \\
        1                                             & 16       & 4                                & 6  & 5  & 9  & 4.21  & 6.54  & 358  & 556  \\
        1                                             & 32       & 4                                & 6  & 5  & 9  & 4.21  & 6.53  & 358  & 555  \\
        1                                             & 64       & 4                                & 6  & 5  & 9  & 4.21  & 6.53  & 358  & 555  \\
        1                                             & 128      & 4                                & 8  & 7  & 11 & 4.59  & 8.07  & 390  & 686  \\
        \cmidrule(lr){1-10}
        1000                                          & 4        & 23                               & 20 & 30 & 28 & 24.29 & 21.28 & 2065 & 1809 \\
        1000                                          & 8        & 23                               & 20 & 30 & 28 & 24.32 & 21.38 & 2067 & 1817 \\
        1000                                          & 16       & 23                               & 20 & 30 & 28 & 24.32 & 21.39 & 2067 & 1818 \\
        1000                                          & 32       & 23                               & 20 & 30 & 28 & 24.32 & 21.39 & 2067 & 1818 \\
        1000                                          & 64       & 23                               & 20 & 30 & 28 & 24.32 & 21.39 & 2067 & 1818 \\
        1000                                          & 128      & 23                               & 21 & 32 & 28 & 25.06 & 21.84 & 2130 & 1856 \\
        \bottomrule
    \end{tabular}
\end{table}

We repeat the Sod test for different velocity quadratures to show the benefits
of using the projected Maxwellian described in \ref{app:velocity-projection},
which guarantees the discrete \ac{bgk} operator maintains the proper collision
invariants as opposed to a standard interpolation in velocity. In
\cref{fig:1d-sod-projection} we show $L^2$ errors for both the Maxwellian
projection and interpolation schemes as a function of $n_\bmv$ and with $\nu \in
\{1,1000\}$. In the rarefied gas region, the interpolation and projection
schemes perform equally well at $n_\bmv\ge 8$. However, in the fluid limit,
there is a pronounced effect below $n_\bmv = 16$, and the projected scheme
yields roughly an order-of-magnitude smaller error out to $n_\bmv = 64$. When
$\nu = 1$, it takes significantly more velocity degrees of freedom to obtain
accurate results regardless of the Maxwellian approximation used.
\begin{figure}
    \centering
    \begin{subfigure}{0.45\textwidth}
        \centering
        \includegraphics[width=\textwidth]{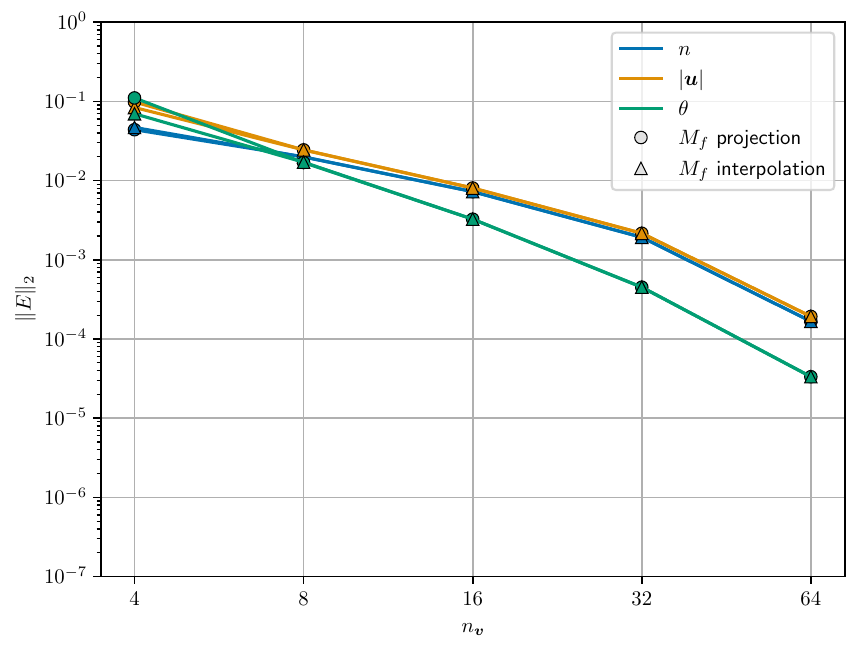}
        \caption{DIRK-1, $\nu = 1$}
    \end{subfigure}
    \begin{subfigure}{0.45\textwidth}
        \centering
        \includegraphics[width=\textwidth]{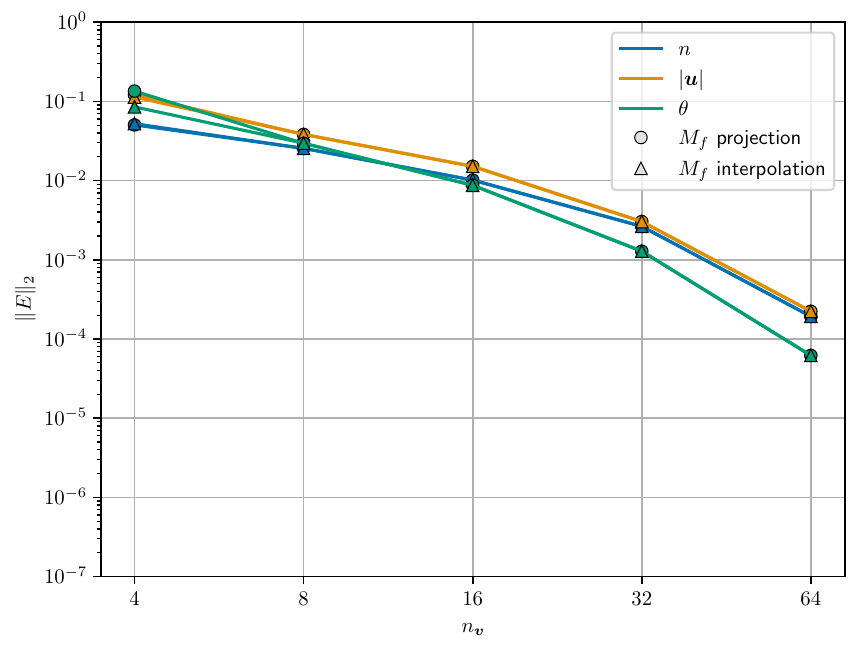}
        \caption{DIRK-2, $\nu = 1$}
    \end{subfigure}\\
    \begin{subfigure}{0.45\textwidth}
        \centering
        \includegraphics[width=\textwidth]{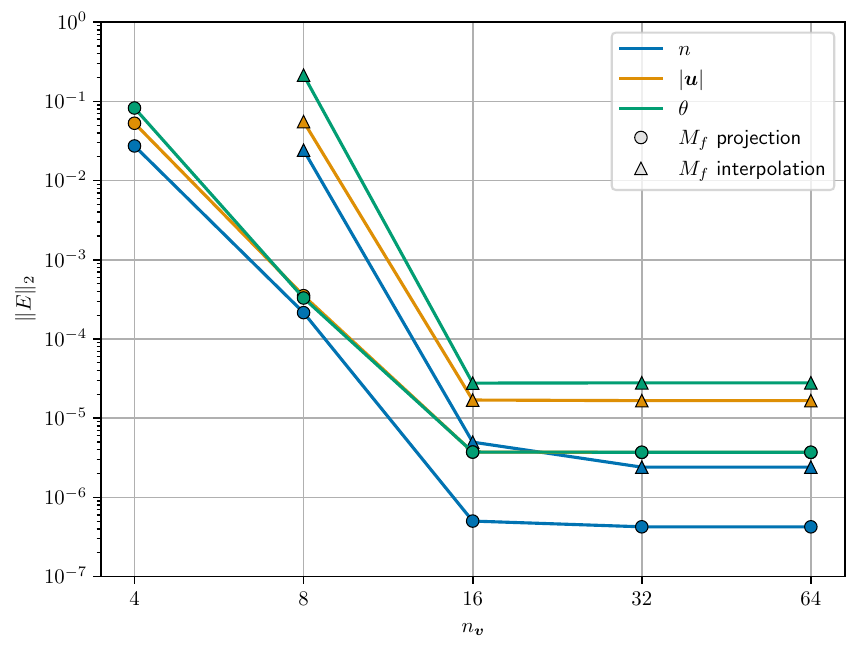}
        \caption{DIRK-1, $\nu = 1000$}
    \end{subfigure}
    \begin{subfigure}{0.45\textwidth}
        \centering
        \includegraphics[width=\textwidth]{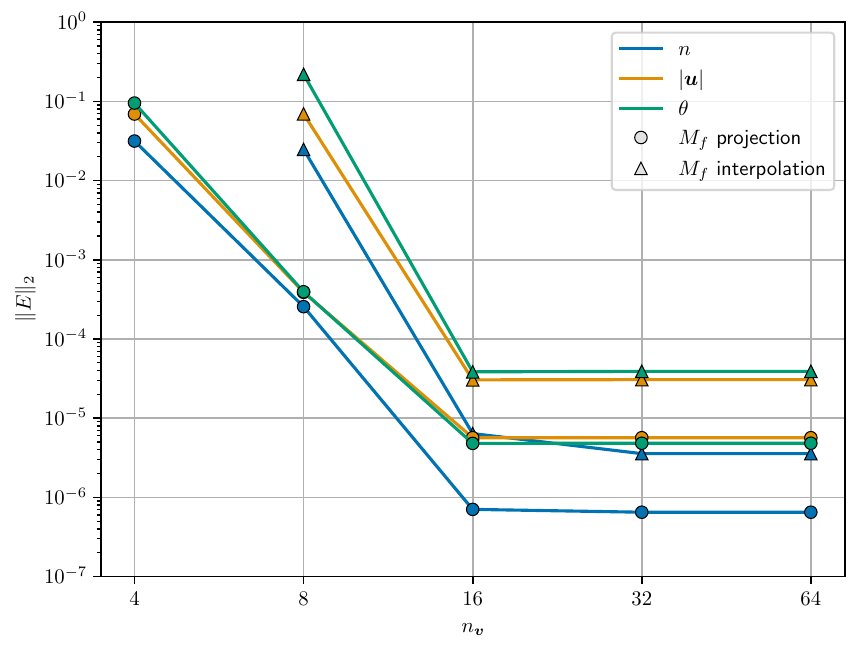}
        \caption{DIRK-2, $\nu = 1000$}
    \end{subfigure}\\
    \caption{1D Sod Problem results with $\nu \in \{1,1000\}$ showing $L^2$
        error norms for Maxwellian projection versus interpolation.  The
        reference cases for the error norms use $n_\bmv=128$ velocity quadrature
        cells.}
    \label{fig:1d-sod-projection}
\end{figure}

%% - - - - - - - - - - - - - - - - - - - - - - - - - - - - - - - - - - - - - %%
\subsubsection{Multi-dimensional Sod and stability}

The multi-dimensional Sod problem is defined on a the spatial domain
$X=[0,2]^d$, and $H$ is a closed ball centered at $\bm{x}_{\rm{c}} =
(1,1,1)^\top$ with radius $r = 0.4$. Reference solutions for highly collisional
regimes were obtained by solving the Euler equations with a gamma-law equation
of state in one spatial dimension, using cylindrical coordinates with $\gamma=2$
for the 2D problem and spherical-polar coordinates with $\gamma=5/3$ for the 3D
problem. Reference solution simulations employed the \ac{rk} \ac{dgfem} as
implemented in \cite{endeve_etal_2019,pochik_etal_2021}. For these calculations,
the computational domain was discretized using \num{2000} quadratic elements,
and time integration was performed using a third-order strong
stability-preserving \ac{rk} scheme.

Kinetic simulation of the 2D Sod problem is performed on triangular and
quadrilateral grids shown in \cref{fig:2d-grids} at $t =0.15$ and $n_\bmv =
    16$. \Cref{fig:2d-quad-results} show results for DIRK-1 and DIRK-2 with
$\nu\in\{1000,10000\}$ and $\Delta t=3.7 \times 10^{-4}$ ($7\times$ the
explicit timestep). As $\nu$ increases, the solution reproduces the fluid limit
reference more closely, and the higher order \ac{dirk} scheme resolves shock
fronts slightly better.

The ability to capture shock fronts is affected by both the resolution and shape
of the grids. In particular, the triangular grids create a sawtooth pattern at
the edges of the shock that can be observed in \cref{fig:2d-grids-tri}. The
quadrilateral grids, on the other hand, are able to smoothly follow the
cylindrical shock fronts as observed in \cref{fig:2d-grids-quad}. Thus, the
triangular grids require nearly four times as many elements as the quadrilateral
grid to achieve similar resolution. \Cref{fig:2d-quad-tri-comparison} shows that
with significant mesh resolution, the triangular grids produce nearly identical
solutions as the quadrilateral grids.
\begin{figure}
    \centering
    \begin{subfigure}{0.48\textwidth}
        \centering
        \includegraphics[width=\textwidth]{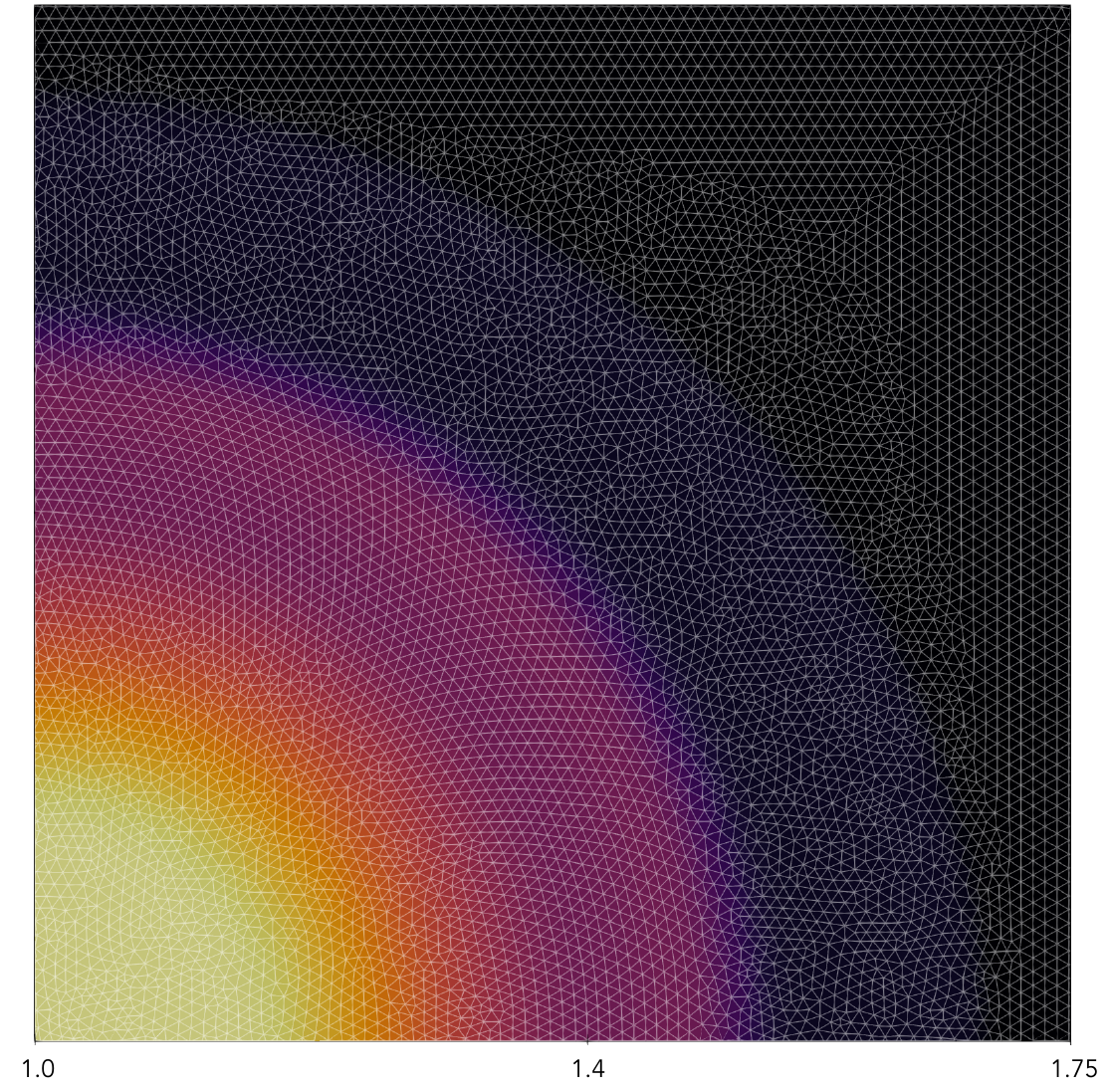}
        \caption{}
        \label{fig:2d-grids-tri}
    \end{subfigure}
    \begin{subfigure}{0.48\textwidth}
        \centering
        \includegraphics[width=\textwidth]{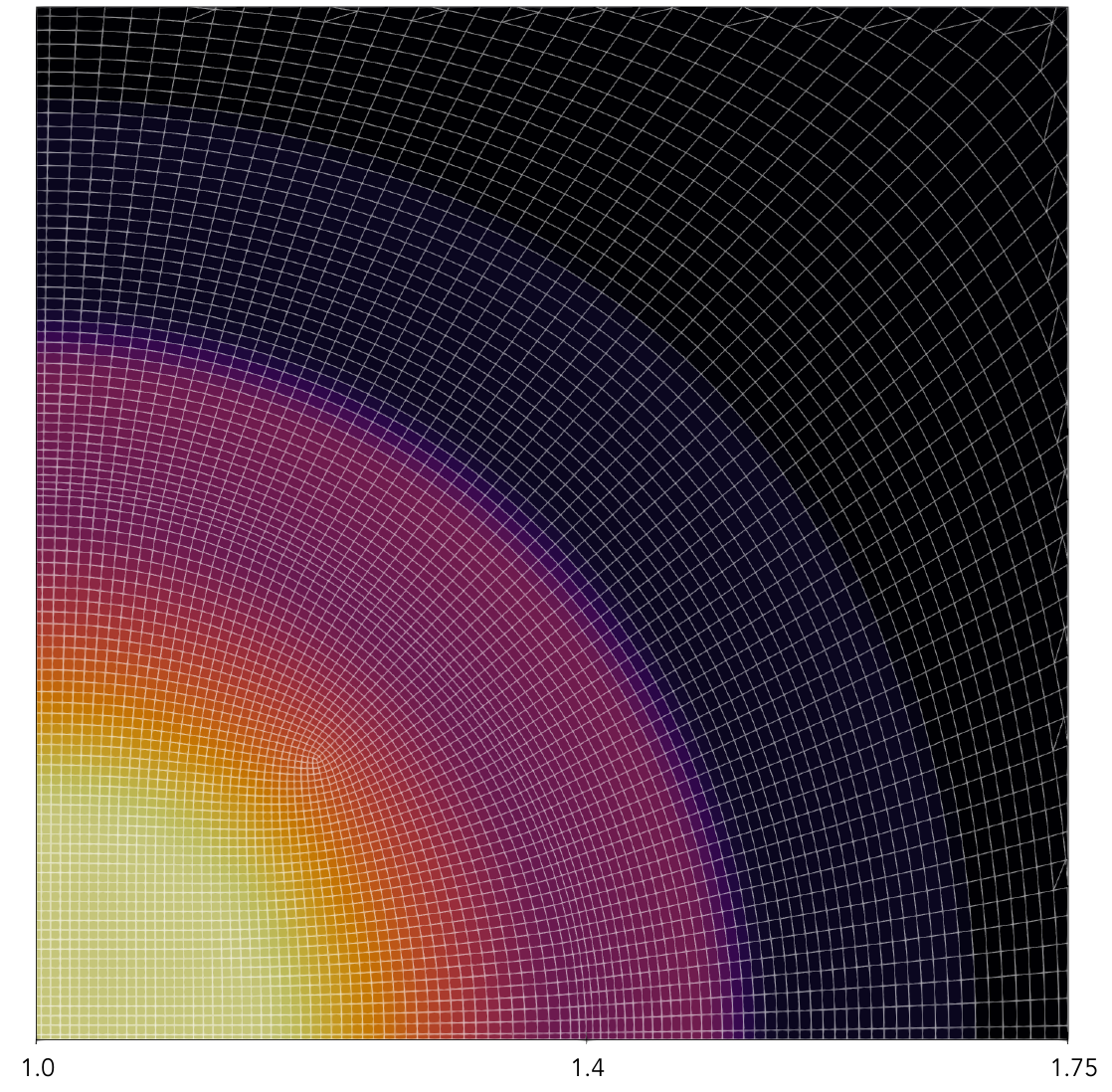}
        \caption{}
        \label{fig:2d-grids-quad}
    \end{subfigure}
    \caption{2D grid slices used for the Sod problem showing density at
        \SI{0.15}{\second}, (a) triangular grid with \num{98880} elements and (b)
        quadrilateral grid with \num{25545} elements.  The full extents of both
        grids are $X = [0,2]^2$.}
    \label{fig:2d-grids}
\end{figure}
\begin{figure}
    \centering
    \begin{subfigure}{0.45\textwidth}
        \centering
        \includegraphics[width=\textwidth]{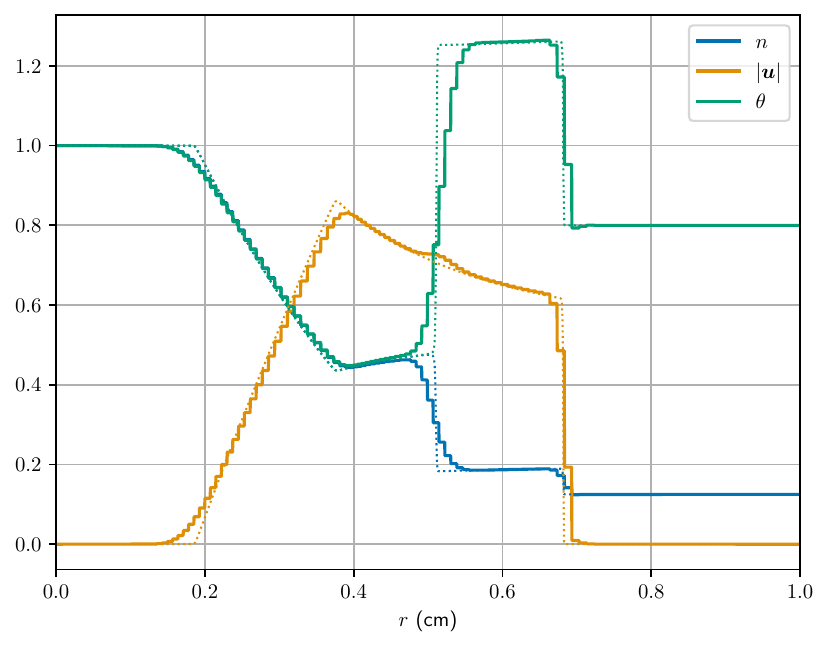}
        \caption{DIRK-1, $\nu = 1000$}
    \end{subfigure}
    \begin{subfigure}{0.45\textwidth}
        \centering
        \includegraphics[width=\textwidth]{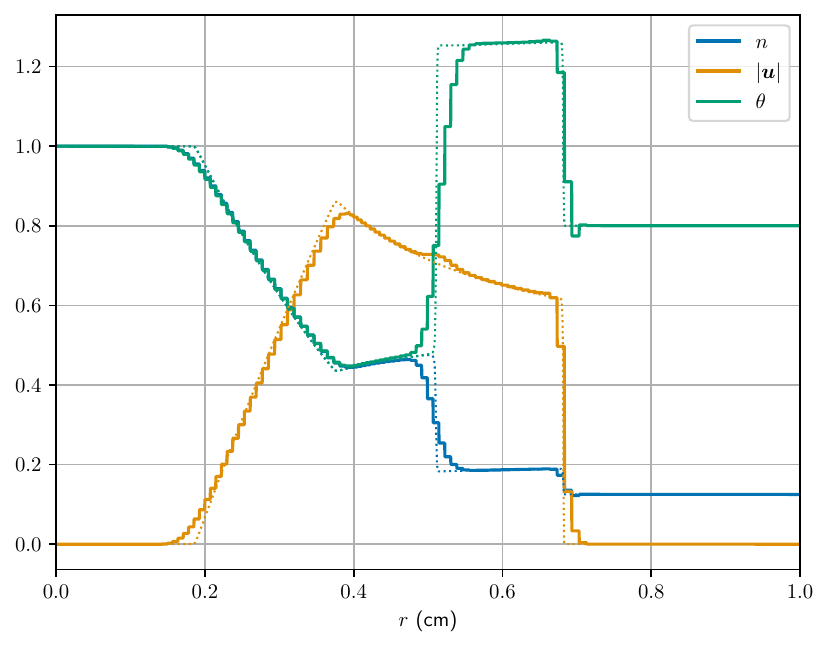}
        \caption{DIRK-2, $\nu = 1000$}
    \end{subfigure}\\
    \begin{subfigure}{0.45\textwidth}
        \centering
        \includegraphics[width=\textwidth]{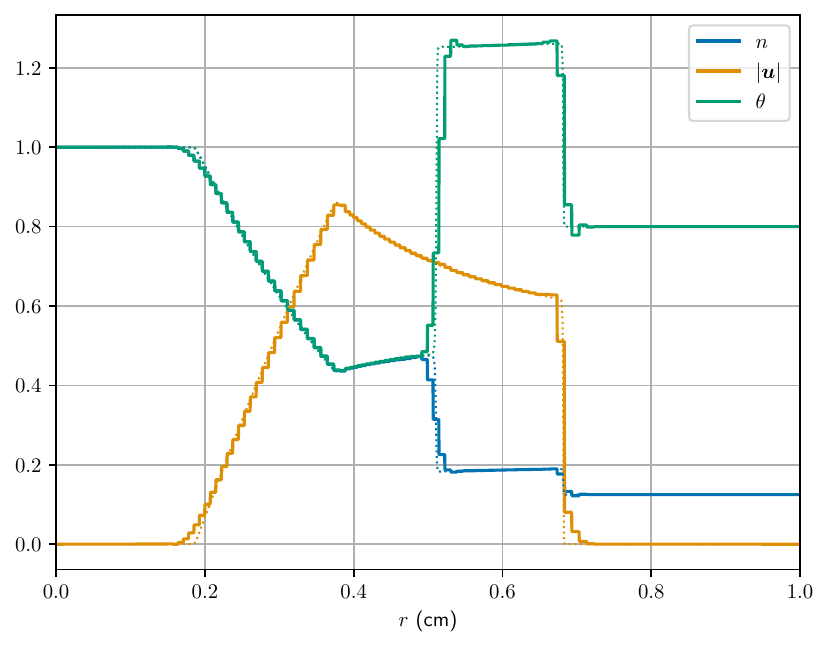}
        \caption{DIRK-1, $\nu = \num{10000}$}
    \end{subfigure}
    \begin{subfigure}{0.45\textwidth}
        \centering
        \includegraphics[width=\textwidth]{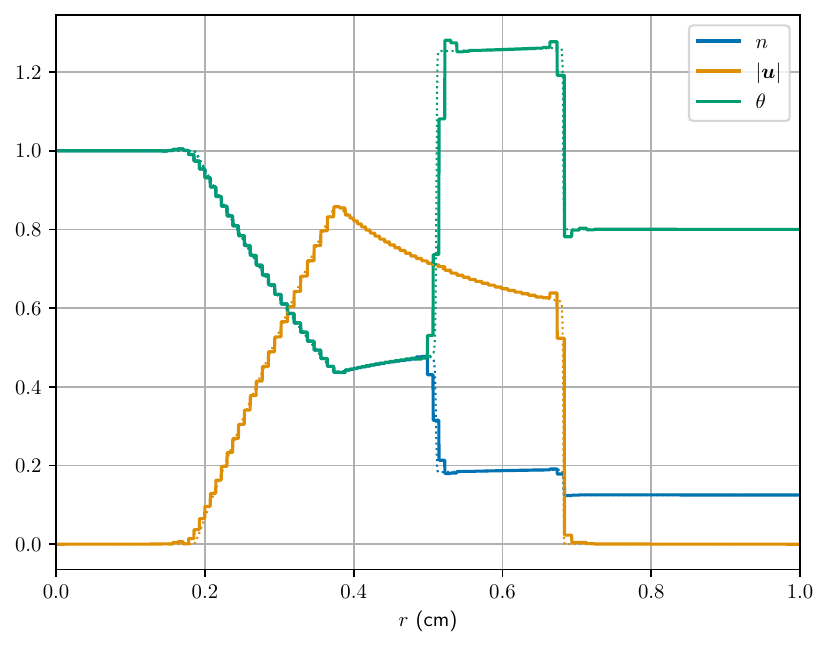}
        \caption{DIRK-2, $\nu = \num{10000}$}
    \end{subfigure}\\
    \caption{2D Sod Problem results with $\nu = \{1000,\num{10000}\}$.
        Reference solutions for the Euler equations are shown using dotted lines.}
    \label{fig:2d-quad-results}
\end{figure}
\begin{figure}
    \centering
    \includegraphics[width=0.45\textwidth]{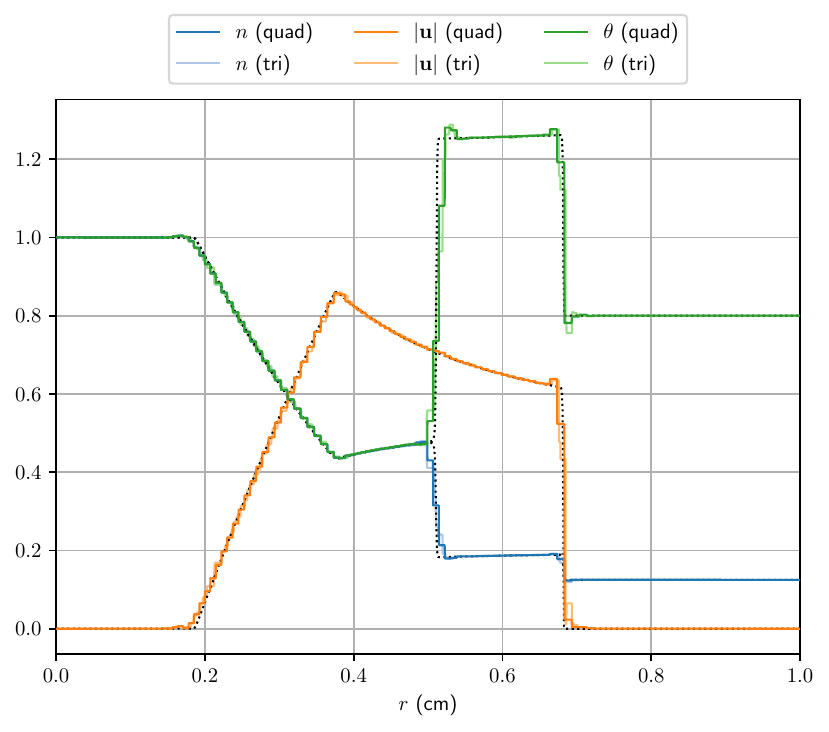}
    \caption{Comparison of 2D Sod problem solutions on quadrilateral and
        triangular grids with $\nu = \num{10000}$.}
    \label{fig:2d-quad-tri-comparison}
\end{figure}

As observed in the 1D Sod problem, solutions in the rarefied gas regime require
significantly more velocity quadrature points to accurately integrate the
kinetic distribution. \Cref{fig:2d-sod-velocity-resolution} shows the solution
to the 2D Sod shock problem using DIRK-2 with $\nu=1$. The dispersive, smooth
profile illustrated in \cref{fig:2d-sod-lambda1} is not attained until
$n_\bmv\ge 32$ as shown in the $L^2$ error norms in
\cref{fig:2d-sod-lambda1-l2norm}. Additionally, ray-effects, which are a
well-known pathology for velocity-collocation methods of the multi-dimensional
transport operator in \eqref{eq:bgk}
\cite{morel_analysis_2003,camminady_ray_2019}, are observed in the solution as
shown in \cref{fig:2d-sod-lambda1-nv4,fig:2d-sod-lambda1-nv64}. For $n_\bmv=8$,
the solution is dominated by rays along directions in the velocity quadrature.
As the velocity quadrature is refined to $n_\bmv=128$, the solution is smooth
and the ray effects are not observable.
\begin{figure}
    \centering
    \begin{subfigure}{0.45\textwidth}
        \centering
        \includegraphics[width=\textwidth]{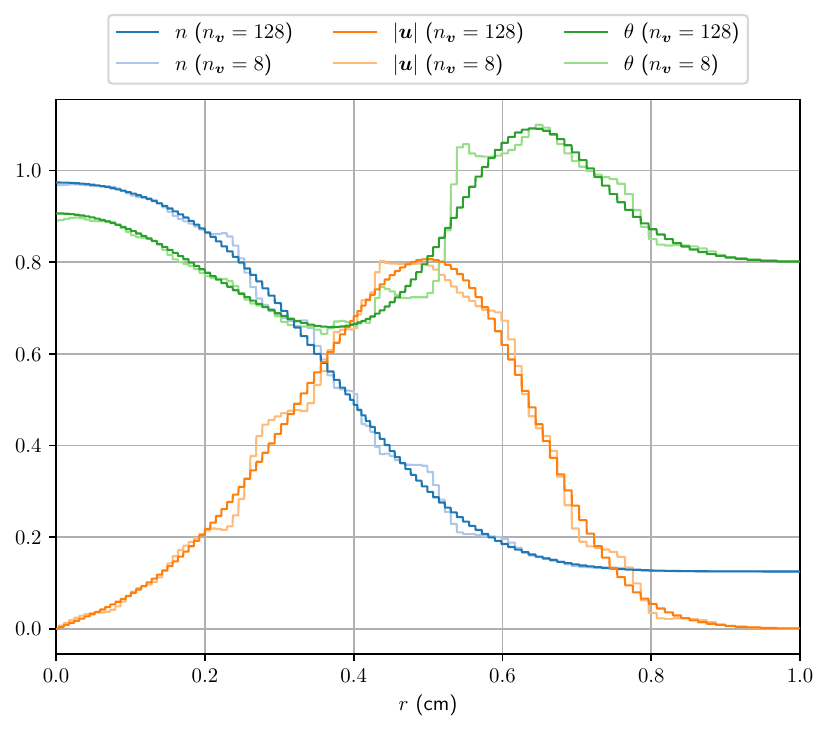}
        \caption{$n_\bmv\in\{8,128\}$}
        \label{fig:2d-sod-lambda1}
    \end{subfigure}
    \begin{subfigure}{0.45\textwidth}
        \centering
        \includegraphics[width=\textwidth]{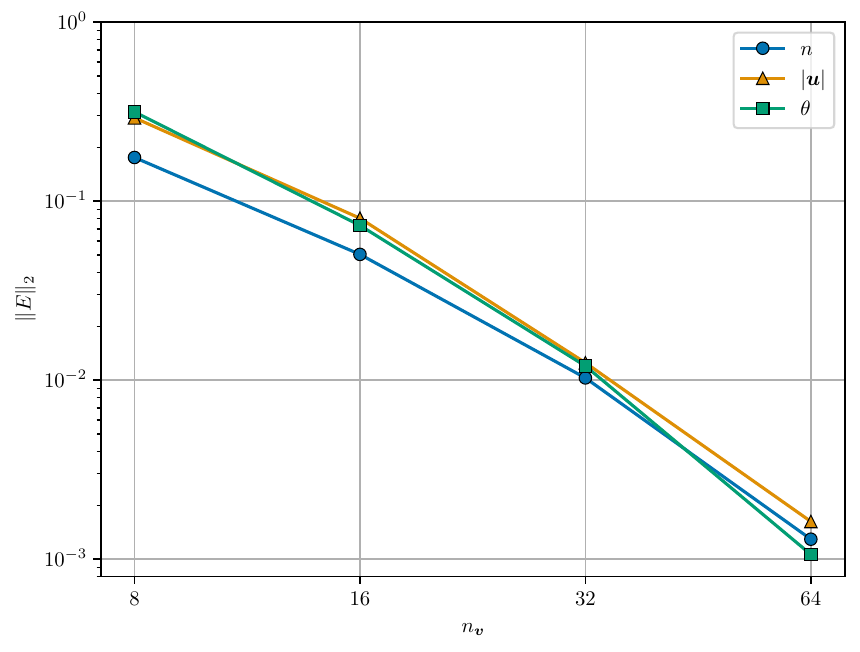}
        \caption{$L^2$ error norms vs $n_\bmv$}
        \label{fig:2d-sod-lambda1-l2norm}
    \end{subfigure}
    \begin{subfigure}{0.45\textwidth}
        \centering
        \includegraphics[width=\textwidth,clip]{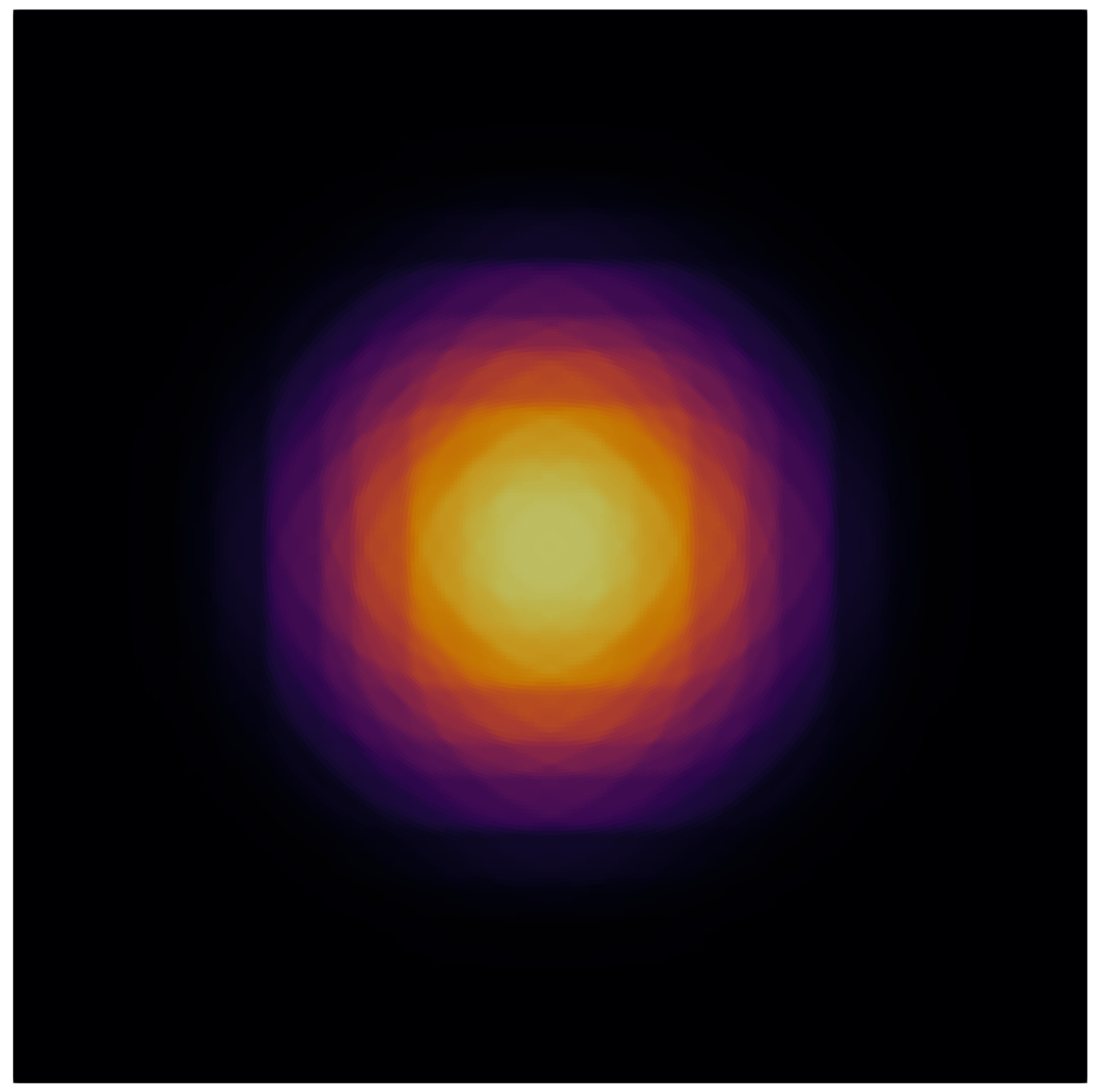}
        \caption{$n$ with $n_\bmv = 8$}
        \label{fig:2d-sod-lambda1-nv4}
    \end{subfigure}
    \begin{subfigure}{0.45\textwidth}
        \centering
        \includegraphics[width=\textwidth,clip]{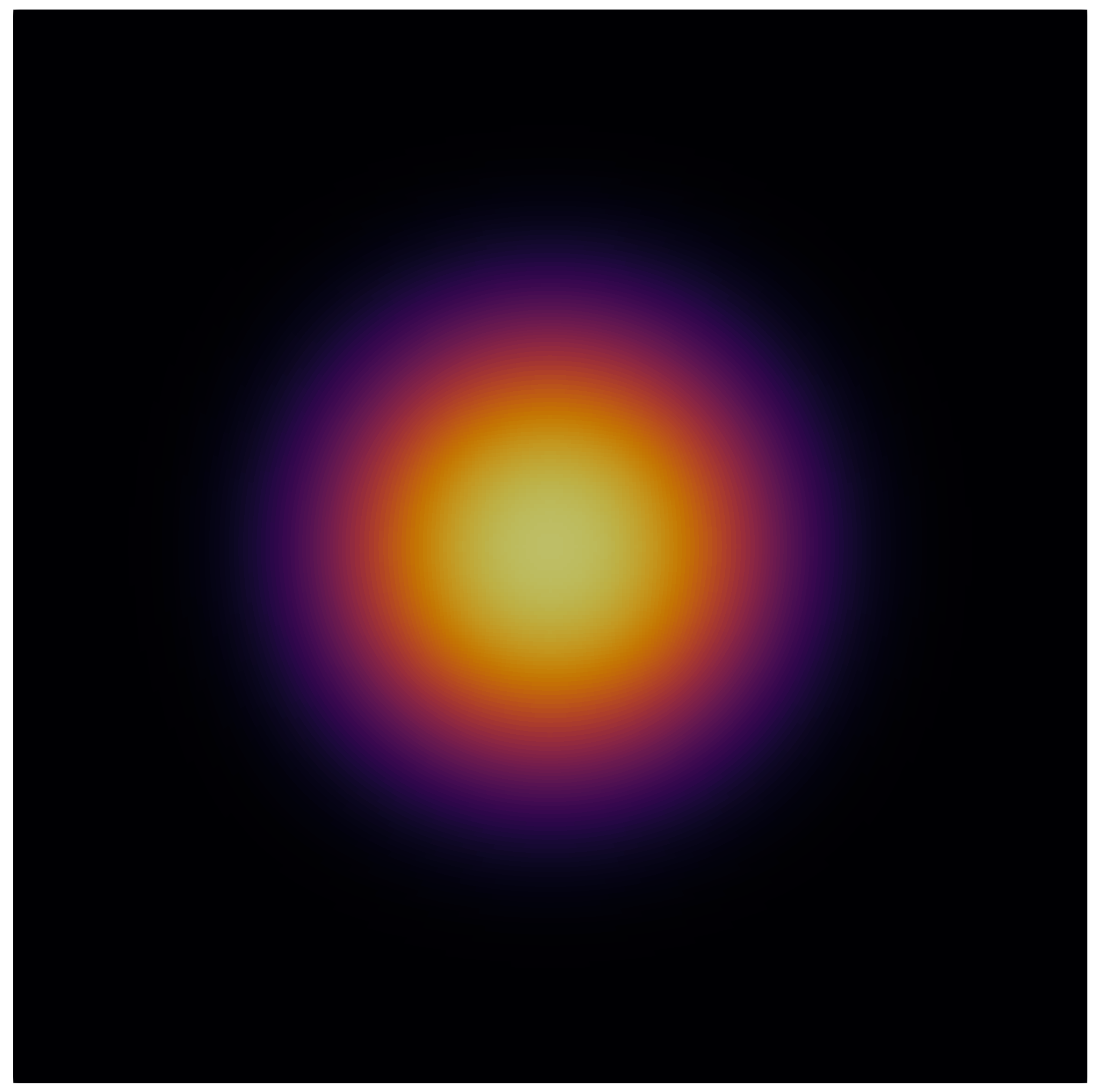}
        \caption{$n$ with $n_\bmv = 128$}
        \label{fig:2d-sod-lambda1-nv64}
    \end{subfigure}
    \caption{2D SOD explosion problem using DIRK-2 with $\nu = 1$.}
    \label{fig:2d-sod-velocity-resolution}
\end{figure}

The 3D Sod problem is solved on a tetrahedral grid containing \num{2806301}
elements (\cref{fig:3d-tet-grid}) and $n_\bmv=16$ (\num{110592} points) velocity
quadrature. The large size of the state vector, \SI{9}{\tebi\byte}, required the
use of \num{512} nodes of Frontier using a total of \num{4096} MI250X \acp{gcd}.
The solution at $t=0.15$ using DIRK-2 with $\nu=1000$ and $\dt=1.3\times
10^{-3}$ ($6\times$ the explicit timestep) is shown in \cref{fig:3d-tet-sod}.
\begin{figure}
    \centering
    \begin{subfigure}{0.55\textwidth}
        \centering
        \includegraphics[width=\textwidth]{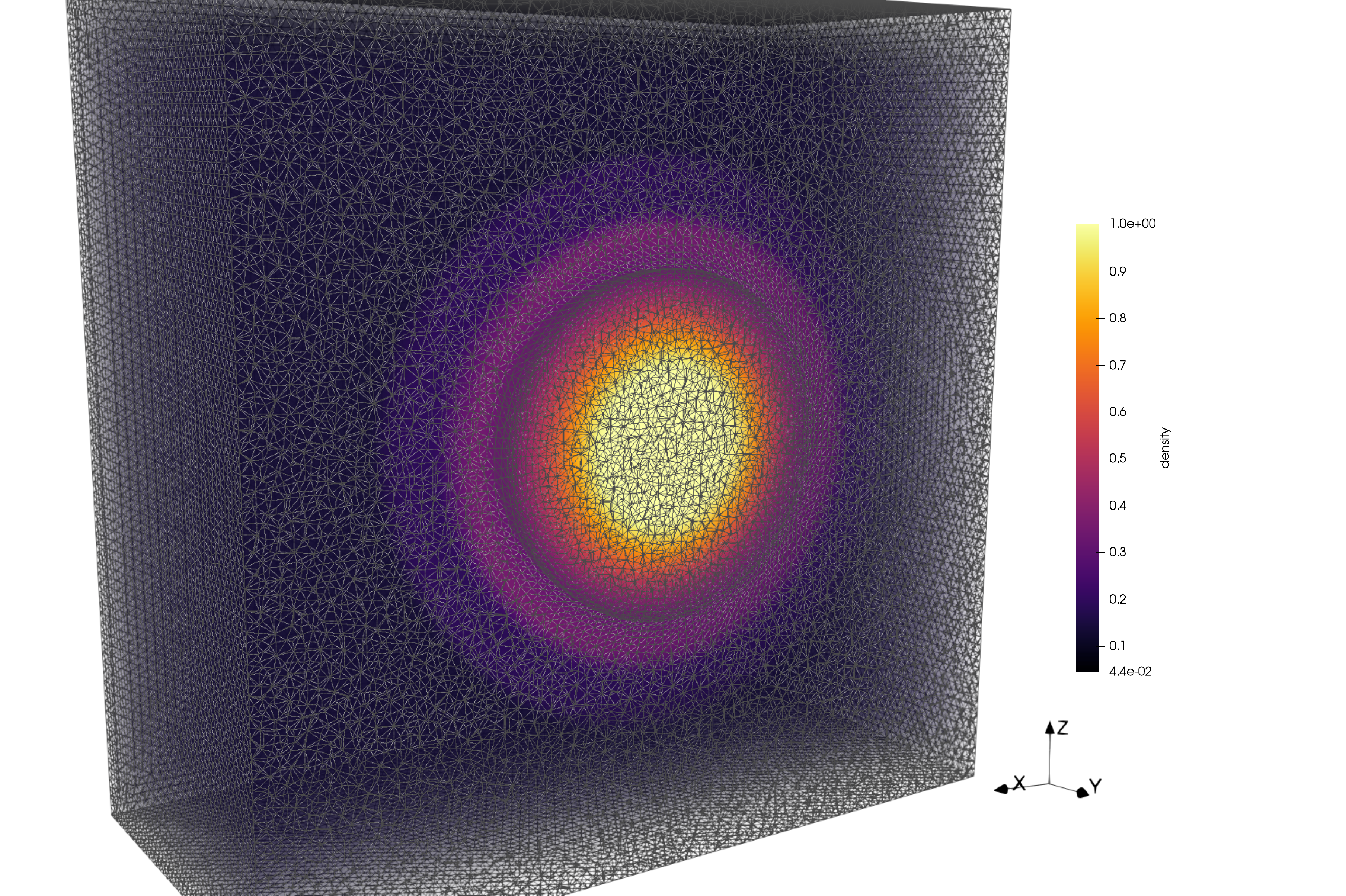}
        \caption{}
        \label{fig:3d-tet-grid}
    \end{subfigure}
    \begin{subfigure}{0.44\textwidth}
        \centering
        \includegraphics[width=\textwidth]{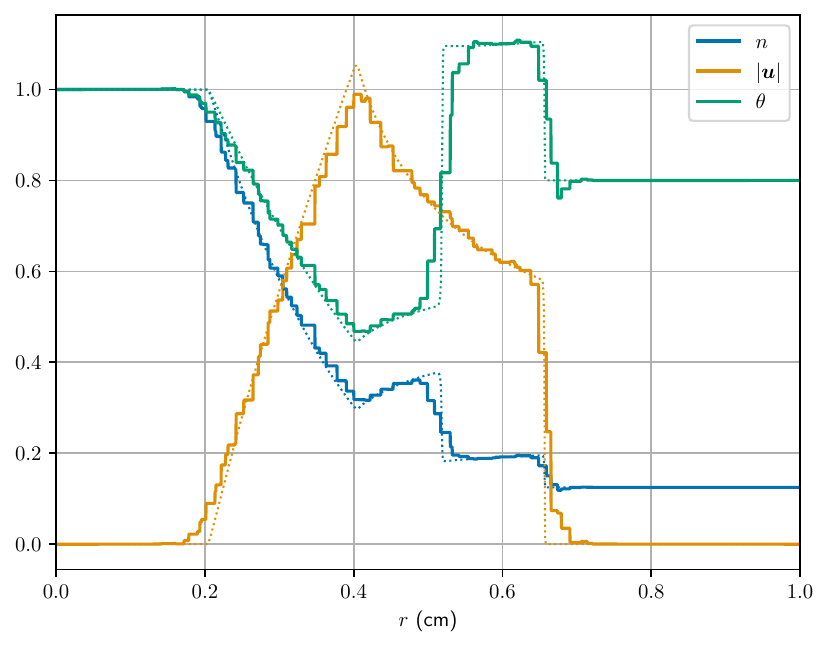}
        \caption{}
        \label{fig:3d-tet-sod}
    \end{subfigure}
    \caption{3D Sod problem on tetrahedral grid.}
\end{figure}

We next use the 2D sod problem to verify demonstrate numerically the linear
stability properties established in \cref{prop:linear-stability}. We repeat the
test on the quadrilateral grid in \cref{fig:2d-grids-quad} using the usual
Maxwellian, the linear Maxwellian from \eqref{eq:linear-Maxwellian}, and a
purely streaming case where $\nu = 0$ and the form of the Maxwellian is
irrelevant. For time integration, we consider several \ac{dirk} methods,
including the algebraically stable (or B-stable) DIRK method provided in
\Cref{sec:time-integrators}. Such methods are known to preserve monotonic
properties of underlying ODEs; see \cite[Theorem 3.8]{rk_dissipation} for
additional details.

Due to the boundary condition, the terms in \eqref{eq:L2-boundary} balance
exactly. Thus we expect to see the norm of the solution decrease monotonically
with time for B-stable time integrators. Of the L-stable integrators in
\cref{eq:L-RK}, only backward Euler is B-stable. For comparison, we also plot
the \ac{bgk} solution with $\nu = 1000$. For each model, we run with each of the
time integrators in \eqref{eq:L-RK} and \eqref{eq:B-RK}.

We plot the difference of the $L^2(X\times V)$ norm of the solution between any
two adjacent timesteps as a function of time. See \cref{fig:2d-l2-stability}. In
each plot, the higher order methods are nearly coincident.  As expected, for the
simplified Maxwellian and the pure streaming problem, all time integrators
dissipate the solution. Additionally, as may be expected, backward Euler is
significantly more dissipative than the higher order time integrators. However,
on the nonlinear \ac{bgk} problem, none of the time integrators give a solution
that dissipates monotonically.
\begin{figure}
    \centering
    \begin{subfigure}{0.32\textwidth}
        \centering
        \includegraphics[width=\textwidth]{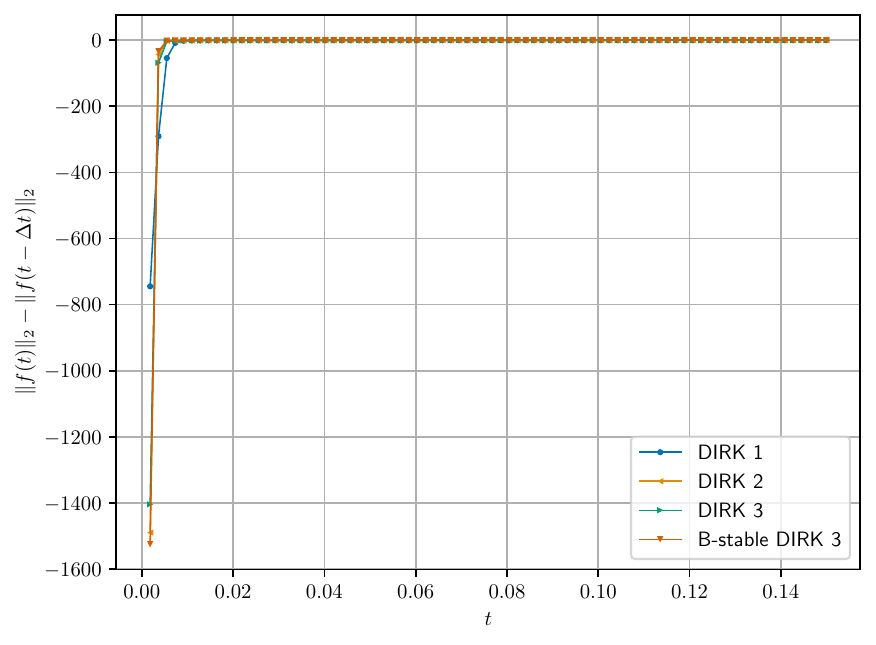}
        \caption{Simplified \acs{bgk} ($\nu = 1000$)}
    \end{subfigure}
    \begin{subfigure}{0.32\textwidth}
        \centering
        \includegraphics[width=\textwidth]{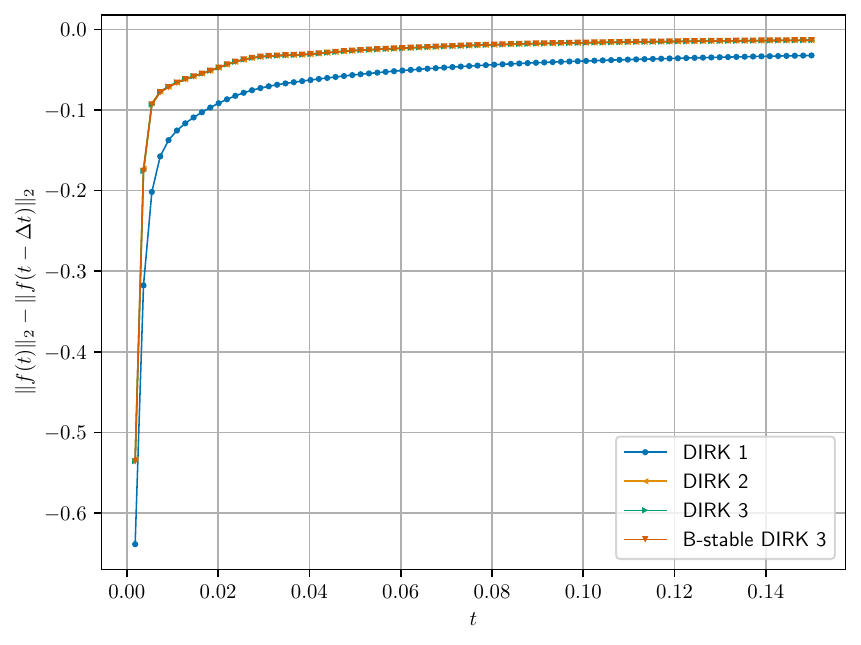}
        \caption{Streaming \acs{bgk} ($\nu = 0$)}
    \end{subfigure}
    \begin{subfigure}{0.32\textwidth}
        \centering
        \includegraphics[width=\textwidth]{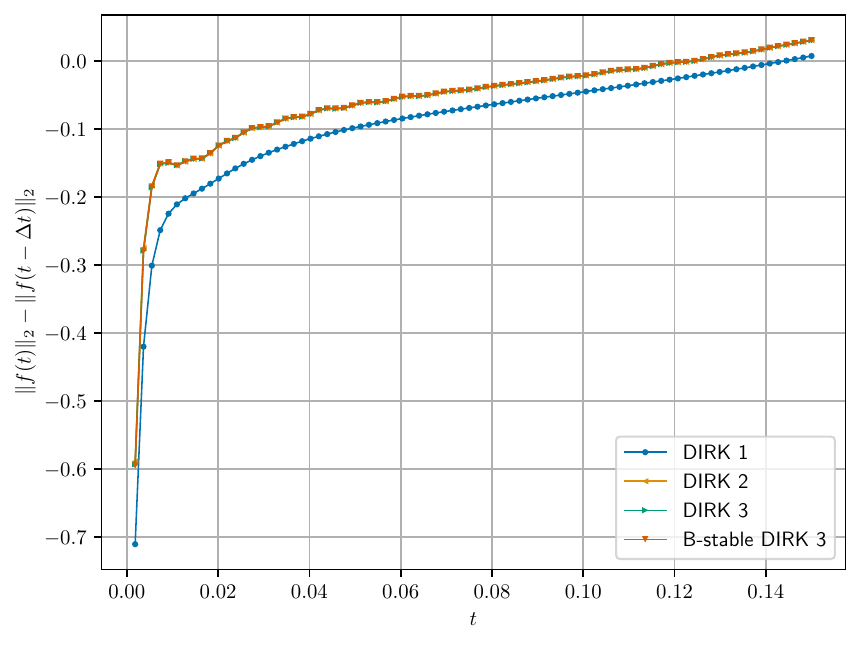}
        \caption{Non-linear \acs{bgk} ($\nu = 1000$)}
    \end{subfigure}
    \caption{Change in the $L^2(X\times V)$ norm of the sweep solution as a function of time.
        The DIRK-2/3 solutions are nearly superimposed.}
    \label{fig:2d-l2-stability}
\end{figure}

%%---------------------------------------------------------------------------%%
\subsection{Flow through rectangular duct}
\label{sec:duct}

This test problem has been designed to showcase the benefit of an implicit
solver and the ability to scale to large problem sizes. It uses a spatial mesh
with length scales spanning about two orders of magnitude, resolving fine
spatial features near boundaries while using a timestep appropriate for tracking
waves in the interior, where the mesh is significantly larger. In practice, a
rectangular domain could be discretized in space with a structured grid.
However, we use unstructured grids to demonstrate that the solver does not
require any special structure in the spatial mesh.

We consider a three-dimensional duct with dimensions $L_1 = 0.3$, $L_2=0.1$, and
$L_3=0.1$:
\begin{equation}
    X = \{\bm{x} = (x_1, x_2, x_3) :
    0 \le x_1 \le L_1, 0 \le x_2 \le L_2, 0 \le x_3 \le L_3\}.
\end{equation}
The initial condition in the duct is a Maxwellian:
\begin{equation}
    \fic(\bm{x},\bm{v})
    = M_{\nic, \uic, \tic} (\bm{v}),
\end{equation}
with
\begin{equation}
    \nic = 1,
    \qquad
    \uic = (0,0,0)^\top,
    \qquad\text{and}\qquad
    \tic = 1.
\end{equation}
To prescribe boundary data, let $\partial D =  \partial D^{\rm{left}} \cup
    \partial D^{\rm{right}} \cup \partial D^{\rm{bottom}} \cup \partial
    D^{\rm{top}}$, where
\begin{subequations}
    \begin{alignat}{2}
        \partial D^{\rm{left}} &= \{ (\bm{x},\bm{v}) \in \partial D : x_1 =0\},
        &\qquad
        \partial D ^{\rm{right}} &= \{ (\bm{x},\bm{v}) \in \partial D : x_1 = L_1\},
        \\
        \partial D ^{\rm{bottom}} &= \{ (\bm{x},\bm{v}) \in \partial D :x_2 = 0\},
        &\qquad
        \partial D ^{\rm{top}} &= \{ (\bm{x},\bm{v}) \in \partial D : x_2 = L_2\},
        \\
        \partial D^{\rm{back}} &= \{ (\bm{x},\bm{v}) \in \partial D : x_3 = 0\},
        &\qquad\text{and}\qquad
        \partial D ^{\rm{front}} &= \{ (\bm{x},\bm{v}) \in \partial D : x_3 = L_3\}.
    \end{alignat}
\end{subequations}
The boundary conditions are Maxwellian inflow on the left and zero inflow on the
right
\begin{equation}
    \fbc(\bm{x},\bm{v},t) = \begin{cases}
        M_{\nl, \ul, \tl} (\bm{v}) ,
         & (\bm{x},\bm{v}) \in  \partial D^{\rm{left}},
        \quad
        \\
        0,
         &
        (\bm{x},\bm{v}) \in  \partial D^{\rm{right}},
    \end{cases}
\end{equation}
where
\begin{equation}
    \nl = 1,
    \qquad
    \ul = (2,0,0)^\top,
    \qquad\text{and}\qquad
    \tl = 1.
\end{equation}
We impose bounce-back boundary conditions on the top and bottom walls of the
duct and reflection boundary conditions on the front and back walls.

We simulate the flow up to $t=0.1$, using backward Euler (DIRK-1) time
discretization with timestep of $0.001$. We consider two problem regimes: a
transition regime with $\nu = 1$ and a collision regime with $\nu = 1000$. The
velocity domain is $[-12,12]^3$, with $24$ and $16$ velocity cells per
dimension for $\nu = 1$ and $\nu = 1000$, respectively. Each each velocity cell
supports a local $\bbQ^2$ quadratic basis, resulting in \num{373248} and
\num{110592} velocity degrees of freedom, respectively.

We consider three meshes; see \Cref{tab:mesh_data}. Mesh 1 (shown in
\Cref{fig:3d-channel-mesh}) is a relatively coarse, nonuniform mesh that used in
both moderately ($\nu = 1$) and highly collisional ($\nu=1000$) settings. The
smallest cells in the mesh are clustered near the bounce-back boundaries, which
are expected to generate more refined flow structures.  The mesh contains
\num{684207} tetrahedral cells, decomposed into \num{4096} subdomains that own
approximately \num{160} cells each. For the moderately collisional problem, we
also use Mesh 2, a finer mesh with \num{1861080} elements across \num{12288}
domains, keeping the ratio of cells to subdomains roughly constant. For the high
collisional problem, we also use Mesh 3, a mesh that is refined near the inlet.
The result is a mesh with \num{1632897} cells, also split across \num{12288}
domains. The relatively small number of physical cells per subdomain is
primarily due to storage requirements. Using a linear basis in space for the
$\nu = 1$ velocity grid, each tetrahedral cell requires the storage of
\num{1492992} double precision floats requiring about $\SI{11.4}{MiB}$.

For the \ac{dg} solver, which does not assemble a global matrix, two cells need
to be connected only if they share a face. Unfortunately, domain decomposition
algorithms designed for continuous finite element methods, as provided through
Trilinos' mesh-data structures and partitioning algorithms, consider two cells
connected if they share any nodes. The issue is especially pronounced in 3D, in
which case it is likely that many tetrahedra share any given node. Aside from
storage concerns, small physical domains are likely to increase the number of
nonlinear iterations required to converge subdomain boundaries. Addressing the
storage limitation is an important topic for future study; possible solutions
include a tailored spatial mesh decomposition and decomposition in velocity
space. Despite these concerns, we are able to successfully run a problem in
excess of \num{2.77} trillion phase space degrees of freedom, solving for
\num{2.78e14} total unknowns over all timesteps. This particular simulation was
completed in 4 hours and 42 minutes on \num{1536} nodes on Frontier using
\num{12288} MI250X \acp{gcd}.

The timestep is selected to roughly correspond to the minimum time it
takes a particle to traverse the largest spatial cells for all three meshes:
\begin{equation}
    \dt = 0.001 \approx \frac{h_{\bmx}^{\max}}{\max_{\mbj} |\vj|},
\end{equation}
where $h_{\bmx}^{\max}$ is the largest edge length in the mesh.
Using \eqref{eq:explicit-dt}, this results in a timestep that is approximately $291\times$,
$568\times$, and $521\times$ the explicit timestep on Mesh 1,
Mesh 2, and Mesh 3, respectively.
\begin{table}
    \centering
    \caption{Mesh data for 3D duct flow.}
    \begin{tabular}{cllllll}
        \toprule
        number & description        & $h_{\bmx}^{\min} $ & $h_{\bmx}^{\max} $ & $N_\bmx$       & $\Delta t_e$     & $\Delta t / \Delta t_e$
        \\ \midrule
        1      & coarse mesh        & $\num{6.43e-4}$    & $\SI{2.19e-2}{}$   & $\SI{6.8e5}{}$ & $\SI{3.43e-6}{}$ & 291
        \\
        2      & fine, $\nu = 1$    & $\SI{3.30e-4}{}$   & $\SI{1.71e-2}{}$   & $\SI{1.8e6}{}$ & $\SI{1.76e-6}{}$ & 568
        \\
        3      & fine, $\nu = 1000$ & $\SI{3.60e-4}{}$   & $\SI{2.03e-2}{}$   & $\SI{1.6e6}{}$ & $\SI{1.92e-6}{}$ & 521
        \\ \bottomrule
    \end{tabular}
    \label{tab:mesh_data}
\end{table}
\begin{figure}
    \centering
    \includegraphics[width=0.6\textwidth]{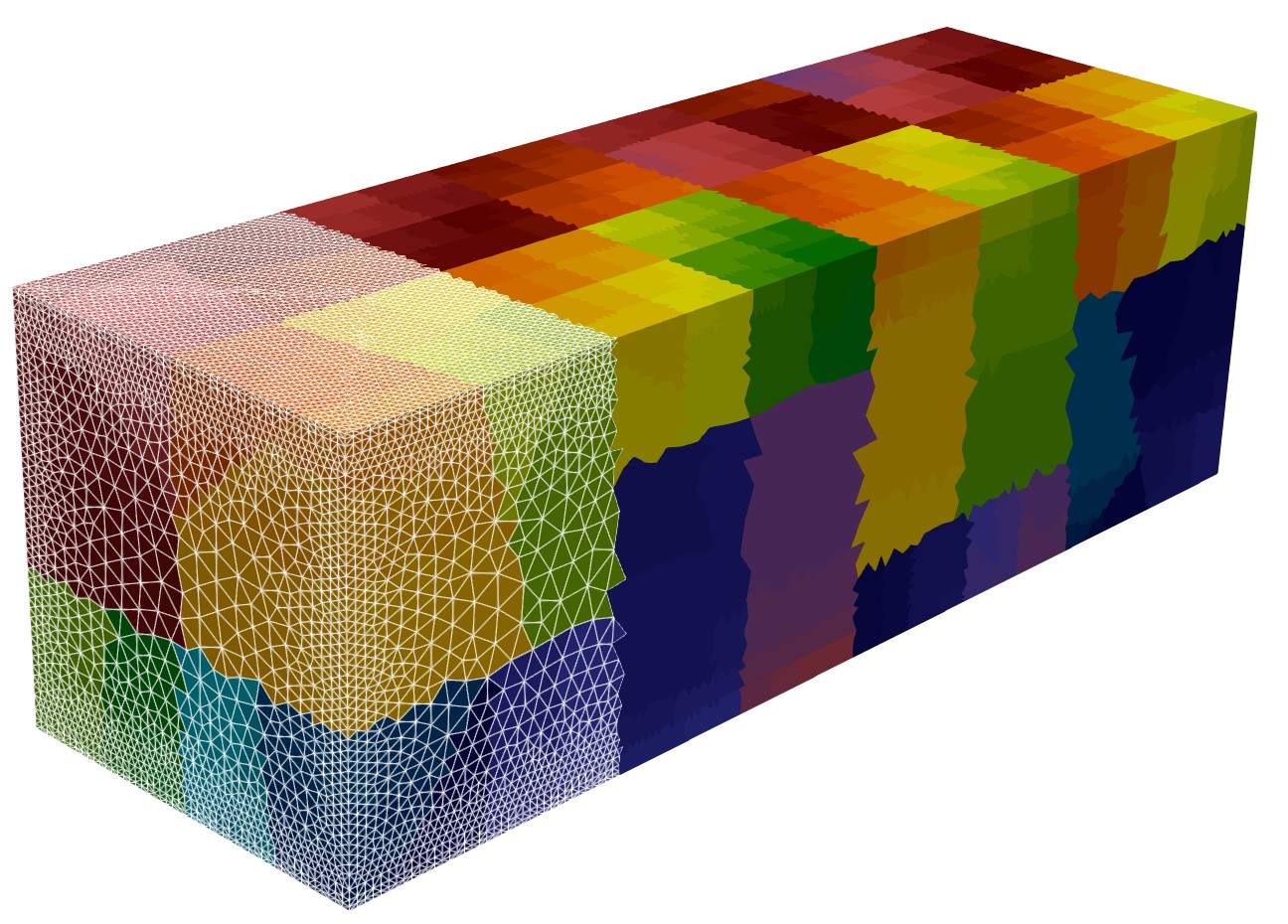}
    \caption{3D boundary layer mesh and decomposition showing \num{12288}
        subdomains by index.}
    \label{fig:3d-channel-mesh}
\end{figure}

In \Cref{fig:lamdba-1000-channel-u} a snapshot of the bulk velocity magnitude
plotted at the spanwise midplane with $\nu=1000$ on Mesh $1$ and Mesh $3$ at
$t=0.07$ is shown.
\begin{figure}
    \centering
    \begin{subfigure}{0.45\textwidth}
        \centering
        \includegraphics[width=\textwidth]{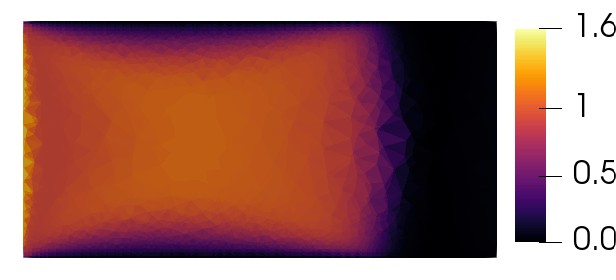}
        \caption{Mesh 1}
    \end{subfigure}
    \begin{subfigure}{0.45\textwidth}
        \centering
        \includegraphics[width=\textwidth]{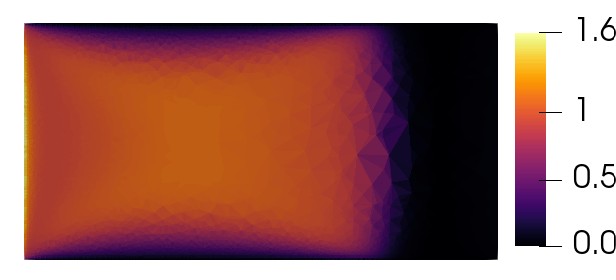}
        \caption{Mesh 3}
    \end{subfigure}
    \caption{Bulk velocity magnitude at $z=0.05$ and $t=0.07$ with $\nu = 1000$.}
    \label{fig:lamdba-1000-channel-u}
\end{figure}
Near $\partial D^{\rm left}$, $\partial D^{\rm top}$, and $\partial D^{\rm
bottom}$, the solver resolves a strong boundary layer. The resolution at
$\partial D^{\rm left}$ improves markedly upon near-inlet
refinement that occurs between Mesh $1$ and Mesh $3$.

\Cref{fig:lamdba-1-channel-u} shows a snapshot of the bulk velocity magnitude at
the spanwise midplane with $\nu=1$ and $t=0.1$ on Mesh $1$ and Mesh $2$.
\begin{figure}
    \centering
    \begin{subfigure}{0.45\textwidth}
        \centering
        \includegraphics[width=\textwidth]{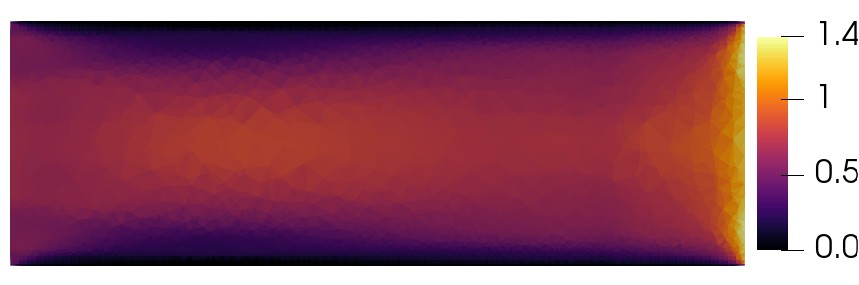}
        \caption{Mesh 1}
    \end{subfigure}
    \raisebox{0.05cm}{
        \begin{subfigure}{0.45\textwidth}
            \centering
            \includegraphics[width=\textwidth]{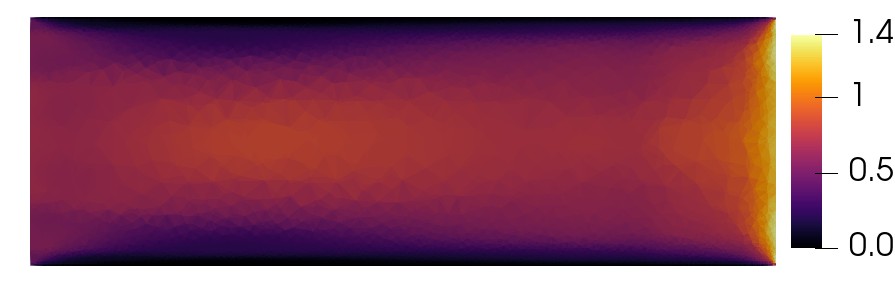}
            \caption{Mesh 2}
        \end{subfigure}}
    \caption{Bulk velocity magnitude at $z=0.05$ and $t=0.07$ with $\nu = 1$.}
    \label{fig:lamdba-1-channel-u}
\end{figure}
As in the $\nu=1000$ case, \Cref{fig:lamdba-1000-channel-u}, this solution also
has a strong boundary layer near $\partial D^{\rm top}$ and $\partial D^{\rm
bottom}$. Overall, in the moderately collisional limit, the shape of the
solution is well captured by both the coarse (Mesh 1) and fine (Mesh 2)
resolutions.

While in the moderately collisional case there does not appear to be an
under-resolved layer near $\partial D^{\rm left}$, we do see some faint structures
that could be ray effects; see \cite{camminady_ray_2019}. There is also a
wedge-shaped artifact near $\partial D^{\rm right}$. Investigation with high
resolution on an analogous 2D problem indicates that this is primarily caused by
the choice of pure outflow on $\partial D^{\rm right}$.

\Cref{fig:channel-gradP} shows the pressure gradient magnitude, $\left\lvert
\nabla_\bmx (n \theta)\right\rvert$, for the moderately and highly collisional
problems on Mesh $2$ and Mesh $3$, respectively. This quantity is not direct
simulation output, but is approximated in post-processing with ParaView; see
\cite{paraview}. While the pressure gradient when $\nu=1000$ primarily shows a
traveling pressure wave, there is more structure near the top and bottom walls
when $\nu=1$, especially near the left inlet.
\begin{figure}
    \centering
    \begin{subfigure}{0.45\textwidth}
        \centering
        \includegraphics[width=\textwidth]{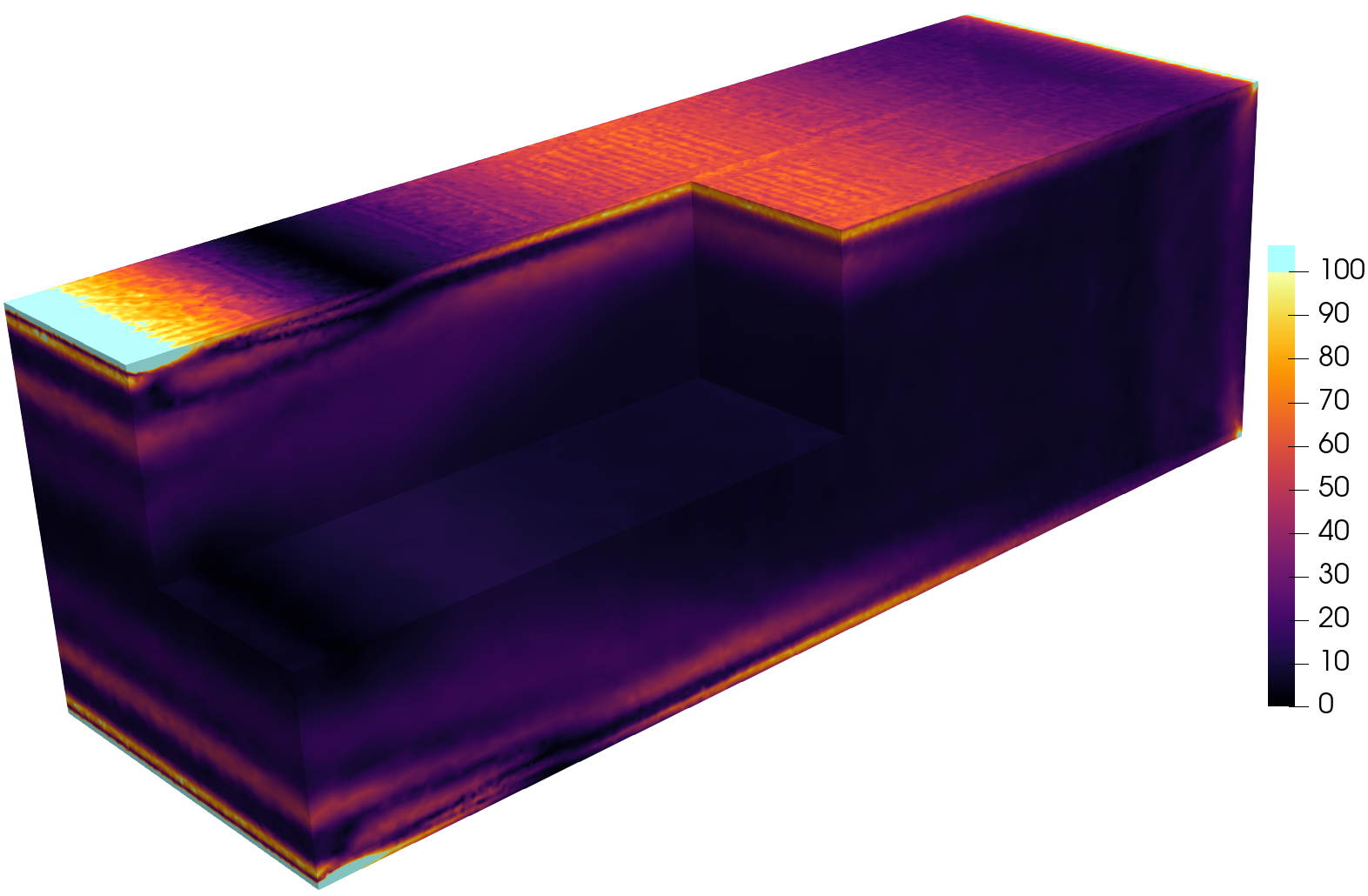}
        \caption{$\nu = 1$}
    \end{subfigure}
    \raisebox{0.05cm}{
        \begin{subfigure}{0.45\textwidth}
            \centering
            \includegraphics[width=\textwidth]{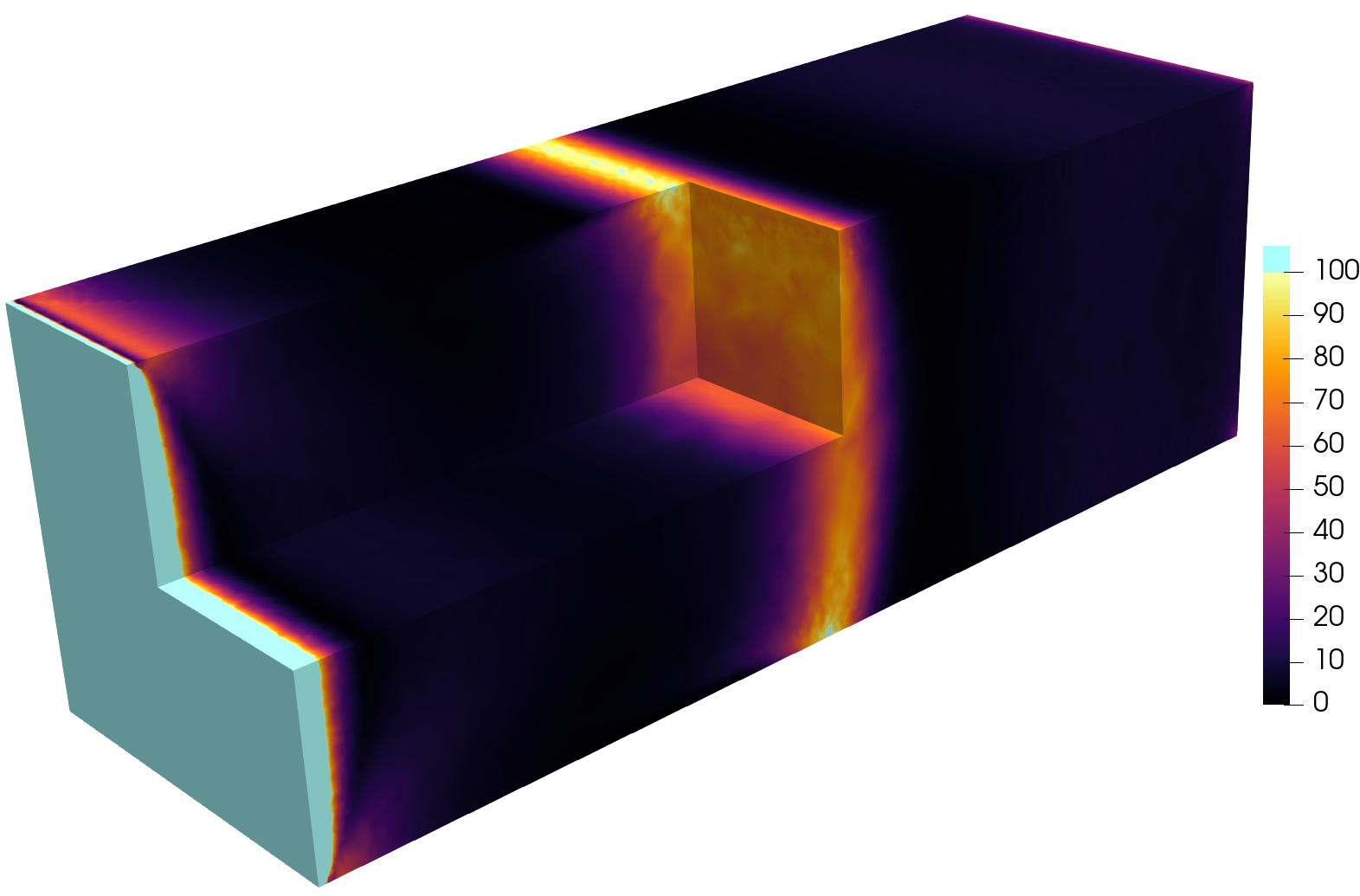}
            \caption{$\nu = 1000$}
        \end{subfigure}} \caption{Gradient of pressure magnitude at $t=0.07$ for
    Mesh 2 (a) and Mesh 3 (b), clipped at \num{100} and with one octant removed
    to show detail. Clipped values are shown in light blue. The true maximum for
    both plots is near \num{550}.}
    \label{fig:channel-gradP}
\end{figure}

%%- - - - - - - - - - - - - - - - - - - - - - - - - - - - - - - - - - - - - -%%
\subsection{Solver performance}
\label{sec:solver-performance}

The single-domain, device performance of the solver is tested on the 2D Sod
problem with a quadrilateral grid containing \num{6144} elements and is
illustrated in \Cref{fig:2d-single-domain-performance}. The $\dofx$ for
quadrilaterals with $\bbM^{k_\bmx}=\bbQ^1$ is equivalent to $\bbP^1$ on
tetrahedrons, so the performance characteristics as a function of velocity
quadrature points will be identical. Each non-linear Picard iteration consists
of a local sweep on domain $\cT_{\bmx,i}$, communication of domain boundary
value intensities, update of problem boundary $f^{\mathrm{bc}}$, integration of
the moments given by \eqref{eq:moments}, and update of the non-linear Maxwellian
source. On a single domain, convergence is dictated by the non-linear source
and problem boundary conditions.

\Cref{fig:gpu-solver-time-velocity} shows the solver performance as a function
of velocity quadrature points for GPU and multi-core CPU architectures. On GPUs
the solve time for the \ac{tgs} is universally faster than the \ac{ts}; however
the difference between the two reduces as the number of quadrature points
increases. On the \nvidia H100, this speedup ranges from \num{1.4} at
\num{36864} quadrature points ($n_\bmv=64$) to \num{10.4} at \num{2304}
quadrature points ($n_\bmv=16$). This occurs because the total concurrent work
in the \ac{ts} is limited by the number of quadrature points, whereas the
\ac{tgs} is limited by the product of quadrature points and elements within a
single generation of the graph.  As the number of quadrature points increases,
the work load on the GPU becomes more balanced, and the performance difference
between the two algorithms decreases. We note that as the aspect ratios of the
domains become large, the maximum size of $W_g$ decreases, and the performance
of \ac{ts} has been observed to surpass \ac{tgs}.  For example, on the 1D Sod
problem (\cref{sec:1d-sod}) the \ac{ts} is greater than 3 times faster than the
\ac{tgs} with \num{9216} velocity quadrature points.
% (1.579623/0.517028) - see the verify.ipynb notebook in bgk
%%
\begin{figure}
    \centering
    \begin{subfigure}{0.46\textwidth}
        \centering
        \includegraphics[width=\textwidth]{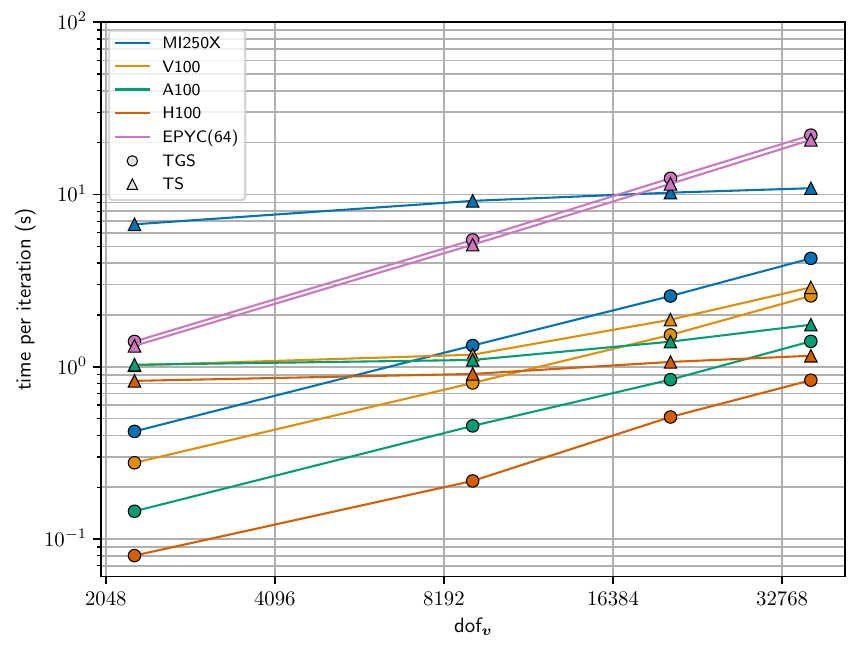}
        \caption{Time per iteration}
        \label{fig:gpu-solver-time-velocity}
    \end{subfigure}
    \begin{subfigure}{0.45\textwidth}
        \centering
        \includegraphics[width=\textwidth]{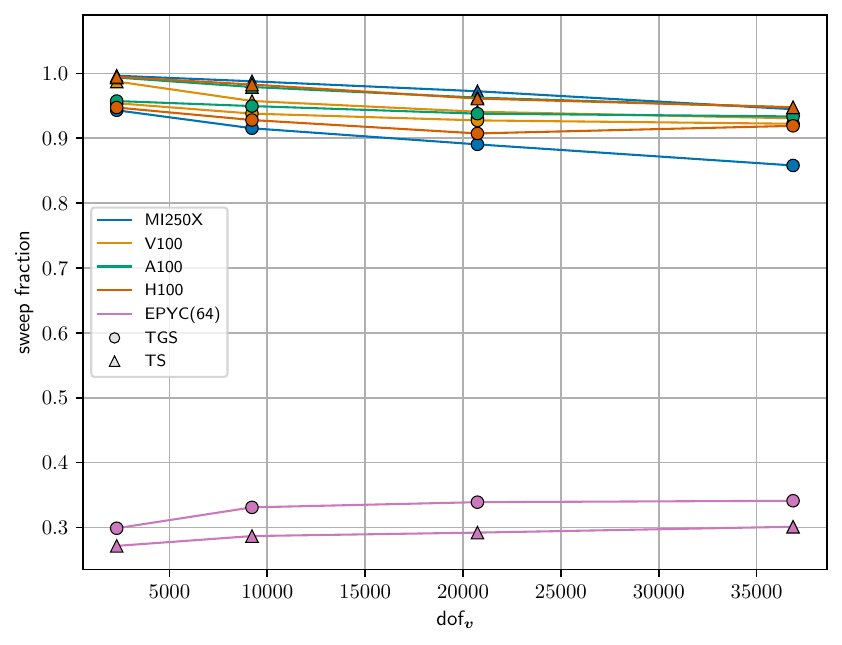}
        \caption{Sweep fraction}
        \label{fig:sweep-fraction-velocity}
    \end{subfigure}
    \caption{Single domain performance as function of number of quadrature
        points on different architectures.}
    \label{fig:2d-single-domain-performance}
\end{figure}

The multi-threaded CPU performance on a 64-core EPYC processor, using all 64
threads with OpenMP, is shown to be significantly slower than the GPU
performance. Additionally, the CPU scales nearly proportionally with the number
of velocity points, such that $t\propto\dofv^{\,0.99}$. In contrast, the GPU is
less sensitive to the number of velocity points. On the H100 the \ac{tgs} is
characterized by $t\propto\dofv^{\,0.85}$ while the \ac{ts} is highly
insensitive, $t\propto\dofv^{\,0.12}$. At low numbers of quadrature points,
$\text{W}^\text{TGS} > \text{W}^\text{TS}$ because $\text{W}^\text{TS}$ is
limited by the velocity quadrature size, whereas $\text{W}^\text{TGS}$ can
exploit the product of the mesh and velocity within
$\operatorname{Gen}_\mathbf{j}(g)$. Ultimately, the amount of concurrent work on
the GPU is saturated at high numbers of quadrature points in the \ac{ts}, and
its performance will equal the \ac{tgs}.  This occurs because the moment
integrations on the GPU are highly efficient due to nearly perfectly coalesced
memory access patterns, and the time spent performing the moment integrations is
nearly independent of the number of velocity points.
\Cref{fig:sweep-fraction-velocity} shows this effect where the percent of time
spent in the sweep on the EPYC ranges from \SIrange{27}{34}{\percent}. The time
spent doing sweeps on the GPU is never less than \SI{85}{\percent} and is
generally greater than \SI{90}{\percent} on \nvidia GPUs. This confirms that the
moment integrations consume a much larger percentage of runtime on multi-core
CPUs, and this fraction of time increases proportionally with the number of
velocity quadrature points. On the GPU both the sweeps and moment integrations
increase more slowly with velocity points than the CPU solver.

\Cref{fig:max-speedup} shows the maximum speedup attained on the H100 relative
to the best performance attained on other architectures. The overall solve time
on the H100 is up to \num{1.7} times faster than an A100 and \num{25} times
faster than the best performance on a 64-core EPYC CPU.  Thus, a single H100 is
roughly equivalent to \numrange{17}{25} 64-core EPYC CPUs.  The \ac{bgk} solver
is shown to be both platform-portable across all current architectures, tunable
to different mesh domains, and highly efficient on modern GPUs.
\begin{figure}
    \centering
    \includegraphics[width=0.45\textwidth]{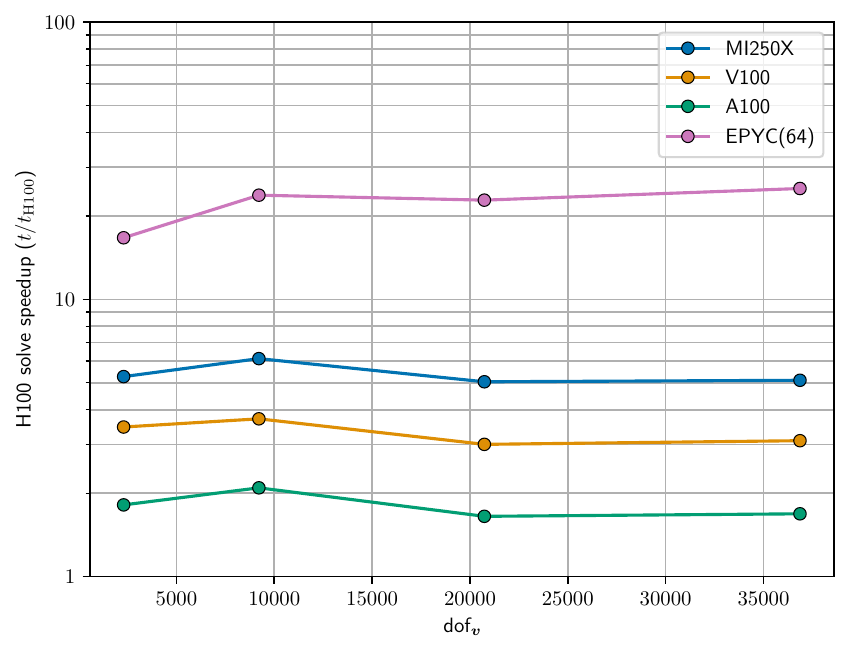}
    \caption{Single domain H100 speedup relative to the best sweep performance
        on other architectures.}
    \label{fig:max-speedup}
\end{figure}

Domain decomposition parallel efficiency is tested on Frontier using the
triangular grid shown in \cref{fig:2d-grids-tri}. The strong scaling
performance of the solver is shown in \cref{fig:strong-scaling} for $\nu\in\{1,
    10, 1000\}$ using DIRK-1 time integration with $n_\bmv=16$ and \ac{tgs}. The
spatial decomposition is generated using ParMetis with default weighting, so no
optimization based on particle mean free path is considered
\cite{gerhard_parallel_2024}. Overall strong scaling is efficient for all fluid
regimes below 16 domains (\num{6180} elements per domain). In the streaming
limit, the convergence of the non-linear Picard iterations is more strongly
influenced by domain boundary intensities, and additional iterations are
required to converge these terms as shown in \cref{fig:dd-iterations}.
Additionally, the fraction of time spent doing sweeps,
shown in \cref{fig:sweep-fraction}, decreases significantly as the number of domains
increases, which is an affect of both increased communication time and less
efficient GPU utilization on smaller domains.

In the fluid limit, the non-linear source dominates convergence because very
little flux crosses domain boundaries. The local sweep time as a function of
domain size is shown in \cref{fig:2d-sweep-cell}. Here we note that the GPU
scales nearly proportionally with mesh size, $t\propto \dofx^{\,0.99}$ on the
GPU. This is in contrast to the scaling with velocity quadrature illustrated in
\cref{fig:2d-single-domain-performance}.
\begin{figure}
    \centering
    \begin{subfigure}{0.45\textwidth}
        \centering
        \includegraphics[width=\textwidth]{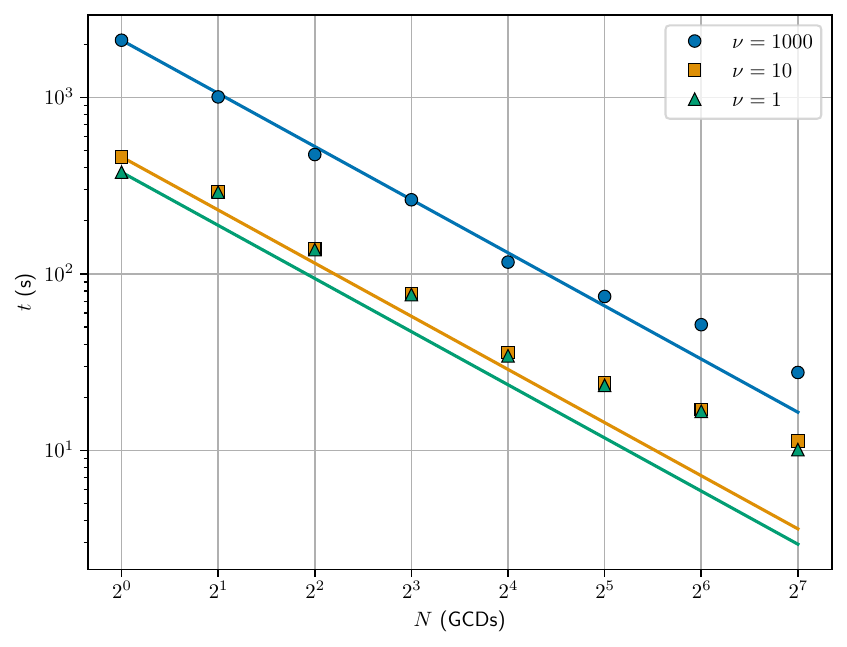}
        \caption{}
    \end{subfigure}
    \begin{subfigure}{0.45\textwidth}
        \centering
        \includegraphics[width=\textwidth]{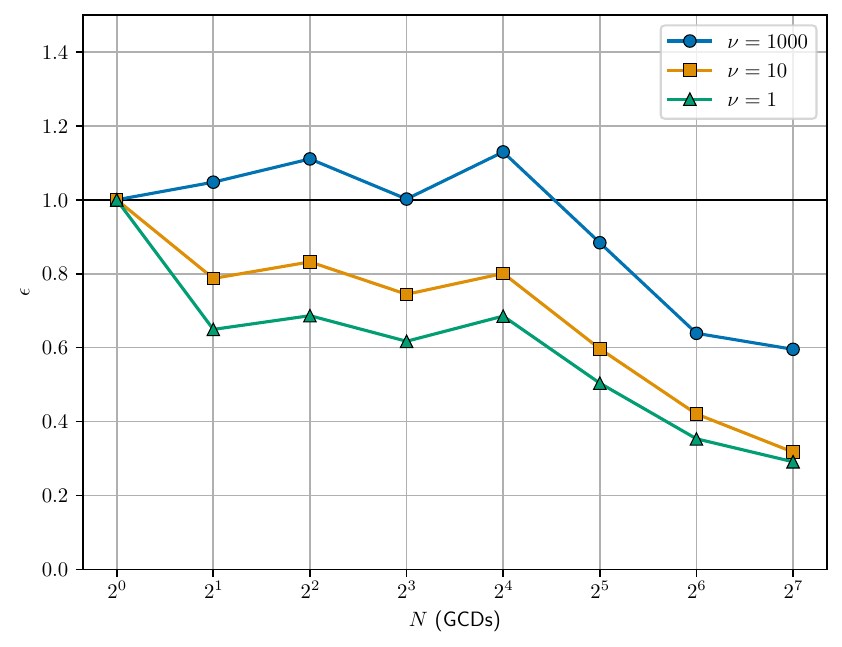}
        \caption{}
        \label{fig:strong-scaling-efficiency}
    \end{subfigure}
    \caption{Strong scaling on Frontier.}
    \label{fig:strong-scaling}
\end{figure}
\begin{figure}
    \centering
    \begin{subfigure}{0.32\textwidth}
        \includegraphics[width=\textwidth]{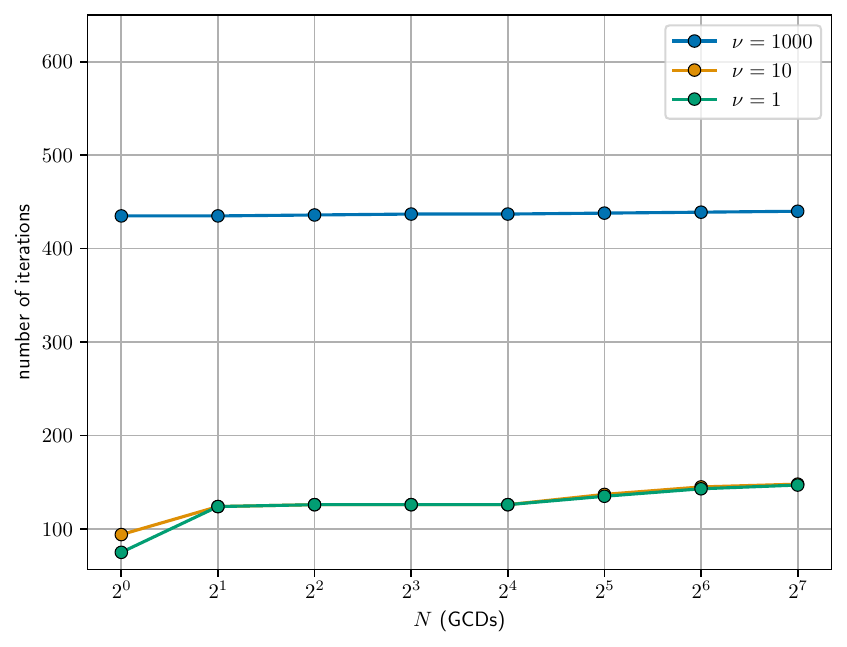}
        \caption{Picard iterations.}
        \label{fig:dd-iterations}
    \end{subfigure}
    \begin{subfigure}{0.32\textwidth}
        \includegraphics[width=\textwidth]{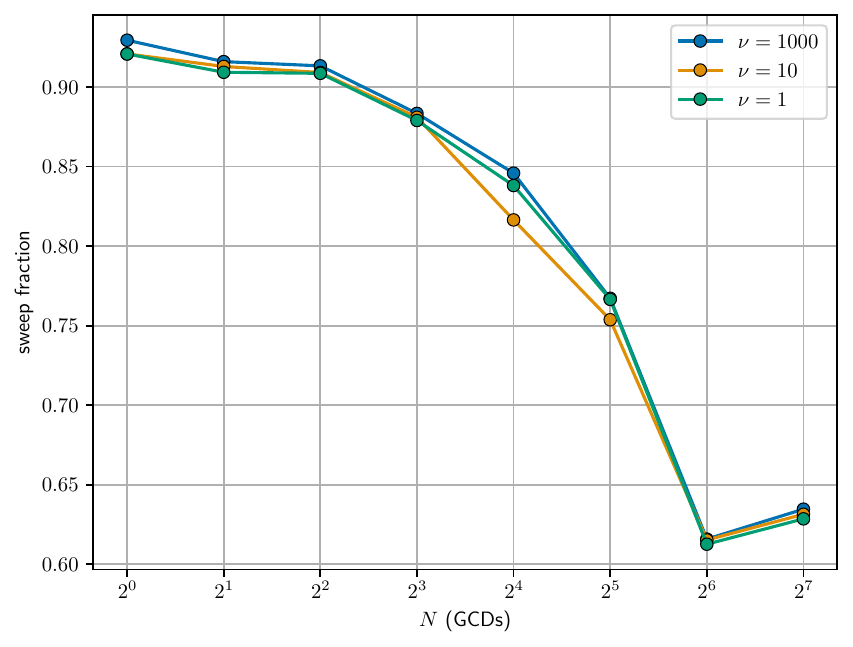}
        \caption{Sweep fraction.}
        \label{fig:sweep-fraction}
    \end{subfigure}
    \begin{subfigure}{0.32\textwidth}
        \includegraphics[width=\textwidth]{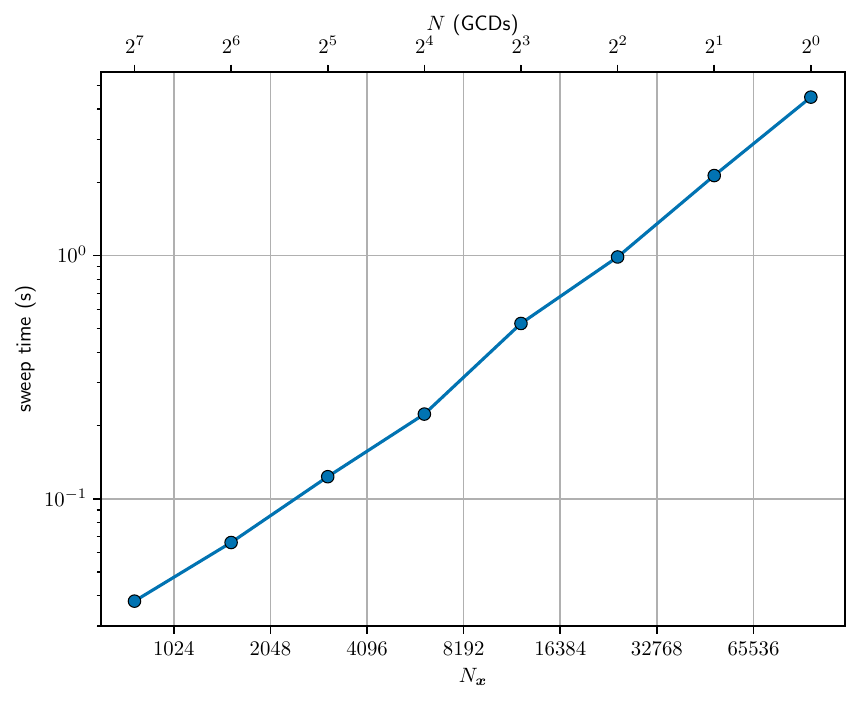}
        \caption{Sweep time.}
        \label{fig:2d-sweep-cell}
    \end{subfigure}
    \caption{Frontier performance as a function of number of domains (domain size).}
    \label{fig:frontier-performance}
\end{figure}

%%---------------------------------------------------------------------------%%

%%---------------------------------------------------------------------------%%
\section{Conclusion}
\label{sec:conclusion}

In this paper, we have presented a nodal \ac{dg} discretization for solving the time
dependent \ac{bgk} equation on unstructured spatial meshes. The method is linearly
stable and, with an appropriate discretization of the Maxwellian, conserves
mass, momentum and energy.

We have integrated the resulting time-dependent ODEs implicitly in order to
avoid restrictive time steps induced by boundary layers and other
geometry-induced features. We have solved the resulting algebraic equations
using a well-known Picard iteration algorithm known as source iteration. Here
the main contribution is a scalable implementation of the workhorse sweeping
algorithm that is commonly used in the radiation transport community. We
consider two types of graph-based sweeping strategies and demonstrate the
implementation is portable and performs well on both CPUs and GPUs.

The method has been tested on several variations of the Sod shock tube problem,
to demonstrate correctness and accuracy of the discretization. The benefits of
the projection-based treatment of the Maxwellian and the linear stability of
the discretization have also been demonstrated. Finally, we have analyzed a
three-dimensional duct flow problem, using various meshes with cells that vary
by orders of magnitude, especially in more collisional regimes where boundary
layers develop. In such cases the implicit time integration and subsequent
solver enable efficient solutions with time steps determined by interior
transient flows.

The current algorithm and code base can benefit from a number of improvements.
To handle memory in large problems, methods for phase space compression and/or
velocity space decomposition will be needed. Modifications to the underlying
software, particularly on mesh data structures, to take advantage of the \ac{dg}
discretization can help improve parallel performance. We also expect to expand
the capability of the code to handle more complex and evolving boundaries. The
source iteration algorithm requires an acceleration strategy; some
possibilities are discussed in the introduction. Finally, we intend to extend
the physical fidelity of the underlying solver by (i) allowing models that
capture the correct Prandtl number in the fluid limit, (ii) extending the
capability to polyatomic atoms, and (iii) extending the capability to
multispecies settings.

%%---------------------------------------------------------------------------%%

\section*{Acknowledgements}

This work was supported under DOE NNSA NA-22 programs. This research used
resources of the Oak Ridge Leadership Computing Facility at the Oak Ridge
National Laboratory, which is supported by the Office of Science of the U.S.
Department of Energy under Contract No. DE-AC05-00OR22725. The authors thank
Dr.~Eirik Endeve of the Computer Science and Mathematics Division at Oak Ridge
National Laboratory for providing reference solutions to the cylindrical and
spherical Sod shock problems. The \nvidia results in this work were performed on
the narsil, dirac, and destiny computers at \ac{ornl}.

\section*{Declaration of AI-assisted technologies in the manuscript preparation process}

During the preparation of this work the authors used Github Copilot to assist
with the development of Python/matplotlib code that produced the plots that
appear in this paper. After using this service, the authors reviewed and edited
the content as needed and take full responsibility for the content of the
published article.

\bibliographystyle{elsarticle-num-names}
\bibliography{references}

\appendix
%%---------------------------------------------------------------------------%%
\section{DIRK Schemes}

An $S$-stage \ac{dirk} method is characterized by vectors $\mbb,\mbc \in
    \bbR^{S}$ and a lower triangular matrix $\mbA \in \bbR^{S \times S}$. We
consider the following L-stable \ac{dirk} schemes \cite{kennedy2016diagonally}.
\begin{equation}
    \label{eq:L-RK}
    \begin{array}{c|c}
        \mathbf{c} & \mathbf{A}     \\\hline
                   & \ \mathbf{b}^T
    \end{array} =
    \begin{cases}
        \begin{array}{c|c}
            1 & 1 \\\hline
              & 1
        \end{array}                                                                                    & O(\dt), \\
                                                                                                              &    \\
        \begin{array}{c|c}
            \begin{matrix}
                1-\frac{\sqrt{2}}{2} \\ 1
            \end{matrix} &
            \begin{matrix}
                1-\frac{\sqrt{2}}{2} & 0                    \\
                \frac{\sqrt{2}}{2}   & 1-\frac{\sqrt{2}}{2}
            \end{matrix}                               \\\hline
                                          & \begin{matrix}
                                                \frac{\sqrt{2}}{2} & 1-\frac{\sqrt{2}}{2}
                                            \end{matrix}
        \end{array}                    & O(\dt^2),                          \\
                                                                                                              &    \\
        \begin{array}{c|c}
            \begin{matrix}
                \num{0.435867} \\ \num{0.717933}\\ 1
            \end{matrix} &
            \begin{matrix}
                \num{0.435867} & 0               & 0              \\
                \num{0.282067} & \num{0.435867}  & 0              \\
                \num{1.208497} & \num{-0.644363} & \num{0.435867}
            \end{matrix}                                            \\\hline
                                                     & \begin{matrix}
                                                           \num{1.208497} & \num{-0.644363} & \num{0.435867}
                                                       \end{matrix}
        \end{array} & O(\dt^3),
    \end{cases}
\end{equation}
A \ac{rk} method is algebraically stable (or B-stable) if the matrices
\begin{align}
    B & = \mathop{\mathrm{diag}}(\mathbf{b})\qquad\text{and}     \\
    M & = BA + A^{T}B - \mathbf{b} \mathbf{b}^{T}
\end{align}
are positive semi-definite.
These properties are readily verified for the backward Euler method as well as the
$O(\Delta t^3)$ scheme, see \cite{kennedy2016diagonally}, given by

\begin{equation}
    \label{eq:B-RK}
    \begin{array}{c|c}
        \mathbf{c} & \mathbf{A}     \\\hline
                   & \ \mathbf{b}^T
    \end{array} =
    \begin{array}{c|c}
        \begin{matrix}
            \frac{3+\sqrt{3}}{6} \\ 1-\frac{3+\sqrt{3}}{6}
        \end{matrix} &
        \begin{matrix}
            \frac{3+\sqrt{3}}{6}   & 0                    \\
            1-\frac{3+\sqrt{3}}{3} & \frac{3+\sqrt{3}}{6}
        \end{matrix}                  \\\hline &
        \begin{matrix}
            0.5 & 0.5
        \end{matrix}
    \end{array} .
\end{equation}

%%---------------------------------------------------------------------------%%
\section{Velocity projection}
\label{app:velocity-projection}
Due to the finite velocity domain, we define a projection operator that exactly
integrates the moments for functions without compact support. Recall that $V =
        [-L,L]^d$ and
\begin{equation}
    \label{eq:discrete_space_v_repeat}
    \Wv = \left\{ w \in L^2(V): w\big|_{\Kv}\in
    \bbQ^{k_\bmv}(\Kv) \quad \forall \Kv \in \cT_{\bmv} \right\}.
\end{equation}
Let $L$ be the non-overlapping union of uniform cells $I_j$ for $j\in
    \{1, \dots, n_{\bmv}\}$ and define the space
\begin{equation}
    W_v = \left \{
    w \in L^2([-L,L]): w\big|_{I_j}\in \bbP^{k_\bmv}(I_j) \quad
    \forall j\in \{1, \dots, n_{\bmv}\} .
    \right\}
\end{equation}
Let $L_\mathrm{loc}^2(\bbR)$ be the space of locally square integrable functions
on $\bbR$.  Define the extension $E: W_v \to L_\mathrm{loc}^2(\bbR)$ by
extending the polynomial defined on $I_1$ for $v < -L$ and $I_{n_{\bmv}}$ for $v
    > L$.  Specifically, let $E_j: \bbP^{k_\bmv}(I_j) \to \bbP^{k_\bmv}(\bbR)$ be
the unique polynomial extension such that
\begin{equation}
    E_j w \big|_{I_j} = w \quad \forall w \in \bbP^{k_\bmv}(I_j).
\end{equation}
Then set
\begin{equation}
    [E w](v) =
    \begin{cases}
    [E_1 w](v)                 & , v < -L,           \\
                         [E_{n_{\bmv}} w](v) & , v > L ,           \\
                         w(v)                & , \text{otherwise}.
    \end{cases}
\end{equation}
Because of the tensor structure of $\Wv = W_{v_1} \otimes W_{v_2} \otimes
    W_{v_3}$, any function $\xi_{h} \in \Wv$ can be written as $\xi_{h}(\bmv) =
    \xi_{h,1}(v_1) \xi_{h,2}(v_2) \xi_{h,3}(v_3)$ where $\xi_{h,\alpha} \in
    W_{v_\alpha}$ and $\alpha \in \{1,2,3\}$. We define the extension $E_\bmv: \Wv
    \to L_\mathrm{loc}^2(\bbR^3)$ by $E_{\bmv} \xi_h(\bmv) = E \xi_{h,1}(v_1) E
    \xi_{h,2}(v_2) E \xi_{h,3}(v_3)$.  Then for any function $ \zeta \in
    L^2(\bbR^3)$ with compact support or sufficient decay at $\infty$ define
$\cP_{\bmv,h}\zeta \in W_{v}$ by
\begin{equation}\label{eqn:cons_projection_equation}
    \int_V \cP_{\bmv,h}\zeta(\bmv) \xi_h(\bmv) d \bmv =
    \int_{\bbR^3} \zeta(\bmv) E_{\bmv} \xi_h (\bmv) d \bmv \qquad\forall \xi_h \in W_\bmv.
\end{equation}
Thus for $n>0$, $\theta >0$, and $\bm{u}\in\mathbb{R}^d$
\begin{equation}\label{eqn:cons_projection_exactness}
    \int_{V} \ve{e}(\bmv)\cP_{\bmv,h}M_{n,\bm{u},\theta}(\bmv)~\dx{\bmv} =
    \ve{U}_{M_{n,\bm{u},\theta}}.
\end{equation}
The left-hand side of \eqref{eqn:cons_projection_exactness} is built exactly for
$P_{\bmv,h}M_{n,\bm{u},\theta}$ using error functions to ensure
\eqref{eqn:cons_projection_exactness} is satisfied independent of the quadrature
rule used for the other velocity integrals.

%%---------------------------------------------------------------------------%%

%%---------------------------------------------------------------------------%%
\end{document}